\documentclass[showpacs,onecolumn,aps,pra,longbibliography,superscriptaddress,notitlepage]{revtex4-2}
\usepackage{qcircuit}
\usepackage[dvips]{graphicx}
\usepackage{amsmath,amssymb,amsthm,mathrsfs,amsfonts,dsfont}
\usepackage{subfigure, epsfig}
\usepackage{braket}
\usepackage{svg}
\usepackage{bm}
\usepackage{enumerate}
\usepackage{algorithm}
\usepackage{algpseudocode}
\usepackage{diagbox}
\usepackage{physics}
\usepackage{color}
\usepackage{multirow}
\usepackage[marginal]{footmisc}
\usepackage{comment}
\usepackage{tikz}
\usepackage[]{qcircuit}
\usepackage{makecell}
\usetikzlibrary{arrows}
\usetikzlibrary{shapes,fadings,snakes}
\usetikzlibrary{decorations.pathmorphing,patterns}
\usetikzlibrary{calc}
\usetikzlibrary{positioning}
\usepackage[colorlinks = true]{hyperref}
\usepackage[nameinlink,capitalize]{cleveref}


\graphicspath{{./figure/}}

\renewcommand{\sec}[1]{\hyperref[sec:#1]{Section~\ref*{sec:#1}}}
\newcommand{\app}[1]{\hyperref[app:#1]{Appendix~\ref*{app:#1}}}
\newcommand{\thm}[1]{\hyperref[thm:#1]{Theorem~\ref*{thm:#1}}}
\newcommand{\lem}[1]{\hyperref[lem:#1]{Lemma~\ref*{lem:#1}}}
\newcommand{\cor}[1]{\hyperref[cor:#1]{Corollary~\ref*{cor:#1}}}
\newcommand{\fgr}[1]{\hyperref[fgr:#1]{Figure~\ref*{fgr:#1}}}
\newcommand{\tab}[1]{\hyperref[tab:#1]{Table~\ref*{tab:#1}}}

\newtheorem{assumption}{Assumption}
\newtheorem{theorem}{Theorem}

\newtheorem{lemma}{Lemma}
\newtheorem{corollary}{Corollary}
\newtheorem{proposition}{Proposition}

\newtheorem{definition}{Definition}

\newcommand{\mc}{\mathcal}

\newcommand{\comments}[1]{}

\usepackage{enumitem}

\begin{document}
\title{Infinite-dimensional Extension of the Linear Combination of Hamiltonian Simulation: Theorems and Applications}

\begin{abstract}
We generalize the Linear Combination of Hamiltonian Simulation (LCHS) formula  [An, Liu, Lin, Phys. Rev. Lett. 2023] to simulate time-evolution operators in infinite-dimensional spaces, including scenarios involving unbounded operators. This extension, named Inf-LCHS for short, bridges the gap between finite-dimensional quantum simulations and the
broader class of infinite-dimensional quantum dynamics governed by partial differential equations (PDEs). Furthermore, we propose two sampling methods by integrating the infinite-dimensional LCHS with Gaussian quadrature schemes (Inf-LCHS-Gaussian) or Monte Carlo integration schemes (Inf-LCHS-MC). We demonstrate the applicability of the Inf-LCHS theorem to a wide range of non-Hermitian dynamics, including linear parabolic PDEs, queueing models (birth-or-death processes), Schrödinger equations with complex potentials, Lindblad equations, and black hole thermal field equations. Our analysis provides insights into simulating general linear dynamics using a finite number of quantum dynamics and includes cost estimates for the corresponding quantum algorithms. 
\end{abstract}


\author{Rundi Lu}
\affiliation{Yau Mathematical Sciences Center and Department of Mathematical Sciences, Tsinghua University, Beijing 100084, China}

\author{Hao-En Li}
\affiliation{Department of Chemistry, Tsinghua University, Beijing 100084, China}

\author{Zhengwei Liu}
\email{liuzhengwei@tsinghua.edu.cn}
\affiliation{Yau Mathematical Sciences Center and Department of Mathematical Sciences, Tsinghua University, Beijing 100084, China}
\affiliation{Yanqi Lake Beijing Institute of Mathematical Sciences and Applications, Beijing 101408, China}

\author{Jin-Peng Liu}
\email{liujinpeng@tsinghua.edu.cn}
\affiliation{Yau Mathematical Sciences Center and Department of Mathematical Sciences, Tsinghua University, Beijing 100084, China}
\affiliation{Yanqi Lake Beijing Institute of Mathematical Sciences and Applications, Beijing 101408, China}

\maketitle

\tableofcontents

\newpage

\section{Introduction}

The time-evolution operators or propagators are ubiquitously used to describe the change of the states brought about by the passage of time, applicable to various systems across from mathematics, physics, to applied sciences. In a closed quantum system, the time-evolution operator is unitary on a Hilbert space, and it can be expressed as time-ordered exponentials of the integrated Hamiltonian. In an open quantum system or a classical system, the non-unitary propagator operates the phase state of a physical system. 
The simulation of time-evolution operators is closely related to the problem of solving partial differential equations (PDEs), such as the Schrödinger equation, the Lindblad equation, and the parabolic PDEs such as the heat equation, Fokker-Planck equation, advection-diffusion equation, etc.  
The efforts to understand and simplify the time-evolution operators are of great significance for solving the corresponding PDEs, either in an analytical way or using a numerical method.
The time-evolution operator is also a fundamental tool in quantum field theory, governing system dynamics in scattering, curved spacetime, and strong-field phenomena like Hawking radiation. 
Understanding and efficiently simulating time-evolution operators not only advances the study of fundamental physics but also enables practical applications in numerical analysis, quantum computing, statistical mechanics, and quantum field theory, making them a crucial subject of research across multiple disciplines.

Recent advancements in quantum computing have led to an increasing number of studies focused on simulating time-evolution processes on quantum computers. As the primary usage of quantum computers, Hamiltonian simulation algorithms have been extensively developed for simulating unitary quantum dynamics~\cite{Wie96,Zal98,KWB17,BAC07,PQS11,BCC13,BCC15,BCK15,LC16,LC17,LW18,CMN17,COS19,BCS20,CST21,SBW21,SHC21,FangLiuSarkar2024,AFL20,AFL22,ZZS21,CCH22,CLLL22}. Such simulation-based algorithms can be generalized to simulate generic non-unitary processes based on \emph{Schrödingerisation}~\cite{JLY24,JLY23,JLLY23,JLLY24,HJL23,HJZ23,HJLZ24} or \emph{Linear Combination of Hamiltonian Simulation} (LCHS) \cite{ALL23,ACL23,ACLY24} instead of complicated quantum-linear-system-based algorithms~\cite{HHL08,Ber14,BCOW17,CL19,CLO20}. The Schrödingerisation can be viewed as a ``continuing block-encoding'' of non-unitary operators. This approach offers a constructive example of Nagy's unitary dilation theorem via dilating operators of the form $V(t) = e^{-A t}$ into a unitary operator~\cite{HJL23}. 
As proposed in a similar time, the alternative LCHS approach can express the non-unitary dynamics as a linear combination of Hamiltonian simulation problems~\cite{ALL23,ACL23}. The proof of the LCHS formula does not rely on dilations or the spectral mapping theorem, which generalizes the range of matrix-function-based quantum algorithms like quantum signal processing (QSP)~\cite{LC16,LC17} and quantum singular value transformation (QSVT)~\cite{GSLW18}. Moreover, the LCHS formula does not rely on any commutators of the Hermitian part $L(t)$ and the anti-Hermitian part $H(t)$ of the time-dependent matrix $A(t) = L(t) + \mathrm{i}H(t)$. However, the theorem and proof of the LCHS equality focus on the finite-dimensional time-evolution operators from ODEs. For general time-evolution operators from PDEs, we need to first discretize the PDE in time before leveraging LCHS.  However, directly approximating PDEs from Schrödinger equations with complex and continuous spectra remains challenging. This is due to unbounded operators not easily accommodating norm-based analysis techniques and the need to remove dimensionality restrictions. Henceforth, the strict assumptions and application range of such an equality are still less understood. This motivates the mathematical extension of the LCHS formula to infinite-dimensional unbounded operators. 

In this work, we generalize the LCHS formula to simulate time-evolution operators in infinite-dimensional spaces, including scenarios involving unbounded operators, by leveraging the mathematical properties of semigroups of such operators. 
The extended formula expresses the non-unitary time evolution as a linear combination of unitary time-evolution operators. 
The \emph{infinite-dimensional extension of the linear combination of Hamiltonian simulation}, named Inf-LCHS for short, bridges the gap between finite-dimensional quantum simulations and the broader class of infinite-dimensional quantum dynamics governed by PDEs. This work aims to provide a step toward a more comprehensive framework for analyzing and approximating non-unitary dynamics in settings that inherently involve infinite-dimensional Hilbert spaces. Two conditions are provided to guarantee the validity of the Inf-LCHS theorem for unbounded operators: one ensures the existence of the BCH series for time-integrated operators of the Hamiltonian's real and imaginary parts, and the other ensures the boundedness of the time-integrated operator for the real part of the Hamiltonian. Under the assumption of a finite bound of the time-integrated operator for the real part of the evolution operator, the Inf-LCHS theorem avoids dependence on commutators between the Hermitian and anti-Hermitian parts of the operator. To decompose general time-evolution operators with a certain number and form of unitary operators, we propose two sampling methods by integrating Inf-LCHS with Gaussian quadrature schemes (Inf-LCHS-Gaussian) or Monte Carlo integration schemes (Inf-LCHS-MC). Inf-LCHS-Gaussian relies on the norm of the operator $L(t)$, while Inf-LCHS-MC can avoid such a dependence and hence be in favor of unbounded operators like parabolic PDEs and Schrödinger equations. 

We demonstrate the applicability of the Inf-LCHS theorem to a wide range of non-Hermitian dynamics, including examples such as linear parabolic PDEs, queueing models (birth-or-death processes), Schrödinger equations with complex potentials, Lindblad equations, and black hole thermal field equations. For those applications, our analysis provides insights into simulating general linear dynamics using a finite number of quantum dynamics and includes cost estimates for the corresponding quantum algorithms.

For the rest of the paper: \sec{results} presents the main theorem and assumptions; \sec{proof} gives the proof of the main theorem; \sec{sample} proposes the implementation of algorithms and sampling methods; \sec{applications} discusses several applications of simulating non-Hermitian dynamics; and \sec{discussion} concludes and discusses open questions. In the appendices, \cref{APD 1} lists the mathematical prerequisites and definitions used in this paper; \cref{APD 2} provides a specific example of a Schrödinger equation to demonstrate the reasonableness and satisfiability of the main theorem's assumptions; and \cref{apd: supply proof} supplements the detailed proof of the main theorem and the necessary lemmas.

\section{Main Theorem}\label{sec:results}
In this section, we present the main result of this work, which extends the LCHS technique~\cite{ALL23} for Hamiltonian simulation to the case of unbounded operators in infinite-dimensional spaces. Specifically, we focus on simulating unbounded time-evolution Hamiltonians, which represent the most general form of quantum systems with a complex, continuous spectrum.

The primary challenge in proving the simulation formula Inf-LCHS  stems from the unbounded nature of the operators, whose domain is typically restricted to a subset of the Hilbert space. This constraint prevents the direct application of standard simulation techniques. Additionally, the unbounded norm significantly complicates the process of proving the existence of limits or the convergence of series within the theorem's proof. If the integral of the real part of the operators over \( [0, t] \leq T \) is bounded, the limit in \cref{limit} exists, thereby ensuring the existence of the derivative limit, which is the key step in proving the analyticity.

The fundamental difficulty in proving the theorem lies in the need to demonstrate and leverage the fact that the analytic function of the operator-valued function also satisfies the Cauchy Integral Theorem. To achieve this, it is necessary to prove the analyticity of the integrand in \cref{fomula: analytic function}, which means proving the existence of the derivative limit of the function $\mathbf{F}(z) = f(-\mathrm{i} z)e^{\int_0^t(-z L(s)-\mathrm{i} H(s)) \mathrm{d} s}$ with respect to $z$. Since the function $\mathbf{F}(z)$  contains both the real part $L(t)$ and the imaginary part $H(t)$ of the operator $A(t)$, and these components are generally non-commuting, i.e. we have $e^{\int_0^t(-z L(s)-\mathrm{i} H(s)) \mathrm{d} s}\neq e^{-z\int_0^t  L(s)\mathrm{d} s}e^{\int_0^t -\mathrm{i} H(s) \mathrm{d} s}$ in usual.  Therefore, proving the existence of the derivative limit with respect to $z$ for $e^{\int_0^t(-z L(s)-\mathrm{i} H(s)) \mathrm{d} s}$ is a non-trivial task. 

In this paper, we denote $X$ as a Hilbert space and $L(X)$ as the set of linear operators on $X$. As usual, we assume that time is a continuous variable, and thus require continuity of the strong operator topology in $\mathcal{D}(A) = \{\phi \in X \mid \phi \in \mathcal{D}(A(t))\}$: $A(t) \in \operatorname{SOT}([0,T], L(X))$. That is, for all $\phi_0 \in \mathcal{D}(A)$, we have the condition $\lim_{t \to t_0} \|A(t)\phi_0 - A(t_0)\phi_0\| = 0$.
\begin{assumption}\label{adjoint_assum}
$A(t)$ is written as the Cartesian decomposition $A(t) = L(t) + \mathrm{i}H(t)$, where $L = \frac{A + A^\dagger}{2}$ and $H = \frac{A - A^\dagger}{2\mathrm{i}}$ are the Hermitian and anti-Hermitian parts of $A(t)$, respectively. There exists a subspace $X_1 \subseteq X$ such that $L$, $H$, the integrals $\int_0^t L(s) \mathrm{d}s$, $\int_0^t H(s)\mathrm{d}s$, and $\int_0^t(kL(s) + H(s)) \mathrm{d}s$ for all $k \in \mathbb{R}$ are self-adjoint on it. Additionally, we require that $X(t) + Y(t)$ is densely defined in $X$, ensuring that $X(t) + zY(t)$ is densely defined for all $z \in \mathbb{C}$, as $\mathcal{D}(X(t) + zY(t)) = \mathcal{D}(X) \cap \mathcal{D}(Y)$.
\end{assumption}

\begin{assumption}\label{assum: func}
   Function $f(z):\mathbb{C}\to\mathbb{C}$ is analytic, decaying, and normalized, which means it satisfies the following three properties representatively:
\begin{enumerate}
        \item (Analyticity) $f(z)$ is analytic on the lower half plane $\{z: \operatorname{Im}(z)<0\}$ and continuous on $\{z: \operatorname{Im}(z) \leq 0\}$,
        \item (Decay) there exists a parameter $\alpha>0$ such that $|z|^\alpha|f(z)| \leq \widetilde{C}$ for a constant $\widetilde{C}$ when $\operatorname{Im}(z) \leq 0$, and
        \item (Normalization) $\int_{\mathbb{R}} \frac{f(k)}{1-\mathrm{i} k} \mathrm{~d} k=1$.
    \end{enumerate} 
\end{assumption}

\begin{assumption}\label{assum: domain}
We choose a subspace $X_0$ of $X$ as:
        $$X_0 = \left\{ \psi_0 \in X \mid \mathcal{T} e^{-\mathrm{i} \int_0^t (k L(s) + H(s)) \, \mathrm{d}s} \psi_0 \in \mathcal{D}(L(t) + H(t)), \, \forall (k,t) \in \mathbb{R} \times [0, T] \right\} \cap X_1,$$
    where $X_1$ is the space that guarantees the self-adjointness of the operators defined in \cref{adjoint_assum}.
\end{assumption}

In \cref{assum: domain}, the definition of $X_0$ ensures that the unitary time-ordering operator $\mathcal{T} e^{-\mathrm{i} \int_0^t (k L(s) + H(s)) \, \mathrm{d}s}$ acting on the initial vector $\psi_0$ is in the domain of the operator $-\mathrm{i} k L(t) - \mathrm{i} H(t)$, which appears on the right-hand side of the first equation in \cref{main_thm_formula1}. In fact, this choice is made solely to ensure mathematical rigor during the proof of the theorem (specifically, in \cref{main_thm_formula1}). However, in the final expression of the Inf-LCHS  formula, which only involves unitary operators, the domain issue does not need to be considered in numerical simulations. Therefore, the definition of $X_0$ could potentially be broader.

\begin{theorem}[Inf-LCHS Theorem]\label{main_thm}
     Suppose $A(t) = L(t) + \mathrm{i}H(t)$, $L(t)$ and $H(t)$ satisfy \cref{adjoint_assum}. We define $Y(t)=\int_0^t L(s)\mathrm{~d}s$ and $X(t)=-\mathrm{i}\int_0^t H(s)\mathrm{~d}s$. Suppose that $f(z)$ is a function of $z \in \mathbb{C}$ that satisfies \cref{assum: func}. If operators $L(t)$ and $H(t)$ satisfy one of the following two conditions:
     \begin{enumerate}[label=(\alph*)]
         \item \label{condition1}operator $Z(X(t), wY(t)-X(t))$ exists for all $w$ in the lower half plane $\{z: \operatorname{Im}(z)<0\}$,
         \item \label{condition2}$Y(t)$ is a bounded operator,
     \end{enumerate} 
 and $L(s) \succeq \lambda_0>0$ for all $0 \leq s \leq t$, then $\forall \psi_0\in X_0$ defined in \cref{assum: domain} ($\mathcal{T}$ denotes the time-ordering operator)
\begin{equation}\label{main_thm_equ}
    \psi_t =\int_{\mathbb{R}} \frac{f(k)}{1-\mathrm{i} k} \mathcal{T} e^{-\mathrm{i} \int_0^t(k L(s)+H(s)) \mathrm{d} s} \mathrm{~d} k\psi_0
\end{equation}
is the solution of the differential equation $\frac{\partial u(x,t)}{\partial t}=-A(x,t) u(x,t)$ with the initial vector $u(x,0)=\psi_0$.

\end{theorem}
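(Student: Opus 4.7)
The plan is to show that $\psi_t$ as defined coincides with $\mathcal{T}e^{-\int_0^t A(s)\,\mathrm{d}s}\psi_0$, so that $\psi_t$ inherits the status of the unique solution of $\partial_t u = -A(t) u$ with $u(0) = \psi_0$ directly from the definition of the time-ordered exponential. The core tool is a contour-deformation argument in the complex parameter $w$ inside the time-ordered exponential. I introduce the operator-valued function
\begin{equation*}
\mathbf{F}(w) \;:=\; \frac{f(w)}{1-\mathrm{i}w}\,\mathcal{T} e^{-\mathrm{i}\int_0^t (wL(s)+H(s))\,\mathrm{d}s}\psi_0,
\end{equation*}
so that $\psi_t = \int_{\mathbb{R}}\mathbf{F}(k)\,\mathrm{d}k$, and record the structural observation that drives the argument: $(1-\mathrm{i}w)^{-1}$ has a unique simple pole in the closed lower half-plane, at $w=-\mathrm{i}$, and at that point the exponent becomes $-\mathrm{i}(wL(s)+H(s))\big|_{w=-\mathrm{i}} = -L(s)-\mathrm{i}H(s)=-A(s)$. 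Thus $\mathbf{F}(-\mathrm{i})$ already carries the target $\mathcal{T}e^{-\int_0^t A(s)\,\mathrm{d}s}\psi_0$, and deforming the real contour into the lower half-plane to pick up this residue should yield the desired identity up to a multiplicative constant that will be fixed later by the normalization in property~(3) of \cref{assum: func}.

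The main obstacle, as flagged in the paragraph preceding the theorem, is proving $w$-analyticity of $\mathbf{F}(w)$ in the open lower half-plane. The scalar factor is analytic there by property~(1) of \cref{assum: func}, so the real work concerns the operator factor $U(w,t)\psi_0 := \mathcal{T} e^{-\mathrm{i}\int_0^t (wL(s)+H(s))\,\mathrm{d}s}\psi_0$; because $L$ and $H$ generically fail to commute, the $w$-dependence cannot simply be peeled off as $e^{-\mathrm{i}wY(t)}e^{-\mathrm{i}X(t)}$, and analyticity must be argued via a more intrinsic route. Here the two alternative hypotheses each suffice. Under condition~\ref{condition2}, boundedness of $Y(t)$ lets me treat $-\mathrm{i}wY(t)$ as a bounded, analytically $w$-dependent perturbation of the generator behind $\mathcal{T}e^{-\mathrm{i}X(t)}$, so standard analytic perturbation theory for strongly continuous semigroups delivers the required $w$-analyticity of $U(w,t)\psi_0$. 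Under condition~\ref{condition1}, the existence of the operator $Z(X(t),wY(t)-X(t))$ supplies a disentangling formula that exhibits $U(w,t)\psi_0$ as a convergent series in $w$ on compact subsets of the open lower half-plane. Throughout, \cref{assum: domain} guarantees that the various operator products, derivatives in $w$, and series expansions act on vectors that stay in the relevant domains, so that the manipulations are genuine rather than merely formal.

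With analyticity secured, the rest of the argument is a standard residue calculation. For $w$ with $\operatorname{Im}(w)=-b<0$, the positivity $L(s)\succeq\lambda_0>0$ forces an exponentially damping factor of order $e^{-b\int_0^t L(s)\,\mathrm{d}s}$ in the norm of $U(w,t)\psi_0$, which combined with the polynomial decay $|w|^\alpha|f(w)|\leq\widetilde{C}$ from property~(2) of \cref{assum: func} makes the integrand vanish uniformly on a semicircle of radius $R\to\infty$ in the lower half-plane. Cauchy's theorem, applied to the clockwise contour consisting of $[-R,R]$ followed by this semicircle, then equates $\int_{\mathbb{R}}\mathbf{F}(k)\,\mathrm{d}k$ with $-2\pi\mathrm{i}$ times the residue of $\mathbf{F}$ at $w=-\mathrm{i}$, a direct computation of which yields $2\pi f(-\mathrm{i})\,\mathcal{T}e^{-\int_0^t A(s)\,\mathrm{d}s}\psi_0$. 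The constant $2\pi f(-\mathrm{i})=1$ is pinned down by specializing the same residue computation to the trivial operator case and invoking the normalization in property~(3) of \cref{assum: func}. This identifies $\psi_t$ with $\mathcal{T}e^{-\int_0^t A(s)\,\mathrm{d}s}\psi_0$ on $X_0$, which solves the stated Cauchy problem by definition, completing the proof.
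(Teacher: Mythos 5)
Your contour argument is a genuinely different route from the paper's. The paper never identifies $\psi_t$ with a non-unitary propagator: it differentiates the integral in $t$ (\cref{main_thm_formula1}), peels off $-A(t)\psi_t$ algebraically, and shows the leftover term $\mathcal{P}\int_{\mathbb{R}} f(k)\,\mathcal{T}e^{-\mathrm{i}\int_0^t(kL+H)}\,\mathrm{d}k\,\psi_0$ vanishes via a \emph{pole-free} Cauchy theorem in the right half-plane (\cref{lemma_main}). You instead deform into the lower half-plane to pick up the simple pole of $(1-\mathrm{i}w)^{-1}$ at $w=-\mathrm{i}$ and conclude $\psi_t=\mathcal{T}e^{-\int_0^t A(s)\,\mathrm{d}s}\psi_0$. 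Your residue bookkeeping is correct: $U(-\mathrm{i},t)=\mathcal{T}e^{-\int_0^t A}$, the constant $2\pi f(-\mathrm{i})=1$ does follow from the normalization by the scalar version of the same computation, the arc contribution vanishes (the extra factor $(1-\mathrm{i}w)^{-1}$ even supplies an extra $1/R$ that the paper's \cref{lemma_main} does not have), and the analyticity input you need on the open lower half-plane is exactly what \cref{lemma_limit} and \cref{lemma:limit_trotter} provide under conditions (a) and (b) respectively.

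The gap is your final step: ``$\mathcal{T}e^{-\int_0^t A(s)\,\mathrm{d}s}\psi_0$ solves the stated Cauchy problem by definition.'' For an unbounded, time-dependent, non-normal generator $-A(t)=-L(t)-\mathrm{i}H(t)$ this is not a definition but a substantive well-posedness theorem: one must show that the non-unitary evolution family exists, leaves a suitable dense domain invariant, and that $t\mapsto\mathcal{T}e^{-\int_0^t A}\psi_0$ is differentiable there with derivative $-A(t)(\cdot)$ (a Kato/Tanabe-type result requiring hypotheses the theorem does not state). Nothing in \cref{assum: domain} delivers this --- that assumption only controls the \emph{unitary} groups $\mathcal{T}e^{-\mathrm{i}\int_0^t(kL+H)}$, whose differentiability on $X_0$ comes from Stone's theorem --- and avoiding any appeal to the non-unitary propagator is precisely why the paper verifies the ODE directly on the integral formula. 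To close the gap you would either need to import such a well-posedness theorem under the stated hypotheses, or drop the identification and differentiate the integral representation as the paper does. A smaller glossed-over point: the bound $\lVert U(w,t)\rVert\le 1$ on the closed lower half-plane cannot be obtained by factoring the exponential, since $L$ and $H$ need not commute; the paper needs the Trotter decomposition of \cref{lemma:Trotter} for exactly this estimate.
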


\begin{proof}
We will show that $\psi_t$ satisfies the PDE $\frac{\partial u(x,t)}{\partial t}=-A(x,t) u(x,t)$ with the initial vector $u(x,0)=\psi_0$.
First, for the initial condition, by plugging in $t=0$ and using the normalization condition in \cref{assum: func}, we have
\begin{equation}\label{RHS: initial state}
    \psi_t|_{t=0}=\int_{\mathbb{R}} \frac{f(k)}{1-\mathrm{i} k} \mathrm I \mathrm{~d} k \psi_0 = \psi_0
\end{equation}
Now, we choose a fixed $\delta>0$ and consider $t \in[\delta, T]$. Differentiating $\psi_t$ with respect to $t$ yields

\begin{align}\label{main_thm_formula1}
\frac{\mathrm{d} \psi_t}{\mathrm{d} t} \notag=&\mathcal{P} \int_{\mathbb{R}} \frac{f(k)}{1-\mathrm{i} k}  (-\mathrm{i} k L(t)-\mathrm{i} H(t))\mathcal{T} e^{-\mathrm{i} \int_0^t(k L(s)+H(s)) \mathrm{d} s} \mathrm{~d} k\psi_0 \notag \\
 =&\mathcal{P} \int_{\mathbb{R}} \frac{f(k)}{1-\mathrm{i} k}\left( \left[\left(-L(t)-\mathrm{i} H(t)\right)+\left(1-\mathrm{i} k\right) L(t)\right] \mathcal{T} e^{-\mathrm{i} \int_0^t(k L(s)+H(s)) \mathrm{d} s}\right) \mathrm{~d} k \psi_0\notag\\
 =&\mathcal{P} \int_{\mathbb{R}} \frac{f(k)}{1-\mathrm{i} k}\left[-A(t)+\left(1-\mathrm{i} k\right) L(t)\right] \mathcal{T} e^{-\mathrm{i} \int_0^t(k L(s)+H(s)) \mathrm{d} s} \mathrm{~d} k \psi_0\notag\\
 =&-A(t) \psi_t+L(t)\left(\mathcal{P} \int_{\mathbb{R}} f(k) \mathcal{T} e^{-\mathrm{i} \int_0^t(k L(s)+H(s)) \mathrm{d} s} \mathrm{~d} k \psi_0\right)
\end{align}
We consider the Cauchy principal value ($\mathcal{P}$) of the integrals, as absolute convergence may not be guaranteed. Additionally, in the first equation, we interchange the differentiation and the integral. The validity of this operation is demonstrated in \cref{APD: order change}. We also exchange the limit (i.e., the Cauchy principal value) and the differentiation, which is rigorous by the uniform convergence of the right-hand side in the first equation over $t \in [\delta, T]$. Moreover, the vector inside the integral in \cref{main_thm_formula1} is well-defined according to \cref{assum: domain}.

According to \cref{lemma_main}, the second term on the last line of \cref{main_thm_formula1} vanishes. So we obtain
\begin{align}\label{main_thm_ode}
    \frac{\mathrm{d} \psi_t}{\mathrm{d} t}=-A(t) \psi_t
\end{align}
for all $t \in[\delta, T]$. As $\delta$ can be an arbitrary positive number, \cref{main_thm_ode} holds for all $t \in(0, T]$. Together with \cref{RHS: initial state}, we prove that $\psi_t$ satisfies the PDE $\frac{\partial u(x,t)}{\partial t}=-A(x,t) u(x,t)$ with the initial vector $u(x,0)=\psi_0$ and complete the proof.
\end{proof}

We have assumed \( L(t) \succeq \lambda_0 > 0 \) for a positive number \( \lambda_0 \) and established the Inf-LCHS theorem. To extend the theorem to the case where \( L(t) \succeq 0 \), we need to identify a fixed, non-degenerate common subset of all \( X_0(\lambda) \) in which the initial vector \( \psi_0 \) remains as \( \lambda \to 0 \), where \( X_0(\lambda_0) \) is the space defined in \cref{assum: domain} under the assumption \( L(t) \succeq \lambda_0 > 0 \). Additionally, we must show that the solutions of the PDE depend continuously on \( L(t) \). If these conditions are satisfied, it may be possible to establish the theorem for \( L(t) \succeq 0 \) as well.

The proof of \cref{main_thm} presented here is a high-level demonstration that serves as the final step, following the completion of all conditions and formula derivations. A more detailed, step-by-step proof is provided in \cref{sec:proof} and \cref{apd: supply proof}.

\section{Main Theorem Conditions: Derivation and Proof}\label{sec:proof}
In this section, we identify the conditions required to address the key difficulty in the proof (the unboundedness of the operators) and demonstrate how the Inf-LCHS  formula holds under these conditions. Only two key lemmas that describe these conditions are presented in the main text, while the remaining auxiliary proofs are provided in \cref{apd: supply proof}.

To overcome the difficulties arising from the unbounded norms, we employ two techniques. One approach in \cref{lemma_limit} is to ensure the existence of the operator $Z(X(t), zY(t)-X(t))$, which allows the Baker-Campbell-Hausdorff (BCH) formula to hold, thereby establishing the existence of the derivative. However, in practical applications, the operator $Z(X(t), zY(t)-X(t))$ defined in \cref{def: Z function} is an operator-valued series with unbounded terms, making the verification of series convergence particularly challenging. In such cases, we typically demonstrate the convergence by verifying that the nested Lie brackets of the real and imaginary parts of the Hamiltonians of specific quantum systems vanish at a given level, thereby reducing the infinite summation of series terms to a finite sum. Another method in \cref{lemma:limit_trotter} is to apply the Trotter decomposition formula for unbounded operators that satisfy the conditions in \cref{lemma:Trotter}, which makes the derivative operation valid and guarantees the existence of the derivative.

In the following two lemmas, we give two conditions that guarantee the  analyticity of  $f(-\mathrm{i} z)e^{\int_0^t(-z L(s)-\mathrm{i} H(s)) \mathrm{d} s}$, and Cauchy's integral theorem holds by \cref{Cauchy's integral theorem}. Both lemmas in this section are discussed under the assumption that $X$ is a Hilbert space, $L(X)$  denotes the set of linear operators on $X$, and $A(\cdot) \in  \operatorname{SOT}([0,T],L(X))$. Let $A(t) = L(t) + \mathrm{i}H(t)$, where $L(t)$ and $H(t)$ satisfy \cref{adjoint_assum}. 

In this paper, we define the nested Lie brackets expansion of the right hand side in \cref{BCH} as the notation:
\begin{align}\label{def: Z function}
    Z(X, Y):\notag
    = & \left.X+Y+\frac{1}{2}[X, Y]+\frac{1}{12}([X,[X, Y]]]+[Y,[Y, X]]\right.) -\frac{1}{24}[Y,[X,[X, Y]]] \\\notag
& -\frac{1}{720}([Y,[Y,[Y,[Y, X]]]]+[X,[X,[X,[X, Y]]]]) \\\notag
& +\frac{1}{360}([X,[Y,[Y,[Y, X]]]]+[Y,[X,[X,[X, Y]]]]) \\
& +\frac{1}{120}([Y,[X,[Y,[X, Y]]]]+[X,[Y,[X,[Y, X]]]]) +\cdots
\end{align}
\begin{lemma}\label{lemma_limit}
Let  $A(t) = L(t) + \mathrm{i}H(t)$, $Y(t)=\int_0^t L(s)\mathrm{~d}s$ and $X(t)=-\mathrm{i}\int_0^t H(s)\mathrm{~d}s$. Suppose that $-z Y(s)+ X(s)$ is densely defined and an operator $Z(X(t), zY(t)-X(t))$ exists for $z\in \mathcal{D}\subseteq\mathbb{C}$, then $\mathbf{F}(z) = f(-\mathrm{i} z)e^{\int_0^t(-z L(s)-\mathrm{i} H(s)) \mathrm{d} s}$ is analytic in $\mathcal{D}$ for any numerical analytic function $f(z)$.
\end{lemma}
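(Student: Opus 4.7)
The plan is to reduce analyticity of $\mathbf{F}(z)$ to analyticity of the operator-exponential $e^{X(t)-zY(t)}$ on $\mathcal{D}$. Since $f(-\mathrm{i} z)$ is analytic by \cref{assum: func}, and the exponent satisfies $\int_0^t(-zL(s)-\mathrm{i} H(s))\,\mathrm{d}s = -(zY(t)-X(t))$, it suffices to prove that $z\mapsto e^{X(t)-zY(t)}$ is operator-analytic on $\mathcal{D}$ in the strong sense on a dense invariant domain (the setting in which the main theorem's proof will use this).

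Next I would exploit the BCH identity under the standing hypothesis. Set $A:=X(t)$ and $B_z:=zY(t)-X(t)$. By assumption, the series $Z(A,B_z)$ converges to a well-defined operator on $\mathcal{D}$, and the densely defined hypothesis on $-zY(s)+X(s)$ ensures $e^{B_z}$ is well-defined with inverse $e^{-B_z}$. The BCH identity $e^{A}e^{B_z}=e^{Z(A,B_z)}$ then holds on the dense invariant subspace, and inversion yields
$$e^{X(t)-zY(t)} \;=\; e^{-B_z} \;=\; e^{-Z(A,B_z)}\,e^{A} \;=\; e^{-Z(X(t),\,zY(t)-X(t))}\,e^{X(t)}.$$
Because $e^{X(t)}$ does not depend on $z$, the analyticity of $\mathbf{F}(z)$ further reduces to that of $z\mapsto e^{-Z(X(t),zY(t)-X(t))}$.

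The third step is to extract the $z$-dependence of the nested Lie brackets. Every summand in \cref{def: Z function} is a nested bracket in $A$ and $B_z$, and $B_z$ is affine in $z$; multilinearity of the bracket unfolds each such summand into a polynomial in $z$ whose operator-valued coefficients are nested brackets built only from $X(t)$ and $Y(t)$. Collecting powers of $z$ yields a formal expansion $Z(X(t),zY(t)-X(t)) = \sum_{n\geq 0} z^{n} W_{n}$ with fixed operators $W_n$, and the standing hypothesis that $Z(X(t),zY(t)-X(t))$ exists on $\mathcal{D}$ supplies convergence of this operator-valued power series. Hence the exponent depends analytically on $z$, with derivative $Z'(z)=\sum_{n\geq 1} n z^{n-1} W_{n}$ in the same topology.

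Finally, to pass analyticity through the exponential, I would apply a Duhamel identity
$$\frac{\mathrm{d}}{\mathrm{d}z}\, e^{-Z(z)}\psi \;=\; -\int_{0}^{1} e^{-(1-s)Z(z)}\, Z'(z)\, e^{-sZ(z)}\,\psi\,\mathrm{d}s,$$
valid on the dense invariant domain supplied by \cref{assum: domain}, and then multiply by the constant factor $e^{X(t)}$ and the analytic scalar $f(-\mathrm{i} z)$ to conclude $\mathbf{F}(z)$ is analytic on $\mathcal{D}$. The main obstacle will be the rigorous handling of unbounded operators at each step---verifying the BCH identity on a dense invariant subspace, justifying termwise differentiation of the operator-valued power series, and controlling the Duhamel integral when the generators are unbounded---each of which requires simultaneously tracking domains of the various nested commutators. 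I expect this domain-bookkeeping to constitute the bulk of the detailed technical work that the main text defers to \cref{apd: supply proof}.
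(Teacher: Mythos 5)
Your proposal is correct and follows essentially the same route as the paper's proof: both hinge on the BCH identity $e^{X(t)}e^{zY(t)-X(t)}=e^{Z(X(t),zY(t)-X(t))}$, use bilinearity of the nested brackets to expose the $z$-dependence of the exponent, and then differentiate through the exponential. Your two refinements --- inverting to work directly with $e^{X(t)-zY(t)}$ (the exponent actually appearing in $\mathbf{F}$, where the paper leaves a sign discrepancy implicit) and using a Duhamel identity rather than the naive chain rule for the operator exponential --- are welcome technical improvements but not a different argument.
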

\begin{proof}
         Since for arbitrary $z_0\in \mathcal{D}$, we have
    \begin{align}\label{equ: exist of deriv}
    \notag & e^{X(t)} \left.
       \dv{z}\right|_{z=z_0}e^{\int_0^t(z L(s)+\mathrm{i} H(s)) \mathrm{d} s} \\\notag
    = &\left.\dv{z}\right|_{z=z_0} e^{X(t)}e^{zY(t)-X(t)}\\\notag
    = & \left.\dv{z}\right|_{z=z_0} e^{Z(X(t), zY(t)-X(t))}\\\notag
    = & \left.\dv{z}\right|_{z=z_0} e^{z(Y(t) + \frac{1}{2}[X(t), Y(t)]+\frac{1}{12}([X(t),[X(t), Y(t)]]]+z[Y(t),[Y(t), X(t)]]) + \cdots)}\\
    = & (Y(t) + \frac{1}{2}[X(t), Y(t)]+\frac{1}{12}([X(t),[X(t), Y(t)]]]+z_0[Y(t),[Y(t), X(t)]]) + \cdots)e^{Z(X(t), z_0Y(t)-X(t))},
    \end{align}
    where the second equation is due to the existence of $Z(X(t), zY(t)-X(t))$ and the BCH formula \cref{BCH}, the third equation holds by the expansion of $Z(X(t), zY(t)-X(t))$. \cref{equ: exist of deriv} demonstrates the existence of the derivative of the operator-valued function $e^{\int_0^t(z L(s)+\mathrm{i} H(s)) \mathrm{d} s}$. Consequently, $e^{\int_0^t(z L(s)+\mathrm{i} H(s)) \mathrm{d} s}$ is analytic, and therefore $\mathbf{F}(z) = f(-\mathrm i z)e^{\int_0^t(-z L(s)-\mathrm{i} H(s)) \mathrm{d} s}$ is also analytic.
\end{proof}

\begin{lemma}\label{lemma:limit_trotter}
     Let  $A(t) = L(t) + \mathrm{i}H(t)$, $Y(t)=\int_0^t L(s)\mathrm{~d}s$ and $X(t)=-\mathrm{i}\int_0^t H(s)\mathrm{~d}s$. Denote $\mathcal{D}=\{z|\Re(z)\geq0, z\in\mathbb{C}\}$. If $Y(t)$ is bounded and $L(t) \succeq \lambda_0>0$ for a positive number $\lambda_0$, and operator $X(t)+ zY(t)$ is densely defined for all $z\in \mathbb{C}$, then $\mathbf{F}(z) = f(-\mathrm{i} z)e^{\int_0^t(-z L(s)-\mathrm{i} H(s)) \mathrm{d} s}$ is analytic in $\mathcal{D}$ for any numerical analytic function $f(z)$.
\end{lemma}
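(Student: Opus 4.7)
The plan is to establish analyticity of $\mathbf{F}(z)$ on $\mathcal{D}$ by approximating the operator-valued exponential via a Trotter product (using \cref{lemma:Trotter}) whose factors are manifestly analytic in $z$, and then promoting pointwise convergence to analyticity of the limit through a Vitali-plus-Dunford bootstrap. This replaces the BCH route of \cref{lemma_limit}, which is unavailable here since the operator $Z(X(t), zY(t)-X(t))$ is not assumed to exist; the boundedness of $Y(t)$ is what buys us the alternative.

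First I would record the structural facts. By \cref{adjoint_assum}, $X(t) = -\mathrm{i}\int_0^t H(s)\,\mathrm{d}s$ is anti-self-adjoint on $X_1$, so $e^{X(t)/n}$ is a fixed unitary for every $n$; $Y(t) = \int_0^t L(s)\,\mathrm{d}s$ is bounded by hypothesis and non-negative since $L(s) \succeq \lambda_0 > 0$. For $z \in \mathcal{D}$ the functional calculus for the self-adjoint operator $Y(t)$ gives $\|e^{-zY(t)/n}\| \leq e^{-\operatorname{Re}(z)\lambda_0 t/n} \leq 1$, hence the Trotter iterates $G^{(n)}_z(t) := \bigl(e^{-zY(t)/n}\, e^{X(t)/n}\bigr)^n$ are uniformly contractive, $\|G^{(n)}_z(t)\| \leq 1$ on $\mathcal{D}\times\mathbb{N}$. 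For each fixed $n$, $z \mapsto G^{(n)}_z(t)$ is entire in the operator-norm sense, because $Y(t)$ is bounded (so $z\mapsto e^{-zY(t)/n}$ is the norm-convergent exponential power series) and finite products of norm-analytic operator-valued functions are norm-analytic. Invoking \cref{lemma:Trotter}, whose hypotheses are met for the perturbed generator $-zY(t)+X(t)$ (bounded perturbation of an anti-self-adjoint generator, with the required dense-definedness assumed), we obtain the strong convergence $G^{(n)}_z(t)\psi \to e^{-zY(t)+X(t)}\psi$ for every $\psi \in X$ and every $z \in \mathcal{D}$. For any $\phi,\psi \in X$, the scalars $z \mapsto \langle\phi, G^{(n)}_z(t)\psi\rangle$ are entire, uniformly bounded by $\|\phi\|\|\psi\|$ on $\mathcal{D}$, and converge pointwise; Vitali's convergence theorem therefore gives analyticity of $z \mapsto \langle \phi, e^{-zY(t)+X(t)}\psi\rangle$ on the interior of $\mathcal{D}$ with continuous boundary values. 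This is weak analyticity of the operator-valued limit, and Dunford's theorem (weak holomorphy equals norm holomorphy for Banach-space-valued maps) upgrades it to norm analyticity. Multiplying by the scalar factor $f(-\mathrm{i} z)$, which is analytic on the interior of $\mathcal{D}$ by \cref{assum: func} (since $z \mapsto -\mathrm{i} z$ sends the open right half plane to the open lower half plane), yields the analyticity of $\mathbf{F}(z)$ as claimed.

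The main obstacle I expect is precisely the passage from strong Trotter convergence to analyticity of the limit: a strong limit of norm-analytic operator-valued functions need not itself be analytic in the absence of a locally uniform bound. The whole argument hinges on the contractivity $\|G^{(n)}_z(t)\|\leq 1$, which in turn relies on both the positivity hypothesis $L(s)\succeq\lambda_0>0$ (so that $Y(t)\succeq 0$) and on the sign restriction $\operatorname{Re}(z)\geq 0$ defining $\mathcal{D}$; either failure would immediately collapse the Vitali step. A secondary but genuine technical task is checking that the version of Trotter's formula packaged in \cref{lemma:Trotter} applies to the full family $\{-zY(t)+X(t)\}_{z\in\mathcal{D}}$ rather than just to a single $z$, which is where the boundedness of $Y(t)$ (making it a uniformly bounded, dissipative perturbation of the anti-self-adjoint generator $X(t)$) does the decisive work.
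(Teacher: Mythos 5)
Your proposal follows the paper's proof in its first half---both arguments hinge on the Trotter product formula of \cref{lemma:Trotter} and on the contractivity of the factors, which, as you correctly emphasize, rests on $L(s)\succeq\lambda_0>0$ (so $Y(t)\succeq \lambda_0 t>0$) together with $\Re(z)\ge 0$---but it diverges in how analyticity of the limit is extracted, and your route is arguably the cleaner one. The paper differentiates the Trotter product directly and interchanges $\frac{\mathrm{d}}{\mathrm{d}z}$ with $\lim_{n\to\infty}$ in \cref{limit}, asserting that the boundedness of $Y(t)$ makes the resulting limit exist; that interchange is the delicate step (and, as written, the paper's final line even records the derivative of the $n$-fold product as an $n$-th power rather than as the product-rule sum of $n$ terms). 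You instead observe that each Trotter iterate is norm-entire in $z$ and uniformly contractive on $\mathcal{D}$, apply Vitali--Porter to the matrix elements $\langle\phi,G^{(n)}_z(t)\psi\rangle$ to get weak analyticity of the strong limit, and upgrade to norm analyticity via Dunford's theorem. This bypasses the limit--derivative interchange entirely and is a standard, fully rigorous pattern; what it costs is that it only certifies analyticity of the limit rather than producing an explicit formula for the derivative, which the paper's computation (modulo the product-rule slip) attempts to do.

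Two points deserve attention. First, your splitting $\bigl(e^{-zY(t)/n}e^{X(t)/n}\bigr)^n$ places the non-skew-adjoint exponent $-zY(t)$ in a single factor, so \cref{lemma:Trotter} as stated (which requires the form $e^{\mathrm{i}tA/n}e^{tB/n}$ with $A,B$ self-adjoint) does not literally apply; the paper's regrouping into $e^{-\Re(z)Y(t)/n}$ and $e^{-\mathrm{i}(\mathrm{i}X(t)+\Im(z)Y(t))/n}$, with the contraction-semigroup hypotheses verified through the Hille--Yosida resolvent bounds in \cref{contra_1} and \cref{contra_2}, is exactly the fix for the ``secondary technical task'' you flag, and you should either adopt that regrouping or explicitly invoke a bounded-perturbation version of the Trotter formula. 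Second, Vitali yields analyticity only on the open half plane $\{\Re(z)>0\}$; the behaviour on the boundary $\Re(z)=0$ (which the lemma's closed $\mathcal{D}$ nominally includes) needs a separate continuity argument. Neither point undermines the proposal.
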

\begin{proof}
    We first prove the Trotter decomposition of unbounded operators in \cref{lemma:Trotter} holds under this condition. Let $A=-(\mathrm{i}X(t)+\Im(z)Y(t)),\ B=-\Re(z)Y(t)$, then $A+\mathrm{i}B=-\mathrm{i}(X(t)+\bar zY(t))$ is densely defined, and thus $-\mathrm{i}(\mathrm{i}X(t)+\Im(z)Y(t))-\Re(z)Y(t) = - (X(t)+\bar zY(t))^\dagger$ is closed. It suffices to check $-\mathrm{i}(\mathrm{i}X(t)+\Im(z)Y(t))$ and $-\Re(z)Y(t)$ generate the contraction semigroups $e^{\mathrm{i}\lambda (\mathrm{i}X(t)+\Im(z)Y(t))}$ and $e^{-\lambda \Re(z)Y(t)}$. By the Hille–Yosida theorem, since $-\mathrm{i}(\mathrm{i}X(t)+\Im(z)Y(t))-\Re(z)Y(t)=X(t) - zY(t)$ is closed and densely defined and $-\Re(z)Y(t)$ is a bounded operator, we only need to prove the fact that every real $\lambda>0$ belongs to the resolvent set of $U$ and for such $\lambda$, $\left\|(\lambda I-U)^{-1}\right\| \leq \frac{1}{\lambda}$
holds for $U=-\mathrm{i}(\mathrm{i}X(t)+\Im(z)Y(t))-\Re(z)Y(t)$ and $U=-\Re(z)Y(t)$.

If $U=-\Re(z)Y(t)$, since $\Re(z)\geq0$ and $L(t) \succeq \lambda_0>0$ for a positive number $\lambda_0$, $2\lambda\Re(z)Y(t)\succeq0$ and $\Re(z)^2Y(t)^2\succeq0$. Therefore,
\begin{equation}\label{contra_1}
    \left\|\frac{1}{\lambda - (-\Re(z)Y(t))}\right\|^2 = \left\|\frac{1}{(\lambda +\Re(z)Y(t))^2}\right\| =\left \|\frac{1}{\lambda^2 +2\lambda\Re(z)Y(t) +\Re(z)^2Y(t)^2 }\right\|\leq\frac{1}{\lambda^2}.
\end{equation}
And for the case when $U=-\mathrm{i}(\mathrm{i}X(t)+\Im(z)Y(t))-\Re(z)Y(t)$, since $[\Re(z)Y(t) + \mathrm{i}(\mathrm{i}X(t)+\Im(z)Y(t))][(\Re(z)Y(t) - \mathrm{i}(\mathrm{i}X(t)+\Im(z)Y(t)))] = [\Re(z)Y(t) + \mathrm{i}(\mathrm{i}X(t)+\Im(z)Y(t))]\cdot[\Re(z)Y(t) + \mathrm{i}(\mathrm{i}X(t)+\Im(z)Y(t))]^\dagger\succeq0$, the similar calculation gives
\begin{align}\label{contra_2}
    \notag
    &\left\|\frac{1}{\lambda + \Re(z)Y(t) + \mathrm{i}(\mathrm{i}X(t)+\Im(z)Y(t))}\right\|^2 \\  \notag
    = &\left\|\frac{1}{[\lambda + \Re(z)Y(t) + \mathrm{i}(\mathrm{i}X(t)+\Im(z)Y(t))]\cdot [\lambda + \Re(z)Y(t) + \mathrm{i}(\mathrm{i}X(t)+\Im(z)Y(t))]^\dagger}\right\| \\ \notag
    =& \left\|\frac{1}{[\lambda + \Re(z)Y(t) + \mathrm{i}(\mathrm{i}X(t)+\Im(z)Y(t))]\cdot [\lambda + \Re(z)Y(t) - \mathrm{i}(\mathrm{i}X(t)+\Im(z)Y(t))]}\right\|\\ \notag
    =&\left \|\frac{1}{\lambda^2 +2\lambda\Re(z)Y(t) + [\Re(z)Y(t) + \mathrm{i}(\mathrm{i}X(t)+\Im(z)Y(t))][(\Re(z)Y(t) - \mathrm{i}(\mathrm{i}X(t)+\Im(z)Y(t)))]}\right\|\\
    \leq&\frac{1}{\lambda^2}.
\end{align}
Therefore, all the requirements in \cref{lemma:Trotter} are satisfied, thus the Trotter decomposition of $e^{\int_0^t-(z L(s)+\mathrm{i} H(s)) \mathrm{d} s}$ holds. Thus, for all $\psi$ in the Hilbert space $X$,  
    \begin{align}\label{limit}
       \left.\dv{z}\right|_{z=z_0}e^{\int_0^t-(z L(s)+\mathrm{i} H(s)) \mathrm{d} s} \psi\notag
    = &\left.\dv{z}\right |_{z=z_0}\lim_{n\to\infty}\left(\left(e^{-\Re(z)Y(t)/n}e^{- \mathrm{i}(\mathrm{i}X(t)+\Im(z)Y(t))/n}\right)^n\psi\right)\\\notag
    = & \left.\lim_{n\to\infty}\left(\dv{z}\right |_{z=z_0}\left(e^{-\Re(z)Y(t)/n}e^{- \mathrm{i}(\mathrm{i}X(t)+\Im(z)Y(t))/n}\right)^n\psi\right)\\
    = & \lim_{n\to\infty}\left(\left(-\frac{Y(t)}{n} e^{-\Re(z_0)Y(t)/n}e^{- \mathrm{i}(\mathrm{i}X(t)+\Im(z_0)Y(t))/n}\right)^n\psi\right).
    \end{align}
    Since $Y(t)$ is bounded, thus the limit in \cref{limit} exists, which means the derivative of $e^{\int_0^t(-z L(s)-\mathrm{i} H(s)) \mathrm{d} s}$ exists for all $z\in\mathcal{D} $. Therefore, $\mathbf{F}(z) = f(-\mathrm{i} z)e^{\int_0^t(-z L(s)-\mathrm{i} H(s)) \mathrm{d} s}$ is analytic in $\mathcal{D}$ for any numerical analytic function $f(z)$.
\end{proof}

The lemmas presented above specify the additional conditions required to extend the finite-dimensional LCHS formula to infinite-dimensional unbounded operators. Specifically, to address the challenges posed by the unbounded nature of the Hamiltonian \( A(t) \) in continuous spectrum quantum systems, we consider one of two approaches.

The first approach assumes the existence of the operator \( Z(X(t), wY(t) - X(t)) \) as presented in \cref{lemma_limit}, which is expressed as an infinite sum of nested Lie brackets. This assumption fundamentally guarantees the existence of the summation limit, thus replacing the bounding requirement of the first approach. However, verifying the existence of such an operator, represented as an infinite series of nested Lie brackets, can be computationally challenging in practice. To simplify this verification, we often demonstrate that all nested Lie brackets of \( L(t) \) and \( H(t) \) of a certain order vanish, such as
$$
[L, [L,[\dots, [L, H]]]] = [L, [H,[\dots, [L, H]]]] = \dots = [H, [L,[\dots, [L, H]]]] = [H,[H,[ \dots, [L, H]]]] = 0.
$$
In this case, since each nested Lie bracket of \( X(t) \) and \( wY(t) - X(t) \) in \( Z(X(t), wY(t) - X(t)) \) is equivalent to the corresponding nested Lie bracket of \( X(t) \) and \( wY(t) \) (because the Lie bracket of \( X(t) \) with itself is zero and the Lie bracket satisfies bilinearity), all higher-order nested Lie brackets vanish. Consequently,
$$
Z(X(t), wY(t) - X(t))
$$
reduces to a finite sum of operators, thereby ensuring its existence.
This condition can be interpreted as a form of commutativity between the real and imaginary parts of the operator in a specific sense. This is a surprising observation because the form of Inf-LCHS theorem \cref{main_thm_equ} itself does not explicitly represent any commutativity between the real and imaginary parts. The formula merely involves integrating the sum of the real and imaginary parts in the exponential, but the formula holds when such a commutative relationship exists between them.

The second approach, outlined in \cref{lemma:limit_trotter}, assumes that the integral of the real part of operator $A(t)$ over \( [0, t] \leq T \), denoted as \( Y(t) \), is a bounded operator. This enables the application of the Trotter decomposition to establish the existence of the derivative limit. In practical scenarios, such as the Schrödinger equation with a complex potential \( V(x, t) = V_R(x, t) + \mathrm{i} V_I(x, t) \), the imaginary part \( V_I(x, t) \) of the potential often has compact support.  Similarly, the integral $\int_0^t V_I(x, t) $ is bounded, which makes it a bounded operator when viewed as the integral of the real part \( L(t) \) of the Hamiltonian governing the quantum system described by the equation.

Either of these two conditions constitutes the sole additional requirement for the applicability of the Inf-LCHS theorem to quantum systems with complex, continuous spectra. As the most general and comprehensive extension of the LCHS formula, the Inf-LCHS theorem provides a robust framework for such systems. These two conditions are referred to as conditions \ref{condition1} and \ref{condition2} in \cref{main_thm}.

\section{Inf-LCHS Implementation and Sampling Methods}\label{sec:sample}
In this section, we first discuss the selection of the kernel function $f$ and the numerical discretization required for implementing the algorithm on quantum computers.

A potential choice for the kernel function $f$ is given by 
\begin{equation}
    f(z) = \frac{1}{\pi(1+\mathrm{i}z)}.
\end{equation}
However, other kernel functions with improved convergence properties exist. For instance, as described in~\cite{ACL23}, the function
\begin{equation}
    f(z) = \frac{1}{2\pi \exp(-2^\beta) \exp((1+\mathrm{i}z)^\beta)}, \quad (0<\beta<1)
\end{equation}
is considered optimal in the sense that the truncation range of the infinite integral cannot be significantly reduced beyond this choice.

To address the truncation of the infinite integral, we introduce the simplified notations:
\begin{equation}
    g(k):=\frac{f(k)}{1-\mathrm{i}k}, \quad \mathcal{U}(k,T) := \mathcal{T} \exp\left(-\mathrm{i}\int_0^T \big(kL(s) + H(s)\big)\,\mathrm{d}s\right),
\end{equation} 
where $\mathcal{T}$ denotes the time-ordering operator. To bound the truncation error, we impose the condition
\begin{equation}
    \left\lVert \int_{\mathbb{R}} g(k) \mathcal{U}(k,T)\,\mathrm{d}k  - \int_{-K}^K g(k) \mathcal{U}(k,T)\,\mathrm{d}k \right\rVert < \varepsilon.
\end{equation}

Since $\mathcal{U}(k,T)$ is unitary, for any $k$ and $T$, we obtain the bound
\begin{equation}
    \left\lVert \int_{\mathbb{R}} g(k) \mathcal{U}(k,T)\,\mathrm{d}k  - \int_{-K}^K g(k) \mathcal{U}(k,T)\,\mathrm{d}k \right\rVert 
    \le \int_{-\infty}^{-K} \lvert g(k)\rvert\,\mathrm{d}k  + \int_{K}^{+\infty} \lvert g(k)\rvert\,\mathrm{d}k.
\end{equation}

Using the rapid decay property of $g(k)$ stated in Lemma 9 of~\cite{ACL23}, it suffices to choose the truncation parameter $K$ as
\begin{equation}\label{eq:complexity_K}
    K = \mathcal{O}\left(\log(1/\varepsilon)^{1/\beta}\right).
\end{equation}

Next, we estimate the integral using numerical quadrature schemes and bound the quadrature error.
Two different schemes are proposed for this purpose.

 In the first scheme, we assume $L(t)$ is already approximated by a finite-dimensional matrix or a bounded linear operator. This is a scenario often considered in numerical simulations. For example, if $L(t)$ represents the Laplace operator, the spatial variable can be discretized using a finite number of grid points, and the differential operator can be approximated by a matrix via finite difference methods. Under this assumption, the norm of $L(t)$ becomes well-defined, enabling us to analyze the precision based on the approach in~\cite{ACL23}, which achieves state preparation costs with near-optimal dependence on all relevant parameters.
  
In the second scheme, we employ Monte Carlo integration. This method yields an error bound independent of the derivatives of $A(t)$, thereby avoiding the complications associated with unbounded operators.

The quantum implementation of the LCHS formula relies on the linear combination of unitaries (LCU) technique \cite{Childs2012HamiltonianSU,GY2008}. In both schemes discussed above, we obtain the approximation
$
u(T) \approx \sum_{j=0}^{N-1} C_j \mathcal{U}(k_j, T) u_0,
$ using the Inf-LCHS integral transform formula together with the discretization of the integrals. 
Here $u(T)$ represents the solution at time $T$ and $u_0$ is the initial state. In brief, the LCU cost is proportional to the total number $N$ of unitaries and to the cost of each Hamiltonian simulation $\mathcal{U}(k_j, T)$. Thus, we need to analyze the scaling of $N$, which corresponds to the total number of quadrature nodes in a numerical integration scheme or the total number of samples in a Monte Carlo integration scheme. From this, we would be able to estimate the query complexity for $A(t)$ (more precisely, the HAM-T oracle for $kL+ H$) and $u_0$ (more precisely, the state preparation oracle for $u_0$), based on the results in~\cite{ALL23}.

\subsection{Inf-LCHS-Gaussian: Composite Gaussian Quadrature}
We consider the composite Gaussian quadrature scheme, defined as
\begin{equation}\label{eq:quadrature}
    \norm{\int_{-K}^K g(k) \mathcal{U}(k,T)\dd k - \sum_{m=-M}^{M-1}\sum_{q=0}^{Q-1}c_{q,m}\mathcal{U}(k_{q,m},T)} < \varepsilon.
\end{equation}
The integration interval $[-K, K]$ is divided into $2M$ subintervals, each of length $h = K/M$. On each subinterval $[mh, (m+1)h]$ for $m = -M, \dots, M-1$, we apply Gaussian quadrature with $Q$ nodes. Denoting the nodes and weights by $k_{q,m}$ and $w_{q,m}$ respectively, we define $c_{q,m} := w_{q,m} \cdot g(k_{q,m})$.

\begin{theorem}[Inf-LCHS-Gaussian]\label{THM:Inf-LCHS-G}
Let $\sum_{m=-M}^{M-1}\sum_{q=0}^{Q-1}c_{q,m}\mathcal{U}(k_{q,m},T)$ denote the composite Gaussian quadrature of the integral $\int_{-K}^{K} g(k) \mathcal{U}(k,T)\dd k$. To ensure the quadrature error is bounded by $\varepsilon$ (\cref{eq:quadrature}), the total number of nodes scales with 
\[
\mathcal{O}\left( T \max_{0 \le t \le T} \| L(t) \| \left(\log\frac{1}{\varepsilon}\right)^{1+1/\beta}\right).
\]
\end{theorem}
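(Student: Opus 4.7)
The plan is to split the total error into a tail-truncation part and a quadrature part, and then to balance the two discretization parameters $M$ and $Q$. The tail-truncation bound
\[
\left\lVert\int_{|k|>K} g(k)\mathcal{U}(k,T)\,\mathrm{d}k\right\rVert \le \int_{|k|>K}|g(k)|\,\mathrm{d}k
\]
follows from unitarity of $\mathcal{U}(k,T)$, so the decay estimate of Lemma~9 in~\cite{ACL23} already forces $K=\mathcal{O}((\log(1/\varepsilon))^{1/\beta})$, as discussed just before \cref{eq:complexity_K}. It then remains to control the Gaussian quadrature error on the finite interval $[-K,K]$.

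For the quadrature step I would work subinterval-by-subinterval. On each of the $2M$ intervals of length $h=K/M$, classical Gauss--Legendre quadrature with $Q$ nodes satisfies, in operator norm,
\[
\left\lVert\int_{mh}^{(m+1)h}\! g(k)\mathcal{U}(k,T)\,\mathrm{d}k \;-\;\sum_{q=0}^{Q-1}c_{q,m}\mathcal{U}(k_{q,m},T)\right\rVert \le \frac{(Q!)^4 \, h^{2Q+1}}{(2Q+1)\,[(2Q)!]^3}\,\sup_{k\in[mh,(m+1)h]}\bigl\lVert \partial_k^{2Q}\bigl[g(k)\mathcal{U}(k,T)\bigr]\bigr\rVert.
\]
The crucial operator-theoretic input is a derivative estimate for $\mathcal{U}(k,T)$ in the parameter $k$: differentiating the time-ordered exponential $n$ times produces a Dyson-type expansion of nested time-ordered integrals of $L(s)$, flanked by unitary propagators, which gives the uniform bound $\lVert\partial_k^{n}\mathcal{U}(k,T)\rVert \le (T\max_{0\le t\le T}\lVert L(t)\rVert)^{n}$. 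Combined with the Leibniz rule and the fact that $g$ and all its derivatives are uniformly bounded on $[-K,K]$ (indeed they inherit the super-polynomial decay from $f$), this controls the $2Q$-th derivative of the integrand.

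Finally, I would choose $Q=\Theta(\log(1/\varepsilon))$ and $M=\Theta(TK\max_{t}\lVert L(t)\rVert)$ so that $h\cdot T\max_{t}\lVert L(t)\rVert$ is a small fraction of $Q$; Stirling's approximation then forces each per-interval error to decay geometrically in $Q$, and summing over the $2M$ intervals yields total quadrature error below $\varepsilon$. The total node count becomes
\[
N \;=\; 2MQ \;=\; \mathcal{O}\bigl(TK\max_{t}\lVert L(t)\rVert\,\log(1/\varepsilon)\bigr) \;=\; \mathcal{O}\bigl(T\max_{t}\lVert L(t)\rVert\,(\log(1/\varepsilon))^{1+1/\beta}\bigr),
\]
after substituting the value of $K$ from the truncation step.

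The main obstacle I anticipate is establishing the clean and tight bound on $\lVert\partial_k^{n}\mathcal{U}(k,T)\rVert$ uniformly in $n$, because $L(s)$ and $H(s)$ generally do not commute and time-ordering must be respected throughout. The Dyson-series argument is the right vehicle: perturbing $k$ affects only the $L$-component of the generator, so each derivative simply inserts one copy of $L(s_j)$ into the ordered product while the residual propagators stay unitary and contribute $1$ to the norm. This sidesteps any BCH-type issue that complicated the proof of \cref{main_thm}; given the bound, the remainder of the argument is a routine adaptation of the composite Gauss--Legendre analysis already carried out in~\cite{ACL23} for the finite-dimensional setting, with $\max_{t}\lVert L(t)\rVert$ playing the role previously played by the matrix norm of $L$.
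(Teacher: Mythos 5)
Your proposal is correct and follows essentially the same route as the paper: truncation at $K=\mathcal{O}((\log(1/\varepsilon))^{1/\beta})$, per-subinterval Gauss--Legendre error controlled by the $2Q$-th derivative of the integrand with step size $h=\Theta(1/(T\max_t\lVert L(t)\rVert))$ and $Q=\Theta(\log(1/\varepsilon))$, giving $N=2MQ$. The only difference is that you reconstruct the per-interval bound from the Dyson-series derivative estimate $\lVert\partial_k^{n}\mathcal{U}(k,T)\rVert\le (T\max_t\lVert L(t)\rVert)^n$, whereas the paper simply cites Lemma~10 of~\cite{ACL23}, which packages exactly that estimate.
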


\begin{proof}
According to Lemma 10 in~\cite{ACL23}, the local error for each subinterval satisfies
\begin{equation}
    \norm{\int_{mh}^{(m+1)h} g(k)\mathcal{U}(k,T)\dd k - \sum_{q=0}^{Q-1}c_{q,m}\mathcal{U}(k_{q,m},T)} \le h^{2Q-1} A(\beta) \left(\frac{eT \max_{0 \le t \le T} \| L(t) \|}{2}\right)^{2Q} \le \frac{1}{2^{2Q}} A(\beta),
\end{equation}
where $A(\beta)$ is a constant depending only on $\beta$. By setting the step size as $h = \frac{1}{eT \max_{0 \le t \le T} \| L(t) \|}$, one can ensure that the global quadrature error is bounded by $\mathcal{O}(K \cdot 2^{-2Q})$. Thus, choosing
\begin{equation}
    Q = \mathcal{O}\left(\log\frac{K}{\varepsilon}\right) = \mathcal{O}\left(\log\frac{1}{\varepsilon}\right),
\end{equation}
and noting $K = \mathcal{O}(\log(1/\varepsilon)^{1/\beta})$ (\cref{eq:complexity_K}), yields the total number of nodes:
\begin{equation}
    N = 2M Q = \frac{2KQ}{h} = \mathcal{O}\left(T \max_{0 \le t \le T} \| L(t) \| \left(\log\frac{1}{\varepsilon}\right)^{1+1/\beta}\right).
\end{equation}
\end{proof}

 Once the circuit for $\mathcal{U}(k,T)$ is constructed, the summation $\sum_{m,q} c_{q,m} \mathcal{U}(k_{q,m}, T)$ can be implemented using the LCU technique. As shown in~\cite{LW18, ACL23}, the query complexity of $A(t)$ of a single run of the algorithm is $\widetilde{\mc O}(\alpha_A K T\log(1/\varepsilon)
 )$, 
where $\alpha_A \geq \| A(t) \|$ is the subnormalization factor. Due to the success probability of choosing the correct subspace in the post-selection step of LCU, the overall complexity should be multiplied with $\frac{\norm{u_0}}{\norm{u(T)}}$. Therefore, we have the following query complexity result:
\begin{corollary}
    In the Inf-LCHS-Gaussian implementation, we can solve the linear differential equation at time $T$ with error $\varepsilon$ using $\widetilde{\mathcal{O}}\left(\frac{\| u_0 \|}{\| u(T) \|} \alpha_A T \left(\log\frac{1}{\varepsilon}\right)^{1+1/\beta}\right)$ queries to the matrix input oracle for $A(t)$ and $\mathcal{O}\left(\frac{\| u_0 \|}{\| u(T) \|}\right)$ queries to the state preparation oracle of the initial state $\ket{u_0}$, and by choosing the total number of unitaries to be $N = \mathcal{O}\left(\alpha_A T  \left(\log\frac{\norm{u_0}}{\norm{u(T)}\varepsilon}\right)^{1+1/\beta}\right)$
\end{corollary}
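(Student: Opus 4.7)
The plan is to combine the Gaussian quadrature discretization from \cref{THM:Inf-LCHS-G} with a standard linear combination of unitaries (LCU) construction and amplitude amplification. First, I would invoke \cref{THM:Inf-LCHS-G} to replace the infinite Inf-LCHS integral by the finite sum $\sum_{m,q} c_{q,m}\mathcal{U}(k_{q,m},T)$ with $N = \mathcal{O}\bigl(\alpha_A T (\log(1/\varepsilon'))^{1+1/\beta}\bigr)$ terms, where $\varepsilon'$ is a tightened precision to be fixed once the amplification factor is known. Applied to $u_0$, this sum produces a classical approximation of $u(T)$ with additive error at most $\varepsilon'\|u_0\|$.

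Next, I would implement the weighted sum via LCU. The PREP oracle encodes the coefficient vector $(c_{q,m})$, whose $\ell_1$-norm is controlled by $\int_{\mathbb{R}}|g(k)|\,\mathrm{d}k$ up to quadrature error; using the decay and normalization properties of $f$ from \cref{assum: func}, this is $\mathcal{O}(1)$, so the LCU subnormalization factor is $\mathcal{O}(1)$. The SELECT oracle applies the controlled unitary $\mathcal{U}(k_{q,m},T)$. Each $\mathcal{U}(k,T)$ is the time-ordered exponential of $kL(s)+H(s)$, whose block-encoding subnormalization is at most $|k|\alpha_L+\alpha_H\le K\alpha_A$, so by the truncated Dyson series method of \cite{LW18,ACL23} a single $\mathcal{U}(k_{q,m},T)$ can be implemented with $\widetilde{\mathcal{O}}(\alpha_A K T\log(1/\varepsilon'))$ queries to the HAM-T oracle for $A(t)$. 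Substituting $K=\mathcal{O}((\log(1/\varepsilon'))^{1/\beta})$ from \cref{eq:complexity_K} gives $\widetilde{\mathcal{O}}(\alpha_A T(\log(1/\varepsilon'))^{1+1/\beta})$ oracle queries and one query to the state-preparation oracle for $\ket{u_0}$ per single circuit run.

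The post-selection step of LCU succeeds with probability $\Theta(\|u(T)\|^2/\|u_0\|^2)$, so oblivious amplitude amplification requires $\mathcal{O}(\|u_0\|/\|u(T)\|)$ repetitions to boost the success probability to a constant. Multiplying the per-run costs by this factor yields the first two claimed query complexities. To obtain the final error $\varepsilon$ after amplification, the per-step discretization tolerance must be chosen as $\varepsilon' = \Theta(\varepsilon\|u(T)\|/\|u_0\|)$, so that the amplification of the approximation error stays within $\mathcal{O}(\varepsilon)$; this substitution replaces $\log(1/\varepsilon')$ by $\log(\|u_0\|/(\|u(T)\|\varepsilon))$ throughout \cref{THM:Inf-LCHS-G} and produces the stated scaling for $N$.

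The main obstacle I anticipate is careful bookkeeping of the error budget across the three sources of approximation, namely tail truncation of the Inf-LCHS integral, composite Gaussian quadrature error, and the Hamiltonian-simulation error inside each SELECT call, together with the rescaling induced by amplitude amplification. In particular, one has to verify that the $\ell_1$-norm of the quadrature weights $c_{q,m}$ is uniformly bounded, not merely that the exact integral $\int g$ is normalized; this requires combining the decay estimate on $g(k)$ with standard positivity bounds on Gaussian-quadrature weights on bounded subintervals. Everything else should follow routinely from \cref{THM:Inf-LCHS-G} and the time-dependent Hamiltonian-simulation subroutines of \cite{LW18,ACL23}.
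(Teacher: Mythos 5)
Your proposal is correct and follows essentially the same route as the paper: invoke \cref{THM:Inf-LCHS-G} for the quadrature, implement the sum via LCU with each $\mathcal{U}(k_{q,m},T)$ realized by the truncated Dyson series of~\cite{LW18,ACL23} at cost $\widetilde{\mathcal{O}}(\alpha_A K T\log(1/\varepsilon))$, multiply by $\|u_0\\|/\|u(T)\|$ for post-selection, and rescale the precision to absorb that factor into the logarithm in $N$. The paper's own justification is in fact terser than yours; your additional bookkeeping of the $\ell_1$-norm of the quadrature weights and the error budget is a reasonable elaboration, not a departure.
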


\subsection{Inf-LCHS-MC: Monte Carlo Integration}
We adopt a Monte Carlo quadrature scheme, where
\begin{equation}
    \int_{-K}^K g(k) \mathcal{U}(k,T) \dd k \approx \frac{2K}{N_s}\sum_{i=1}^{N_s} g(\xi_i) \mathcal{U}(\xi_i, T) = \frac{2K}{N_s} \sum_{i=1}^{N_s} F(\xi_i, T) = 2K \cdot \langle F \rangle_{N_s}.
\end{equation}
Here we denote
\begin{equation}
    F(k,T):= g(k) \mc U(k,T)
\end{equation} for simplicity. The nodes $\{\xi_i\}_{i=1}^{N_s}$ are sampled uniformly from $[-K, K]$, and $N_s$ is the number of samples.

\begin{theorem}[Inf-LCHS-MC]\label{THM:Inf-LCHS-MC}
Let $\frac{2K}{N_s}\sum_{i=1}^{N_s} g(\xi_i) \mathcal{U}(\xi_i, T)$ denote the Monte Carlo approximation of the integral $\int_{-K}^K g(k) \mathcal{U}(k,T) \dd k$. To bound the error by $\varepsilon$, the total number of samples scales with 
$\mathcal{O}\left(\frac{1}{\varepsilon^2} \left(\log\frac{1}{\varepsilon}\right)^{2/\beta}\right).$
\end{theorem}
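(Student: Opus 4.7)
The plan is a standard Monte Carlo variance bound combined with Chebyshev's inequality, applied to the operator-valued integrand acting on the initial state, and then the substitution $K=\mathcal O(\log(1/\varepsilon)^{1/\beta})$ from \cref{eq:complexity_K} to obtain the stated scaling. Writing $\hat I_{N_s}:=\frac{2K}{N_s}\sum_{i=1}^{N_s} g(\xi_i)\mathcal U(\xi_i,T)$ with $\xi_i$ i.i.d.\ uniform on $[-K,K]$, my first step is to verify unbiasedness: $\mathbb{E}\,[2K\,g(\xi)\mathcal U(\xi,T)]=\int_{-K}^{K}g(k)\mathcal U(k,T)\,\mathrm{d}k$, so $\hat I_{N_s}$ is unbiased.

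Next I would control the mean-squared error on a fixed initial vector, which is the natural notion of error in the Inf-LCHS context (the LCU post-selection ultimately acts on $\psi_0$). Using independence and unitarity $\|\mathcal U(\xi,T)\psi_0\|=\|\psi_0\|$, the variance of one shifted sample satisfies
\begin{equation*}
\mathbb{E}\,\bigl\|\hat I_{N_s}\psi_0-\textstyle\int_{-K}^{K}g(k)\mathcal U(k,T)\,\mathrm{d}k\,\psi_0\bigr\|^2
\le\frac{1}{N_s}\,\mathbb{E}\,\bigl\|2K\,g(\xi)\mathcal U(\xi,T)\psi_0\bigr\|^2
=\frac{2K}{N_s}\Bigl(\textstyle\int_{-K}^{K}|g(k)|^2\,\mathrm{d}k\Bigr)\|\psi_0\|^2.
\end{equation*}
I would then bound $\int_{-K}^{K}|g(k)|^2\,\mathrm{d}k\le 2K\,\|g\|_\infty^2$, where $\|g\|_\infty$ is controlled by $\|f\|_\infty$ and the uniform lower bound $|1-\mathrm{i}k|\ge1$; by the analyticity/decay assumptions on $f$ in \cref{assum: func}, $\|g\|_\infty$ is an absolute constant. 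Hence the mean-squared error is $\mathcal{O}(K^2\|\psi_0\|^2/N_s)$.

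Applying Markov/Chebyshev, the sampling error is $\le\varepsilon$ with constant failure probability as soon as $N_s=\Omega(K^2/\varepsilon^2)$, and plugging in $K=\mathcal{O}(\log(1/\varepsilon)^{1/\beta})$ gives $N_s=\mathcal{O}\!\bigl(\varepsilon^{-2}(\log(1/\varepsilon))^{2/\beta}\bigr)$, as claimed. The main conceptual obstacle is the fact that $\mathcal U(k,T)$ is an operator-valued (and, in the unbounded setting, potentially only strongly continuous) random integrand, so a naive operator-norm concentration inequality is not available in infinite dimensions; my workaround is to measure error through the action on the fixed initial state $\psi_0$, which reduces the analysis to a scalar second-moment argument that still suffices for the LCU implementation described in \cref{sec:sample}. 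A minor technical point to handle is the $\mathcal{O}$ versus high-probability phrasing: a standard boosting argument (median-of-means over $\mathcal{O}(\log(1/\delta))$ independent MC estimators) upgrades the Chebyshev bound to a $1-\delta$ guarantee without affecting the stated $\varepsilon$-dependence.
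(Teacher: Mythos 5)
Your proposal is correct and follows essentially the same route as the paper: a second-moment Monte Carlo bound using $|g(k)|\le 1$ and unitarity of $\mathcal{U}(k,T)$ to get variance $\mathcal{O}(K^2/N_s)$, hence $N_s=\mathcal{O}(K^2/\varepsilon^2)$, followed by substituting $K=\mathcal{O}(\log(1/\varepsilon)^{1/\beta})$. Your refinement of measuring the error through the action on a fixed $\psi_0$ (rather than the paper's informal operator-valued variance $\mathrm{Var}(F)\le 1$) and your explicit Chebyshev/median-of-means treatment of the failure probability are minor but welcome tightenings of the same argument, not a different approach.
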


\begin{proof}
Let $I_{N_s} := \frac{2K}{N_s} \sum_{i=1}^{N_s} g(\xi_i) \mathcal{U}(\xi_i, T)$. We denote $F=g(X) \mc U(X,T)$ with $X\sim \mathrm{Uniform}([-K,K])$, then the variance of $I_{N_s}$ is 
\begin{equation}
    \mathrm{Var}(I_{N_s}) = \frac{4K^2}{N_s} \mathrm{Var}(F).
\end{equation}
Since $\|F(k)\| = |g(k)| \leq \frac{1}{\sqrt{1+k^2}} \leq 1$ and $\norm{F(k)^2} = \abs{g(k)}^2 \le \frac1{1+k^2}\le 1$, it follows that $\mathrm{Var}(F)  = \mathbb E (F^2) - (\mathbb E(F))^2\leq 1$. Therefore, the standard error of the mean is bounded by

\begin{equation}
   \sqrt{\mathrm{Var}(I_{N_s})} \leq \frac{2K}{\sqrt{N_s}}.
\end{equation}
Thus, to achieve error $\varepsilon$, we require $N_s = \mathcal{O}\left(\frac{K^2}{\varepsilon^2}\right)$. Since $K = \mathcal{O}\left(\log(1/\varepsilon)^{1/\beta}\right)$, the total complexity is 
\begin{equation}
    N = N_s = \mathcal{O}\left(\frac{1}{\varepsilon^2} \left(\log\frac{1}{\varepsilon}\right)^{2/\beta}\right).
\end{equation}
\end{proof}

From the view of the number of unitaries in the Inf-LCHS theorem, the Monte Carlo integration has a higher asymptotic complexity than Gaussian quadrature. However, it entirely circumvents the handling of unbounded operators. Even nicer, the scheme is inherently independent of the evolution time $T$ thanks to the fact that each    
$\mathcal{U}(\xi, T)$ is unitary. 
This property ensures robust scaling with respect to long evolution durations, making it especially useful in cases where $T$ is large, $A(t)$ is an unbounded operator or poorly bounded, or the norm $\norm{L(t)}$ is either inaccessible or hard to estimate. Some more explicit examples of the algorithmic applications regarding PDEs are given in \sec{applications}.

\section{Applications}\label{sec:applications}

We apply our Inf-LCHS theorem \cref{main_thm} to simulate non-Hermitian dynamics, including linear parabolic PDEs, queueing models (birth-or-death processes), Schrödinger equations with complex potentials, Lindblad equations, black hole thermal field equations, etc. We discuss the number of unitaries needed in the sampling methods for different dynamics and problem settings.

\subsection{Linear Parabolic PDEs}\label{apply: para}

Linear parabolic partial differential equations (PDEs) are broadly used to describe various evolution behaviors of physical phenomena, such as mass, energy, fluids, particles, etc. Typical examples include the transport equation, heat equation, advection-diffusion equation, Black-Scholes equation, etc. The parabolic PDE includes unbounded first-order and second-order differential operators, such as the Laplacian operator. 

We consider PDEs with the time variable $t \in [0,T]$ and spatial variable $x = (x_0,\cdots,x_{d-1}) \in [0,1]^d$. The PDE of parabolic type has the general form
\begin{align}
    \frac{\partial u(x,t)}{\partial t} = -\mathcal{L}[u(x,t)]
\end{align}
with
\begin{align}
    \mathcal{L}[u(x,t)] = -\sum_{i,j}\partial_i(a^{ij}(x,t)\partial_ju(x,t)) + \sum_{j}b^j(x,t)\partial_ju(x,t) + c(x,t)u(x,t),
\end{align}
given conditions $u(x,0) = u_0(x), ~x\in\Omega$ and $u(x,t) = 0, ~x\in\partial\Omega$ and $a^{ij}(x,t) = a^{ji}(x,t)$. Here $\partial_j u$ is the gradient of $u$ w.r.t. $x_j$. We call $\mathcal{L}$ the elliptic operator as it satisfies the elliptic condition
\begin{align}
    \sum_{i,j}a^{ij}(x,t) \xi_i\xi_j \ge \theta|\xi|^2, \quad \theta>0.
\end{align}
Steady-state solutions to hyperbolic and parabolic PDEs generally solve elliptic PDEs.

We rewrite the PDE operator in the form $$\mathcal{L} = -\mathrm{i}\left(\mathrm{i}\sum_{j}b^j (x, t)\partial_j) -\sum_{i,j}\partial_i(a^{ij}(x,t)\partial_j\right) + c(x, t).$$ Since $\mathcal{L} = \mathbf L + \mathrm{i}\mathbf H$, we have $\mathbf L= -\sum_{i,j}\partial_i(a^{ij}(x,t)\partial_j) + c(x,t)$, $\mathbf H = -\mathrm{i}\sum_{j}b^j (x, t)\partial_j$.

\begin{proposition}\label{prop: commut partialx}
     Let  $\mathcal{L} = \mathbf L + \mathrm{i}\mathbf H$, $Y(t)=\int_0^t \mathbf L\dd s$ and $X(t)=-\mathrm{i}\int_0^t \mathbf H \dd s$. If $a^{ij}(x,t), \  b^j(x,t)$ and $c(x,t)$ are in $C^2(\Omega\times[0, T])$, and differential operators $\mathcal{L}$ in the linear parabolic partial differential equations satisfy one of the following three conditions:
    \begin{enumerate}
        \item $a^{ij}(x, t)$ and $b^j (x, t)$ are independent with $x$, i.e. $a^{ij}(x, t) = a^{ij}(t)$ and $b^j (x, t) = b^j (t)$, $c(x, t)$ is a polynomial in variable $x$, and the solution is smooth function,
        \item $a^{ij}(x, t)=0$ and  $c(x, t)$ is a polynomial in variable $x$,
        \item $b^j (x, t)=0$,
    \end{enumerate}
 then operator $Z(X(t), wY(t)-X(t))$ exists and \cref{main_thm} holds when other requirements are satisfied.
\end{proposition}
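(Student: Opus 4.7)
The plan is to verify, in each of the three cases of the proposition, that the infinite series $Z(X(t), wY(t) - X(t))$ from \cref{def: Z function} defines a well-posed operator, so that condition (a) of \cref{main_thm} applies. By bilinearity of the Lie bracket and the identity $[X, X] = 0$, every nested bracket of $X$ and $wY - X$ can be rewritten as a polynomial in $w$ whose coefficients are nested brackets of $X(t)$ and $Y(t)$. The analysis then reduces to understanding the Lie algebra generated by $X(t)$ and $Y(t)$ arising from the specific form of $\mathcal{L}$ in each case.

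Case 3 ($b^j \equiv 0$) is immediate since $\mathbf{H} = 0$ forces $X(t) = 0$ and hence $Z(0, wY) = wY$. For Case 1 (coefficients $a^{ij}(t)$ and $b^j(t)$ independent of $x$ and $c$ polynomial of degree $n$ in $x$), I will decompose $Y(t) = Y_2(t) + Y_0(t)$ with $Y_2$ a second-order operator with $x$-constant coefficients and $Y_0 = C(x,t)$ multiplication by a polynomial of degree $n$. Since $X(t)$ and $Y_2(t)$ are both constant-coefficient differential operators in $x$, they commute, so non-vanishing nested brackets must involve $Y_0$. The key estimate is that commuting any differential operator with constant coefficients in $x$ against an operator whose coefficients are polynomial in $x$ of degree $\le d$ produces an operator whose coefficients are of degree $\le d-1$. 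Induction on bracket depth then shows all nested brackets of length at least $n+2$ vanish identically, so the BCH series terminates as a finite sum and $Z$ exists as a densely-defined operator.

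For Case 2 ($a^{ij} \equiv 0$, $c$ polynomial), $Y(t) = C(x,t)$ is a multiplication while $X(t) = -\sum_j B^j(x,t)\partial_j$ with $B^j = \int_0^t b^j \, \mathrm{d}s$. Two structural observations drive the argument: first, $[Y, M_f] = 0$ for every multiplication operator $M_f$, since $Y$ is itself a multiplication; second, $[X, M_f] = M_{-\sum_j B^j \partial_j f}$, which preserves the class of multiplication operators. Consequently every nested bracket either vanishes or collapses to $\mathrm{ad}_X^k(Y) = M_{D^k C}$ with $D = -\sum_j B^j \partial_j$, and $Z(X, wY - X)$ reduces to the generators $wY - X$ plus a power series in $w$ of multiplication operators $M_{D^k C}$. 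The main obstacle lies precisely here: unlike Case 1, the reduction does not terminate after finitely many brackets, so the existence of $Z$ must be established by interpreting it on a suitable dense invariant subspace of smooth vectors contained in $X_0$ from \cref{assum: domain}, where each multiplication kernel $D^k C$ is a well-defined smooth function built from the $C^2$ coefficients $B^j$ and the polynomial $C$. Once $Z$ is established in all three cases, the conclusion follows by invoking \cref{main_thm}(a), with the remaining standing hypotheses—positivity $L(s) \succeq \lambda_0 > 0$, the domain condition \cref{assum: domain}, and the kernel assumption \cref{assum: func}—supplied by the hypotheses of the proposition together with the assumed $C^2$-regularity of $a^{ij}, b^j, c$.
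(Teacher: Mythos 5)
Your Cases~1 and~3 follow the paper's own route: the same splitting of $Y(t)$ into a constant-coefficient differential part and a polynomial multiplication part, the same observation that the constant-coefficient pieces commute with $X(t)$, and the same degree-reduction mechanism $[\partial_j,\mathcal{M}_p]=\mathcal{M}_{\partial_j p}$ that the paper uses to truncate the series \cref{def: Z function} to a finite sum. (A caveat you share with the paper: when $Y_2\neq 0$ and $\deg c\ge 2$, a bracket such as $[\mathcal{M}_c,[Y_2,\mathcal{M}_c]]$ is again a multiplication by a polynomial of degree $\deg c$, e.g.\ $[\,x^2,[\partial^2,x^2]\,]=-8x^2$ in one dimension, so the assertion that all brackets of depth $\ge n+2$ vanish requires more care than either argument supplies.)

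The genuine departure, and the genuine gap, is Case~2. You are right that when $b^j$ depends on $x$ the surviving brackets $\mathrm{ad}_X^k(Y)=\mathcal{M}_{D^kC}$ with $D=-\sum_j B^j\partial_j$ need not terminate: $B^j\partial_j c$ is not a polynomial of lower degree, so the paper's ``same reason as in the first case'' does not apply, and your diagnosis is sharper than the paper's own treatment. But your proposed repair does not establish the existence of $Z$, which is exactly what condition~(a) of \cref{main_thm} requires. First, $D^kC$ involves derivatives of $B^j$ up to order $k-1$, so under the stated $C^2$ regularity the kernels are not even defined for $k\ge 4$; the claim that each $D^kC$ is ``a well-defined smooth function built from the $C^2$ coefficients'' fails. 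Second, even with smooth coefficients, you must sum the series $\sum_k \beta_k\,\mathcal{M}_{D^kC}$ with the BCH coefficients $\beta_k$ (essentially Bernoulli numbers divided by $k!$), and restricting to a dense invariant subspace does not make this converge: taking $b^j(x,t)=x^2$ and $c(x,t)=x$ on $[0,1]$ gives $D^kC\propto k!\,x^{k+1}$, for which the resulting numerical series diverges at every $x\neq 0$. As written, your Case~2 shows that every term of the series is a multiplication operator but not that the series defines an operator, so the hypothesis of condition~(a) is not verified; completing the argument would require either strengthening the regularity and growth assumptions on $b^j$ or a different route to analyticity of $\mathbf{F}(z)$ that does not pass through the BCH series.
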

\begin{proof}
 We prove by verifying condition \ref{condition1}. It suffices to prove the operator $Z(X(t), Y(t))$ exists. In this condition, $X(t)=-\mathrm{i}\int_0^t \mathbf H \dd s=-\mathrm{i}\sum_{j}\int_0^tb^j (x, s)\dd s\partial_j$, $Y(t)=\int_0^t \mathbf L\dd s = -\sum_{i,j}\partial_i(\int_0^t a^{ij}(x,s)\dd s\partial_j) + \int_0^tc(x,s)\dd s$ since $a^{ij}$ are in $C^2$. Obviously, if coefficient functions satisfy one of the three conditions in \cref{prop: commut partialx}, the integral of them($\int_0^t a^{ij}(x,s)\dd s,\ \int_0^tb^j (x, s)\dd s$ and $\int_0^tc(x,s)\dd s$) also satisfy that condition. Therefore, if we treat the integral of the coefficient functions as new coefficient functions, while neglecting the constant $-\mathrm{i}$, which does not affect the commutativity, we only need to prove that the operator $Z(L(t), H(t))$ exists without loss of generality. For arbitrary polynomial multiply operator $\mathcal{M}_{p(x)}$, its Lie bracket with self-adjoint operator $\mathrm{i}\frac{\partial}{\partial x}$ equals $\mathcal{M}_{\mathrm{i}\frac{\partial p(x)}{\partial x}}$, since
\begin{align}\label{lemma: first order partial commutant}
      \left[\mathrm{i}\frac{\partial}{\partial x}, \mathcal{M}_{p(x)}\right]u\notag
      = &\mathrm{i}\frac{\partial}{\partial x}\mathcal{M}_{p(x)} u - \mathcal{M}_{p(x)}\mathrm{i}\frac{\partial}{\partial x}u\\\notag
      = &\mathrm{i}\frac{\partial (p(x) u)}{\partial x} - p(x) \mathrm{i}\frac{\partial u}{\partial x}\\\notag
      = &\mathrm{i}\left(\frac{\partial p(x) }{\partial x} u + p(x) \frac{\partial  u}{\partial x} - p(x) \frac{\partial u}{\partial x}\right)\\
      = & \mathcal{M}_{\mathrm{i}\frac{\partial p(x)}{\partial x}}u.
\end{align}
    \begin{enumerate}
    
\item 
For the case when $a^{ij}(x, t)$ and $b^j (x, t)$ are independent with $x$, i.e., $a^{ij}(x, t) = a^{ij}(t)$ and $b^j (x, t) = b^j (t)$, if the solution is smooth function, then $[\sum_{i,j}\partial_i(a^{ij}(t)\partial_j), \mathrm{i}\sum_{j}b^j ( t)\partial_j]=0$ and $Z\left(\sum_{i,j}\partial_i(a^{ij}(t)\partial_j), \mathrm{i}\sum_{j}b^j ( t)\partial_j\right) = \sum_{i,j}\partial_i(a^{ij}(t)\partial_j) + \mathrm{i}\sum_{j}b^j ( t)\partial_j$. Thus  $$
Z\left(\sum_{i,j}\partial_i(a^{ij}(t)\partial_j) + \mathcal{M}_{c(x,t)}, \mathrm{i}\sum_{j}b^j ( t)\partial_j\right) = \sum_{i,j}\partial_i(a^{ij}(t)\partial_j) + \mathrm{i}\sum_{j}b^j ( t)\partial_j + Z\left(\mathcal{M}_{c(x,t)}, \mathrm{i}\sum_{j} b^j(t) \partial_j\right).
$$ 
Since the elements in the nested Lie bracket terms of $Z\left(\mathcal{M}_{c(x,t)}, \mathrm{i}\sum_{j} b^j(t) \partial_j\right)$ are either $\partial_j$ or $\mathcal{M}_{c(x,t)}$, and all the nested Lie bracket terms contain $[\sum_{i,j}\partial_i(a^{ij}(t)\partial_j),  \mathcal{M}_{c(x,t)}]$ which is equal to multiply operator $ \mathrm{i}\sum_{j} b^j(t) \mathcal{M}_{\partial_j c(x,t)}$, as shown in \cref{lemma: first order partial commutant}, the nested Lie brackets will evaluate to zero if the number of operators exceeds the degree of the polynomial $c(x,t)$, or if there are more than one multiplying operator $\mathcal{M}_{c(x,t)}$, since it commutes with $\mathcal{M}_{\partial_j c(x,t)}$. Therefore, $Z(\mathrm{i}\sum_{j} b^j(t) \partial_j, \mathcal{M}_{c(x,t)})$ can be expressed as a finite sum of nested Lie brackets, which guarantees its existence.

\item If $a^{ij}(x, t)=0$, for the same reason in the first case, $Z\left(\mathrm{i}\sum_{j} b^j(x,t) \partial_j, \mathcal{M}_{c(x,t)}\right)$ can be expressed as a finite sum of nested Lie brackets since the Lie brackets between $\mathrm{i}\sum_{j} b^j(x,t) \partial_j$ and $\mathcal{M}_{c(x,t)}$ can be expressed in the form of a multiply operator $\mathrm{i}\sum_{j} b^j(x,t) \mathcal{M}_{\partial_j c(x,t)}$.
\item When $b^j(x, t)=0$, then, $\mathcal{L} = \mathbf L$ is self-adjoint operator, thus $Z(X(t), Y(t))= Y (t)$ exists.
    \end{enumerate}
\end{proof}
By $L(t) \succ0$, $\mathcal{L} = \mathbf L + \mathrm{i}\mathbf H$ and \cref{main_thm}, we can simulate the solution of parabolic PDEs
\begin{align}
    u(x,t) =  \int_{\mathbb{R}} \frac{f(k)}{1-\mathrm{i} k} \mathcal{T} e^{-\mathrm{i} \int_0^t(k \mathbf L(s)+\mathbf H(s)) \mathrm{d} s} \mathrm{~d} k u_0(x)
\end{align}
via the Inf-LCHS theorem.

We examine the number of unitaries required in the Inf-LCHS theorem for simulating linear parabolic partial differential equations (PDEs), utilizing the results presented in \cref{sec:sample}. Specifically, for the Inf-LCHS-Gaussian scheme (\cref{THM:Inf-LCHS-G}), the first step is to discretize the spatial variable in the differential operator. We take the central finite difference scheme as the example. Let us assume that the hypercube $[0,1]^d$ is partitioned into an equidistant grid, with each dimension divided into $N_{\rm grid}$ grid points. Consequently, the grid spacing is given by $h = \frac{1}{N_{\rm grid} - 1}$, and the total number of grid points is $N_{\rm grid}^d$. Denote each grid point as $\boldsymbol{\xi}_k = (\xi_1^k, \xi_2^k, \dots, \xi_d^k)$. According to the central finite difference scheme, the discretized form of the partial differential operator takes the following matrix representation:
\begin{equation}\label{eq:discretize}
    \mathbf L_{\rm grid}  = [\partial_i (a^{ij} \partial_j u) - c]_{\rm grid} = \frac{1}{h^2} \left(a^{ij}_{\boldsymbol{\xi}_k + h \mathbf{e}_i} \left( u_{\boldsymbol{\xi}_k + h \mathbf{e}_i + h \mathbf{e}_j} - u_{\boldsymbol{\xi}_k + h \mathbf{e}_i} \right) - a_{\boldsymbol{\xi}_k}^{ij} \left( u_{\boldsymbol{\xi}_k + h \mathbf{e}_j} - u_{\boldsymbol{\xi}_k} \right) \right) - c_{\boldsymbol{\xi}_k},
\end{equation}
where $a^{ij}_x := a^{ij}(x,t)$, $c_x := c(x,t)$ and $u_x := u(x,t)$ are evaluated at the grid points. From \cref{eq:discretize}, it is evident that the norm of the discretized operator behaves as
\begin{equation}
    \norm{\mathbf L_{\rm grid} } = \mathcal{O}\left(\frac{1}{h^2}\right) = \mathcal{O}(N_{\rm grid}^2), \quad \forall t \in \mathbb{R}^+.
\end{equation}
According to \cref{THM:Inf-LCHS-G}, the total number $N$ of Schr\"odinger equations required to simulate the system scales with
\begin{equation}
    N = \mathcal{O} \left( T \max_{0 \le t \le T} \| \mathbf L_{\rm grid} (t) \| \left( \log \frac{1}{\varepsilon} \right)^{1 + 1/\beta} \right) = \mathcal{O}\left( T N_{\rm grid}^2 \left( \log \frac{1}{\varepsilon} \right)^{1 + 1/\beta} \right).
\end{equation}
We should note that the number $N_{\rm grid}$ depends on the specific discretization method~\cite{AL19,CLLL22, JLY22, JLY23,HJZ23}. 
These different methods may introduce additional errors, the specific manifestations of which depend on the particular form and properties of the partial differential operators.

In contrast, for the Inf-LCHS-MC scheme, as demonstrated in \cref{THM:Inf-LCHS-MC}, the total number $N$ of Schr\"odinger equations scales with
\begin{equation}
    N = \mathcal{O} \left( \frac{1}{\varepsilon^2} \left( \log \frac{1}{\varepsilon} \right)^{2/\beta} \right).
\end{equation}
We observe that, although the Inf-LCHS-MC scheme generally exhibits a worse asymptotic complexity, it is applicable to unbounded operators and avoids the need to estimate the norm of the operator $\mathbf L$. Even when the partial differential operator is discretized in the spatial variable, it typically has a large norm that scales with $N_{\rm grid}^2$ and is highly sensitive to the behavior of the coefficients $a(x,t)$ and $c(x,t)$, which can sometimes be hard to estimate.

\subsection{Queueing Models (Birth-death Processes)}

We consider a general infinitity-dimensional continuous-time Markov chain (CTMC) such as the birth-death processes, with extensive applications in the queueing models. In the queueing theory, we can describe a service system by a state space $\{0,1,2,3,\ldots\}$, where the value corresponds to the number of customers in the system, including any currently in service. The model can be described as a CTMC
\begin{equation}
    \frac{\mathrm{~d} \pi}{\mathrm{~d} t} = \pi Q,
\end{equation}
with a transition rate matrix $Q$, whose each row sums to zero.

For example, the M/M/1 model represents a single-server queue where arrivals occur at rate $\lambda$ according to a Poisson process and move the process from state $i$ to state $i+1$, and service times have an exponential distribution with the mean service time $1/\mu$. The transition rate matrix $Q$ is
\begin{equation}
    \left( \begin{array}{cccccccccc}
        -(\lambda+\mu) & \lambda & & & & \\
        \mu & -(\lambda+\mu) & \lambda & & & \\
        & \ddots & \ddots & \ddots & & \\
        & & \mu & -(\lambda+\mu) & \lambda & \\
        & & & \ddots & \ddots & \ddots
    \end{array} \right).
\end{equation}
This is the same continuous-time Markov chain as in a birth-death process. 

Let\begin{equation}
   \mathcal{N} =  \left( \begin{array}{cccccccccc}
                   0 & 1& & & & \\
                     & 0 & 1 & & & \\
                    & & \ddots & \ddots & & \\
                    & & & 0 & 1 & \\
                    & & & & \ddots & \ddots
                \end{array} \right) \text{and}\ \mathcal{I} =  \left( \begin{array}{cccccccccc}
                   1 & & & & & \\
                     & 1 & & & & \\
                    & & \ddots & & & \\
                    & & & 1 &  & \\
                    & & & & \ddots & \end{array} \right).
\end{equation} 
We can write the real part and the imaginary part of $Q = \mathbf L + \mathrm{i} \mathbf H$ as $\mathbf L=(\mu+\lambda)(-\mathcal{I} + \frac{1}{2}\mathcal{N}+ \frac{1}{2}\mathcal{N}^\dagger$) and $\mathbf H=\frac{1}{2}(\mu-\lambda)(\mathrm{i}\mathcal{N} -\mathrm{i}\mathcal{N}^\dagger$). Since all the infinite-dimensional matrices $\mathcal{I},\mathcal{N}$ and $\mathcal{N}^\dagger$ are sparse-1 matrices with elements have the uniform upper bound, thus $\mathbf L$ and $\mathbf H$ are bounded operators, which satisfy condition \ref{condition2} and the self-adjointness condition in \cref{adjoint_assum}.

For the M/M/c model as the generalization of the M/M/1 model, which considers only a single server, the transition rate matrix $Q$ is
\begin{equation}
    \left( \begin{array}{cccccccccc}
        -(\lambda+\mu) & \lambda & & & & \\
        2\mu & -(\lambda+2\mu) & \lambda & & & \\
        & \ddots & \ddots & \ddots & & \\
        & & c_n\mu & -(\lambda+c_n\mu) & \lambda & \\
        & & & \ddots & \ddots & \ddots 
    \end{array} \right),
\end{equation}
where $c_n=n$ is $n\leq c$, $c_n=c$ if $n>c$. Thus, the real and imaginary parts of $Q$ is:
\begin{equation}
   \vb{L} =  \left( \begin{array}{cccccccccc}
        -(\lambda+\mu) & \frac{\lambda + 2\mu}{2} & & & & \\
        \frac{\lambda + 2\mu}{2} & -(\lambda+2\mu) & \frac{\lambda + 3\mu}{2} & & & \\
        & \ddots & \ddots & \ddots & & \\
        & & \frac{\lambda + c_n\mu}{2} & -(\lambda+c_n\mu) & \frac{\lambda + c_{n+1}\mu}{2} & \\
        & & & \ddots & \ddots & \ddots 
    \end{array} \right)
\end{equation} 
and 
\begin{equation}
\vb {H} = \frac{1}{2\mathrm{i}} \left( \begin{array}{cccccccccc}
        0 & \lambda - 2\mu & & & & \\
        2\mu -\lambda & 0 & \lambda - 3\mu & & & \\
        & \ddots & \ddots & \ddots & & \\
        & & c_n\mu -\lambda & 0 & \lambda - c_{n+1}\mu& \\
        & & & \ddots & \ddots & \ddots 
    \end{array} \right).
\end{equation} 
Similar to the reason in the M/M/1 model above, $\mathbf L$ and $\mathbf H$ are bounded operators, satisfying condition \ref{condition2} and \cref{adjoint_assum}. Therefore, \cref{main_thm} holds for all the M/M/c queue models if $\mathbf L\succ0$.

For the M/M/1 and M/M/c models, since $\mathbf L$ is always a bounded operator, it follows directly from \cref{THM:Inf-LCHS-G} that the number of unitaries in the Inf-LCHS theorem using Gaussian quadrature has the scaling $N\in \mc O(T\norm{\mathbf L} (\log 1/\varepsilon)^{1+1/\beta})$. Here $\norm{\mathbf L} \le 2(\lambda + \mu)$ for the M/M/1 model and $\norm{\mathbf L} \le 2(\lambda + c\mu)$ for the M/M/1 model. Similarly, under the settings of the Inf-LCHS-MC scheme (\cref{THM:Inf-LCHS-MC}), the number $N$ scaling with $\mc O([\log(1/\varepsilon)]^{2/\beta}/\varepsilon^2)$ is independent with parameters $\lambda$ and $\mu$.

\subsection{Schrödinger Equations with Complex Potentials}\label{apply: schro}

We consider the real-space Schrödinger equation in the semi-classical regime~\cite{JMS11,LasserLubich2020,JLL21, FangVilanova2023, BornsWeilFang2025}
\begin{align}\label{equ: schrodinger equ}
    \mathrm{i}\hbar\frac{\partial u(x,t)}{\partial t} = \left[-\frac{\hbar^2}{2}\Delta+V(x,t)\right] u(x,t), \qquad (x,t)\in \mathbb{R}^d\times\mathbb{R}^+,
\end{align}
where $\hbar \ll 1$. There are various notable examples, such as the Bohn-Oppenheimer approximation and the Erhenfest dynamics, whereas $\hbar^2$ is the ratio between the electron and nuclei mass. When dividing $\hbar$ on both sides and taking $\hbar \mapsto 0$, both the kinetic operator $\frac{\hbar}{2}\Delta$ and potential operator $\frac{1}{\hbar}V(x,t)$ behave as unbounded operators.

To simulate the dynamics in the finite domain $D\subset\mathbb{R}^d$ rather than the entire space, we suitably develop Artificial Boundary Conditions (ABCs) to prevent boundary reflections. For instance, the Complex Absorbing Potential (CAP) method~\cite{VibokBalintKurti1992, Child1991, MugaPalaoNavarroEtAl2004}, we replace the real potential by a complex one $V_R+\mathrm{i}V_I$, resulting in the equation
\begin{align}
    \mathrm{i}\hbar\frac{\partial u(x,t)}{\partial t} = \left[-\frac{\hbar^2}{2}\Delta+V_R(x,t)+\mathrm{i}V_I(x)\right] u(x,t),\qquad (x,t)\in D\times\mathbb{R}^+.
\end{align}
This formulation can be naturally generalized to multi-particle Schrödinger equations.
Other ABCs, such as the perfectly matched layer (PML) method and Dirichlet-to-Neumann (DtN) method, can be formulated similarly~\cite{JLLY23,JLLY24}.

In regards to the Inf-LCHS theorem, we take $A = \mathbf L + \mathrm{i}\mathbf H$ with $\mathbf L = - \frac{1}{\hbar}V_I(x)$ and $\mathbf H = -\frac{\hbar}{2}\Delta + \frac{1}{\hbar}V_R(x,t)$. Additionally, we define $Y(t)=\int_0^t \mathbf L(s)\mathrm{~d}s =- \frac{t}{\hbar}V_I(x)$ and $X(t)=-\int_0^t \mathbf H(s)\mathrm{~d}s = -\frac{t\hbar}{2}\Delta + \frac{1}{\hbar}\int_0^t V_R(x,s)\mathrm{~d}s$. 
Assume that the real part $V_I(x)$ is a continuous function with compact support for $x\in\mathbb{R}^d$, and that $V_R(x,t)$ is an arbitrary potential function. The compact support condition is a standard assumption in CAP methods~\cite{MugaPalaoNavarroEtAl2004}.  In this case, $V_I(x)$ is a bounded multiplication operator, which satisfies condition \ref{condition2} in \cref{main_thm}. 
Therefore, for a broad class of $V_I(x)$ with compact support, the explicit solution to the initial value problem, where $u(x,0) = u_0$, can be expressed according to the Inf-LCHS theorem as follows:
\begin{align}
    u(x,t) = \Bigl( \int_{\mathbb{R}} \frac{f(k)}{1-\mathrm{i} k} \mathcal{T} e^{-\mathrm{i} \int_0^t(k \mathbf L(s)+\mathbf H(s)) \mathrm{d} s} \mathrm{~d} k \Bigr) ~u_0,
\end{align}
as long as the other assumptions of Inf-LCHS theorem are satisfied. To facilitate the reader's understanding, we demonstrate the satisfiability of all assumptions for a specific Schrödinger equation in \cref{APD 2}.

However, compact support of the potential is not the sole condition under which the Inf-LCHS theorem holds. We can also consider the case outlined in condition \ref{condition1} of \cref{main_thm}. For example, consider the scenario where the real part $V_I(x)$  is a linear unbounded potential, and $V_R(x,t)$ is a polynomial in $x$. This type of potential is often considered within the context of parity-time ($\mc P\mc T$)-symmetric quantum theory, whose discovery and implementation can give rise to entirely new and unexpected phenomena. As a result, it has generated significant interest in the study of non-Hermitian systems both experimentally and theoretically~\cite{ElGanainyMakrisKhajavikhanEtAl2018}. In this case, we can show that the operator $Z(X(t),Y(t))$ exists, and that \cref{main_thm} holds provided that the other necessary conditions are satisfied.
    We first consider the one-dimensional case, where $x\in\mathbb{R}$. Suppose the solution is $u(x, t)$ and $V_I(x) = x+c$. Then, we compute the following: 
    \begin{align*}
        [\Delta, \mathcal{M}_{V_I(x)}] u = &\partial_x(V_I^\prime(x) u + V_I(x) u^\prime) - V_I(x)u^{\prime\prime}\\
        = &V_I^{\prime\prime}(x)u + 2V_I^\prime(x)\partial_x u\\
        = & (\mathcal{M}_{V_I^{\prime\prime}(x)} + 2V_I^\prime(x)\partial_x)u\\
        = & 2\partial_x u.
    \end{align*}
    Thus, the condition simplifies to condition 1 in \cref{prop: commut partialx}. By \cref{lemma: first order partial commutant}, $Z(X(t), Y(t))$ can be expressed as a finite sum of nested Lie brackets. The calculation in the higher-dimensional case follows similarly, with the computation being carried out for each dimension independently.

To estimate the number of unitaries needed in the Inf-LCHS theorem, or the number of Schrödinger equations needed in the decomposition of the original differential equation, we apply the results in \cref{sec:sample}. For Inf-LCHS-Gaussian (\cref{THM:Inf-LCHS-G}), we note that $\mathbf L = \mathcal{M}_{V_I(x)}$ is time-independent and we have
\begin{equation}
    \norm{\mathbf L_{\rm grid}} = \norm{\mathrm{diag} (\{V_{I}(\xi_i)\}_{1\le i \le N_{\rm grid}})} \le \max_{1\le i\le N_{\rm grid}}\abs{V_I(\xi_i)} =: \norm{V_I}_{\Xi} <\infty ,
\end{equation}
since $D \subset \mathbb{R}^d$ is a finite domain. Here, $\mathbf L_{\text{grid}}$ denotes the matrix obtained by discretizing the spatial variable using the set of grid points $\Xi = \{ \xi_i \}_{1 \leq i \leq N_{\text{grid}}}$. According to \cref{THM:Inf-LCHS-G}, the number of Schrödinger equations required for the Inf-LCHS decomposition scales with
\begin{equation}
    N = \mc O \left(T \norm{V_I}_{\Xi} \left(\log \frac1\varepsilon\right)^{1+1/\beta}\right).
\end{equation}
Similarly, for the Inf-LCHS-MC scheme (\cref{THM:Inf-LCHS-MC}), the total number of Schrödinger equations for simulating the PDE scales with 
\begin{equation}
    N = \mc O \left(
    \frac1{\varepsilon^2} \left(\log \frac1\varepsilon\right)^{2/\beta}
    \right),
\end{equation}
which has worse asymptotic complexity than the Inf-LCHS-Gaussian scheme. However, it eliminates the dependency on the multiplication operator $V_I(x)$ or the simulation time $T$, which can be poorly bounded or highly non-trivial to estimate.

\subsection{Lindblad Equations}

The Gorini--Kossakowski--Sudarshan--Lindblad (GKSL) quantum master equation is widely used for modeling Markovian open quantum systems, where the system is weakly coupled to a sufficiently large bath. Moreover, the Lindblad equation has emerged as a powerful algorithmic tool for applications such as quantum Gibbs state~\cite{ChenKastoryanoBrandãoEtAl2023,ChenKastoryanoGilyén2023, BrunnerCoopmansMatosEtAl2024} and ground state preparation~\cite{DingChenLin2024,LiZhanLin2024}, attracting significant attention in both quantum algorithms and quantum many-body physics in recent years.

Mathematically, the Lindblad equation has a remarkably broad range of applicability, including infinite-dimensional cases. In its most general form, the Lindbladian can be defined as the generator of a \emph{quantum dynamical semigroup} in general. Specifically, let $\mathcal{H}$ be a Hilbert space, $\mathcal{A}\subset \mathcal{B}(\mathcal{H})$ a von Neumann algebra, and $\omega$ be a faithful $\sigma$-weakly continuous strictly  positive tracial state on $\mathcal{A}$. Then we can define the Hilbert-Schmidt inner product as $\langle a,b\rangle:=\omega(b^\ast a)$ for $a,b\in \mathcal{A}$, and $(\mathcal{A},\langle\cdot,\cdot\rangle)$ is a Hilbert space.  The GKSL theorem claims that a bounded linear map $\mathcal{L}\in \mathcal{B}(\mathcal{A})$ is the generator of a uniformly continuous, conservative quantum Feller semigroup composed of normal maps if and only if it takes the following form: 
\begin{equation}
    \mathcal{L}(a) = -\mathrm{i}[H,a] -\frac{1}{2}(L^\ast La -2L^\ast(a\otimes I) L + aL^\ast L ),\quad \forall a\in\mathcal{A},
\end{equation}
Here $H=H^\ast$ is a self-adjoint operator in $\mathcal{A}$ and $L\in\mathcal{B}(\mathcal{H},\mathcal{H}\otimes \mathcal{K})$ for some Hilbert space $\mathcal{K}$. We denote 
\begin{equation}
    \mathbf H(a) = [H,a],\quad \mathbf L(a) = \frac{1}{2}(L^\ast L a- 2L^\ast (a\otimes I) L + aL^\ast L), 
\end{equation} then $\mathbf H$ and $\mathbf L$ are both self-adjoint operators in $\mathcal{B}(\mathcal{A})$ with respect to the Hilbert-Schmidt inner product. And self-adjointness condition in \cref{adjoint_assum} also holds, as both $\mathbf H$ and $\mathbf L$ are bounded. To see $\mathbf L\succ 0$, we write
$
    \mathbf L (a) =  \frac{1}{2}(\Phi (\mathbf{1}) a +a \Phi (\mathbf 1)-2\Phi(a))
$
according to the proof of Theorem 4.55 in~\cite{Bahns}, where $\Phi:\mc A\to \mc A, a\mapsto L^\ast (a\otimes I) L$
is completely positive and normal. Since $\Phi^\dagger$ is a positive mapping and $\omega$ is a strictly positive tracial state, we note that $ \omega(a^\ast \mathbf L(a)) = \omega(\Phi^\dagger(a^\ast a+aa^\ast)/2) > 0$. By $\mathcal{L} = -(\mathbf L + \mathrm{i}\mathbf H)$ and condition \ref{condition2} in \cref{main_thm}, we obtain the solution formula of Lindblad equation with the initial state $a_0\in \mc A$:
\begin{equation}
    a_t  =\int_{\mathbb{R}} \frac{f(k)}{1-\mathrm{i}k} e^{-\mathrm{i} (k\mathbf L + \mathbf H) t}[a_0]\dd k .
\end{equation}

In practice, numerical methods for simulating Lindblad dynamics, such as the quantum jump method and the quantum state diffusion method, also incorporate non-Hermitian dynamics to model Markovian open quantum systems. Additionally, there is considerable interest in the generalized GKSL equation with unbounded generators~\cite{SiemonHolevoWerner2017}. Notably, our Inf-LCHS theorem can be extended to address these cases as well.

For the complexity regarding the number of Schr\"odinger equations, by \cref{THM:Inf-LCHS-G} and \cref{THM:Inf-LCHS-MC}, the number $N$ scales $\mc O(T\norm{\mathbf L} (\log 1/\varepsilon)^{1+1/\beta})$ with $\norm{\mathbf L} \le 2\norm{ L}^2$ and $\mc O([\log (1/\varepsilon)]^{2/\beta} /\varepsilon^2)$ for Inf-LCHS-Gaussian and Inf-LCHS-MC methods, respectively.

\subsection{Black Hole Thermal Field Equations}

Quantum field theory provides a consistent framework for relativistic quantum systems by replacing wavefunctions with field operators \( \hat{\phi}(x) \), satisfying  
\[
[\hat{\phi}(x), \hat{\phi}(y)] = 0, \quad (x - y)^2 < 0.
\]
This enforces causality while treating particles as excitations of fields, making quantum field theory essential for relativistic quantum systems. Relativistic quantum field theory (RQFT) solves causality violations in single-particle quantum mechanics when the amplitude \( A  = \langle x | e^{-\mathrm i H t} | x = 0 \rangle\) is  nonzero for a particle to travel outside its forward light cone (i.e., move to a space-like separated point where \( |x| > t \)).

Our Inf-LCHS theorem also holds for the black hole thermal field equations in RQFT. The evolution of black hole states under Hawking radiation has been a subject of extensive investigation, particularly concerning the implications for information loss. In scenarios where the black hole evaporation process is non-unitary, the time-evolution operator takes the form 
    \begin{equation}
        U(t) = e^{(-\mathrm{i} H-\gamma) t},
    \end{equation}
    where \( H \) is the standard Hermitian Hamiltonian governing the unitary dynamics, and \( \gamma \) is a positive decay rate associated with information loss. This modification leads to an exponential suppression of quantum coherence, thereby altering the entropy evolution of the black hole.

The quantum state \( \psi(t) \) describing the black hole evolution in a non-unitary scenario satisfies the modified Schrödinger equation with the initial vector $\psi(0)=\psi_0$:
\begin{equation}
    \dv t \psi(t) = -\mathrm{i} H \psi(t) - \gamma \psi(t) = -\mathcal{L}\psi(t).
\end{equation}
The additional term \( -\gamma \psi(t) \) introduces an irreversible decay of quantum information, which may be interpreted as an effective description of black hole information loss. 

We rewrite the PDE operator in the form $\mathcal{L} = \mathrm{i}H + \gamma I$. Since $\mathcal{L} = \mathbf L + \mathrm{i}\mathbf H$, we have $\mathbf L= \gamma I$, $\mathbf H = H$. If $H$ is self-adjoint, it is easy to verify that \cref{adjoint_assum} holds. In this case, since $\mathbf L= \gamma I$ is a bounded operator satisfying condition \ref{condition2}, and $\gamma$ is positive, by \cref{main_thm}, we have 
\begin{equation}
   \psi(t) = U(t)\psi_0 =\int_{\mathbb{R}} \frac{f(k)}{1-\mathrm{i}k} e^{-\mathrm{i}(k\gamma I + H) t}\dd k\psi_0.
\end{equation}

The complexity associated with the number of Schrödinger equations can be characterized using \cref{THM:Inf-LCHS-G} and \cref{THM:Inf-LCHS-MC}, which yield that the number $N$ scales as  
$\mathcal{O}(T\gamma (\log 1/\varepsilon)^{1+1/\beta})$  
for the Inf-LCHS-Gaussian method and  
$\mathcal{O}([\log (1/\varepsilon)]^{2/\beta} /\varepsilon^2)$  
for the Inf-LCHS-MC method.

\section{Discussion and Outlook}\label{sec:discussion}

In this work, we derive the infinite-dimensional extension of the linear combination of Hamiltonian simulation (Inf-LCHS) formula for the time-evolution of unbounded or infinite-dimensional operators, establishing its applicability to PDE problems. For the ODE case, the formula reduces to the well-known finite-dimensional linear combination of Hamiltonian simulation (LCHS) formula. For infinite-dimensional systems, we establish the Inf-LCHS theorem to approximate PDEs with Schrödinger equations with more subtle assumptions and different sampling methods. This advancement raises several open questions: 

Condition \ref{condition1} is introduced primarily to ensure the validity of the BCH formula for unbounded operators. However, if more refined conditions can be identified that specify the BCH formula's scope and limitations, this would provide a deeper understanding of the Inf-LCHS theorem and offer a clearer insight into its potential scope.

In addition, condition \ref{condition2} restricts the applicability of the Inf-LCHS theorem. Is it possible to relax the boundedness condition in \ref{condition2} with other assumptions? This question arises naturally in the challenges of expressing the real and imaginary parts of nonlinear PDE operators. Specifically, can the Inf-LCHS approach be extended to nonlinear PDEs, and if so, what modifications would enable handling the nonlinearities in the time-evolution operators?

The LCHS framework achieves the near-optimal quantum algorithm for ODEs~\cite{ACL23}, but numerical algorithms for PDEs should demand extra assumptions, e.g., the boundedness of $\|L(t)\|$ in \cref{THM:Inf-LCHS-G}. The Inf-LCHS-MC approach in \cref{THM:Inf-LCHS-MC} can handle unbounded operators and relax the dependence of norm $\|L(t)\|$ and evolution time $T$, but it takes higher asymptotic complexity than the Inf-LCHS-Gaussian approach. What are the optimal strategies and computational limits of solving PDEs on quantum computers, and how can the problem-specific structures of certain PDEs reduce the quantum cost? Can we find other novel applications of propagating time-evolution operators via Inf-LCHS in mathematics and physics?

\begin{acknowledgments}
We appreciate insightful discussions with Dong An, Di Fang, Daniel Galviz, Lin Lin, Arthur Jaffe, and Jinsong Wu. JPL acknowledges support from Tsinghua University and Beijing Institute of Mathematical Sciences and Applications. ZWL was supported by Beijing Natural Science Foundation
Key Program (Grant No. Z220002) and by NKPs (Grant no. 2020YFA0713000). RDL and ZWL were supported by BMSTC and ACZSP (Grant No. Z221100002722017). HEL was partially supported by the National
Natural Science Foundation of China (Grant No.
223B1011).
\end{acknowledgments}

\bibliography{ref.bib}

\onecolumngrid
\clearpage

\appendix
\begin{center}
    \textbf{Supplementary Materials}
\end{center}

\section{Preliminaries}\label{APD 1}

\begin{definition}
Let \( H \) be a Hilbert space, the linear operator \( A: D(A) \subset H \to H \). The domain \( D\left(A^{\dagger}\right) \) is the largest domain on which the adjoint operator \( A^{\dagger} \) can be defined. Specifically,
$$
D\left(A^{\dagger}\right) = \{ g \in H \mid \text{the map } f \in D(A) \mapsto (A f, g) \text{ is a continuous linear functional} \}.
$$
\( A \)  is said to be \textit{symmetric} if
$$
(A f, g) = (f, A g), \quad \forall f, g \in D(A).
$$
\( A \) is said to be \textit{self-adjoint} if
$$
D(A) = D\left(A^{\dagger}\right) \quad \text{and} \quad (A f, g) = (f, A g), \quad \forall f, g \in D(A).
$$
Finally, \( A \) is said to be \textit{essentially self-adjoint} if its closure is self-adjoint, that is,
$$
\overline{A} = A^\dagger,
$$
where \( \overline{A} \) denotes the closure of the operator \( A \).
\end{definition}

\begin{definition}
The domain \( D\left(A^{\dagger}\right) \) is the largest domain on which the adjoint operator \( A^{\dagger} \) can be defined. Specifically,
$$
D\left(A^{\dagger}\right) = \{ g \in H \mid \text{the map } f \in D(A) \mapsto (A f, g) \text{ is a continuous linear functional} \}.
$$
\end{definition}
\begin{definition}
 An operator $B$ is called $A$-bounded if $D(B) \supset D(A)$.
It is obvious, but important to note, that any bounded operator $B \in \mathcal{L}(\mathcal{H})$ is $A$-bounded for any linear operator $A$.
\end{definition}

\section{Example Demonstrating the Validity of Inf-LCHS Assumptions}\label{APD 2}

In this section, to help the reader understand the purpose and verification methods behind the assumptions of the main theorem, we provide a specific example of the Schrödinger equation to illustrate that the assumptions introduced in \cref{main_thm}, which are intended to ensure mathematical rigor, are both reasonable and can indeed be satisfied.

The standard definitions of the differential operators $\Delta$, $\mathrm{i}\frac{\partial}{\partial x}$, and the elliptic operator $-\sum_{i,j}\partial_i\left(a^{ij}(x,t)\partial_j f\right) + c(x,t)$ reveal their symmetry properties. However, in the context of the Inf-LCHS  formula, we require the self-adjoint decomposition of the time evolution operator. It is important to note that proving a symmetric operator is self-adjoint is a non-trivial task, and such a property does not hold in all cases. Therefore, the real and imaginary parts of the differential operators discussed in this paper are considered to act on function spaces where they can be extended to self-adjoint operators by taking their closures. For instance, $\Delta$ is an essentially self-adjoint operator on $\mathcal{C}_c^\infty(\mathbb{R}^n)$, and it is self-adjoint on $H^2(\mathbb{R}^d)$.

Based on the above considerations, the real and imaginary parts of the differential operators discussed in this paper are regarded as self-adjoint operators on appropriate spaces by taking their self-adjoint extensions. In the following, we demonstrate that the self-adjoint assumption in \cref{adjoint_assum} is rigorously satisfied for the example Schrödinger equation.

In this section, we consider the real-space Schrödinger equation in the semi-classical regime
    \begin{align}
        \mathrm{i}\hbar\frac{\partial u(x,t)}{\partial t} = \left[ -\frac{\hbar^2}{2} \Delta + V_R(x,t) + \mathrm{i} V_I(x) \right] u(x,t), \qquad (x,t) \in \mathbb{R}^3 \times \mathbb{R}^+,
    \end{align}
    where $V_R$ is of the form $V_R(x,t) = V_1(t) V_2(x)$, with $V_1(t)$ being continuous on $[0, T]$, the potential $V_2(x)$ in the Schrödinger equation is constant, i.e., $V_2(x) = V_2$, and $V_I(x) \in H^2(\mathbb{R}^3) \cap L^\infty(\mathbb{R}^3) \subset L^2(\mathbb{R}^3) + L^\infty(\mathbb{R}^3)$. We define the operator $A(t) = \mathbf{L}(t) + \mathrm{i} \mathbf{H}(t)$, where $\mathbf{L}(t) = - \frac{1}{\hbar} V_I(x)$ and $\mathbf{H}(t) = - \frac{\hbar}{2} \Delta + \frac{1}{\hbar} V_R(x,t)$. Additionally, we define 
    \[
    Y(t) = \int_0^t \mathbf{L}(s) \, \mathrm{d}s = -\frac{t}{\hbar} V_I(x), \quad X(t) = -\mathrm{i} \int_0^t \mathbf{H}(s) \, \mathrm{d}s = -\mathrm{i} \left( -\frac{t \hbar}{2} \Delta + \frac{1}{\hbar} \int_0^t V_R(x,s) \, \mathrm{d}s \right).
    \]
    
First, we demonstrate that \cref{adjoint_assum} is satisfied by utilizing the following result in~\cite{Hislop1995IntroductionTS}.
\begin{lemma}\label{Apd: shcodinger adjoint}
Let $V \in L^2\left(\mathbb{R}^3\right)+L^{\infty}\left(\mathbb{R}^3\right)$ and be real. Then the operator $H=-\Delta + V$, defined on $D(\Delta) =  H^2(\mathbb{R}^3)$, is self-adjoint.
\end{lemma}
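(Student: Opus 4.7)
The plan is to invoke the Kato–Rellich theorem with $-\Delta$ as the unperturbed self-adjoint operator (self-adjoint on $H^2(\mathbb{R}^3)$, as is standard via the Fourier transform), and $V$ as a symmetric perturbation that is $(-\Delta)$-bounded with relative bound strictly less than one. Since $V$ is real-valued and acts by multiplication, it is symmetric on $C_c^\infty(\mathbb{R}^3)\subset H^2(\mathbb{R}^3)$, so the only real work is the relative-bound estimate.

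First I would split $V = V_1 + V_2$ with $V_1 \in L^2(\mathbb{R}^3)$ and $V_2 \in L^\infty(\mathbb{R}^3)$. The $L^\infty$ part is trivial: for $\psi \in H^2(\mathbb{R}^3)$,
\begin{equation}
\|V_2 \psi\|_{L^2} \le \|V_2\|_{L^\infty}\|\psi\|_{L^2},
\end{equation}
so $V_2$ is bounded on $L^2$ and hence $(-\Delta)$-bounded with relative bound $0$.

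The core step is the $L^2$ part, and this is the main technical obstacle: I need $\|V_1\psi\|_{L^2}\le a\|\Delta\psi\|_{L^2} + b\|\psi\|_{L^2}$ with $a$ arbitrarily small. The standard route is to control $\|\psi\|_{L^\infty}$ by $\|\Delta\psi\|_{L^2}$ and $\|\psi\|_{L^2}$ with an adjustable small constant, exploiting dimension $d=3$. Concretely I would write $\|V_1\psi\|_{L^2}\le \|V_1\|_{L^2}\|\psi\|_{L^\infty}$, then pass to Fourier side: by Hausdorff–Young, $\|\psi\|_{L^\infty}\le (2\pi)^{-3/2}\|\widehat\psi\|_{L^1}$. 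Splitting the frequency integral on $\{|\xi|\le R\}$ and its complement and applying Cauchy–Schwarz with weight $(1+|\xi|^2)$, one gets
\begin{equation}
\|\widehat\psi\|_{L^1} \le C\bigl(R^{1/2}\|\psi\|_{L^2} + R^{-1/2}\|(1-\Delta)\psi\|_{L^2}\bigr),
\end{equation}
where the $R^{-1/2}$ comes from $\int_{|\xi|>R}(1+|\xi|^2)^{-2}\,\mathrm{d}\xi < \infty$ precisely because $d=3$. Taking $R$ large makes the coefficient of $\|\Delta\psi\|_{L^2}$ arbitrarily small, yielding
\begin{equation}
\|V_1\psi\|_{L^2} \le \varepsilon\|\Delta\psi\|_{L^2} + C(\varepsilon)\|\psi\|_{L^2}
\end{equation}
for any $\varepsilon>0$.

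Combining with the $V_2$ estimate gives a relative bound $\varepsilon<1$ for $V$ with respect to $-\Delta$. Since $V$ is symmetric and $-\Delta$ is self-adjoint on $H^2(\mathbb{R}^3)$, Kato–Rellich yields that $H=-\Delta+V$ is self-adjoint on $D(-\Delta)=H^2(\mathbb{R}^3)$, completing the proof. The only subtlety to flag is that the $d=3$ integrability of $(1+|\xi|^2)^{-2}$ at infinity is what powers the infinitesimal-bound step; in higher dimensions the same statement would require stronger integrability hypotheses on $V$.
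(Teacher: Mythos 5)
Your proposal is correct and is exactly the standard Kato--Rellich argument underlying this classical result; the paper itself offers no proof, simply citing it from Hislop--Sigal, and your argument is the textbook proof behind that citation. The only blemish is cosmetic: the low-frequency term in your Fourier split should scale like $R^{3/2}\|\psi\|_{L^2}$ rather than $R^{1/2}\|\psi\|_{L^2}$ in $d=3$, but this does not affect the conclusion since only the smallness of the $R^{-1/2}$ coefficient of $\|\Delta\psi\|_{L^2}$ matters for the infinitesimal relative bound.
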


\begin{proposition}\label{apd prop: schro adjoint assum}
    For each fixed $t \in [0, T]$, the operators $\mathbf{L}$, $\mathbf{H}$, $Y(t)$, $\mathrm{i} X(t)$, and $kY(t) + \mathrm{i} X(t)$, for all $k \in \mathbb{R}$, are self-adjoint on $H^2(\mathbb{R}^3)$. Consequently, the Schrödinger equation of this form satisfies the assumptions stated in \cref{adjoint_assum}.
\end{proposition}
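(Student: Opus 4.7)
The plan is to reduce each of the five self-adjointness claims to either a trivial bounded-multiplication argument or a single application of \cref{Apd: shcodinger adjoint}, after rewriting the operators explicitly under the structural hypotheses $V_R(x,t)=V_1(t)V_2$ with $V_2$ a real constant and $V_I\in H^2(\mathbb{R}^3)\cap L^\infty(\mathbb{R}^3)$. Concretely, for each fixed $t\in[0,T]$ I would first record
\[
\mathbf{L} = -\tfrac{1}{\hbar}V_I(x), \qquad Y(t) = -\tfrac{t}{\hbar}V_I(x),
\]
\[
\mathbf{H}(t) = -\tfrac{\hbar}{2}\Delta + \tfrac{V_2 V_1(t)}{\hbar}, \qquad \mathrm{i}X(t) = -\tfrac{t\hbar}{2}\Delta + \tfrac{V_2}{\hbar}\!\int_0^t\! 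V_1(s)\,\mathrm{d}s,
\]
\[
kY(t) + \mathrm{i}X(t) = -\tfrac{t\hbar}{2}\Delta + V_{k,t}(x), \quad V_{k,t}(x):=-\tfrac{kt}{\hbar}V_I(x) + \tfrac{V_2}{\hbar}\!\int_0^t\! V_1(s)\,\mathrm{d}s.
\]

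Next I would dispatch the bounded and Laplacian-type pieces separately. Since $V_I$ is real and in $L^\infty(\mathbb{R}^3)$, the multiplication operators $\mathbf{L}$ and $Y(t)$ are bounded and self-adjoint on $L^2(\mathbb{R}^3)$, hence in particular self-adjoint when viewed with the common domain $H^2(\mathbb{R}^3)$ that is required for the Laplacian. The operators $\mathbf{H}(t)$ and $\mathrm{i}X(t)$ each take the shape $c(-\Delta)+(\text{real constant})$ with $c>0$, so after absorbing the positive scaling into $-\Delta$ and choosing $V$ to be the real constant in \cref{Apd: shcodinger adjoint} (trivially in $L^\infty\subset L^2+L^\infty$), they are self-adjoint on $H^2(\mathbb{R}^3)$. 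For the more substantive combination $kY(t)+\mathrm{i}X(t)$, the potential $V_{k,t}$ is real and, because $V_I\in H^2\cap L^\infty\subset L^2+L^\infty$ while the additive piece is a real constant in $L^\infty$, it lies in $L^2+L^\infty$ for every $k\in\mathbb{R}$ and $t\in[0,T]$; a single application of \cref{Apd: shcodinger adjoint} (after the same positive rescaling of $\Delta$) then yields self-adjointness on $H^2(\mathbb{R}^3)$.

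Finally, taking $X_1=H^2(\mathbb{R}^3)$ as the common subspace in \cref{adjoint_assum} closes the argument: the Cartesian decomposition $A(t)=\mathbf{L}+\mathrm{i}\mathbf{H}(t)$ holds by construction, and the dense definedness of $X(t)+zY(t)$ for $z\in\mathbb{C}$ is immediate since $H^2(\mathbb{R}^3)$ is dense in $L^2(\mathbb{R}^3)$ and $Y(t)$ is bounded (so it contributes no additional domain restriction). The main obstacle I anticipate is the uniformity in the parameter $k$: one must verify that the Kato–Rellich-type perturbation underlying \cref{Apd: shcodinger adjoint} preserves the self-adjoint domain $H^2(\mathbb{R}^3)$ regardless of $k$, which is precisely where the $L^2+L^\infty$ membership of $V_I$ with $k$-independent constants is essential, and it is the reason that $V_I$ is assumed to lie in $H^2\cap L^\infty$ rather than in a larger class.
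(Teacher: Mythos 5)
Your proposal is correct and follows essentially the same route as the paper: you verify that each potential is real and lies in $L^2(\mathbb{R}^3)+L^\infty(\mathbb{R}^3)$, treat the purely multiplicative operators $\mathbf{L}$ and $Y(t)$ as bounded self-adjoint operators, and apply \cref{Apd: shcodinger adjoint} (after absorbing the positive scaling of $-\Delta$) to the Laplacian-plus-potential combinations, exactly as the paper does. Your additional remarks on taking $X_1=H^2(\mathbb{R}^3)$ and on dense definedness go slightly beyond the paper's written proof but are consistent with it.
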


\begin{proof}
    Since $V_I(x) \in H^2(\mathbb{R}^3) \cap L^\infty(\mathbb{R}^3)$, the operator $\mathbf{L} = - \frac{1}{\hbar} V_I(x)$ is a bounded self-adjoint operator on $H^2(\mathbb{R}^3)$.
    
    If $V_R, V_I \in L^2(\mathbb{R}^3) + L^\infty(\mathbb{R}^3)$, then for each fixed $t \in [0, T]$, the term $- \frac{t}{\hbar} V_I(x)$, as well as $\frac{1}{\hbar} \int_0^t V_R(x,s) \, \mathrm{d}s = \left( \frac{1}{\hbar} \int_0^t V_1(s) \, \mathrm{d}s \right) V_2(x)$, both belong to $L^2(\mathbb{R}^3) + L^\infty(\mathbb{R}^3)$. 

    For any $k \in \mathbb{R}$, the operator $kY(t) + \mathrm{i} X(t)$ becomes 
    \[
    -\frac{t \hbar}{2} \Delta + \left( \frac{1}{\hbar} \int_0^t V_1(s) \, \mathrm{d}s \right) V_2(x) - \frac{t k}{\hbar} V_I(x).
    \]
    Clearly, the potential 
    \[
    \left( \frac{1}{\hbar} \int_0^t V_1(s) \, \mathrm{d}s \right) V_2(x) - \frac{t k}{\hbar} V_I(x)
    \]
    belongs to $L^2(\mathbb{R}^3) + L^\infty(\mathbb{R}^3)$ as well. By \cref{Apd: shcodinger adjoint}, it follows that the operators $\mathbf{L}$, $\mathbf{H}$, $Y(t)$, $\mathrm{i} X(t)$, and $kY(t) + \mathrm{i} X(t)$, for all $k \in \mathbb{R}$, are self-adjoint on $H^2(\mathbb{R}^3)$.
\end{proof}

\begin{proposition}\label{time order domain}
   The Hamiltonian defined by this Schrödinger equation has a non-degenerate initial vector domain $X_0= H^2(\mathbb{R}^3)$ , which satisfies \cref{assum: domain}.
\end{proposition}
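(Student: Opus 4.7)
The plan is to exploit the very special structure of this Schr\"odinger example, where $V_R(x,t)=V_1(t)V_2$ with $V_2$ constant in $x$, so that the only time dependence in the integrand $k\mathbf{L}(s)+\mathbf{H}(s)$ is a scalar multiple of the identity. This scalar piece commutes with every operator, which collapses the time-ordered exponential to an ordinary exponential and lets Stone's theorem do the heavy lifting.

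First, I would identify the domain of $\mathbf{L}(t)+\mathbf{H}(t)$. Writing
\begin{equation*}
\mathbf{L}(t)+\mathbf{H}(t)=-\tfrac{\hbar}{2}\Delta+\tfrac{V_1(t)V_2}{\hbar}-\tfrac{1}{\hbar}V_I(x),
\end{equation*}
the term $\tfrac{V_1(t)V_2}{\hbar}$ is a bounded scalar operator and $V_I\in H^2(\mathbb{R}^3)\cap L^\infty(\mathbb{R}^3)\subset L^2+L^\infty$, so \cref{Apd: shcodinger adjoint} (Kato--Rellich type) gives $\mathcal{D}(\mathbf{L}(t)+\mathbf{H}(t))=\mathcal{D}(\Delta)=H^2(\mathbb{R}^3)$ for every $t\in[0,T]$.

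Next I would handle the propagator. For fixed $k\in\mathbb{R}$, set
\begin{equation*}
B_k=-\tfrac{\hbar}{2}\Delta-\tfrac{k}{\hbar}V_I(x),\qquad \varphi_k(t)=\int_0^t\Bigl(\tfrac{V_1(s)V_2}{\hbar}\Bigr)\mathrm{d}s,
\end{equation*}
so that $k\mathbf{L}(s)+\mathbf{H}(s)=B_k+\tfrac{V_1(s)V_2}{\hbar}I$ with the second summand a scalar. Because scalars commute with every operator, the time-ordering becomes trivial and
\begin{equation*}
\mathcal{T}e^{-\mathrm{i}\int_0^t(k\mathbf{L}(s)+\mathbf{H}(s))\mathrm{d}s}=e^{-\mathrm{i}\varphi_k(t)}\,e^{-\mathrm{i} t B_k}.
\end{equation*}
By \cref{apd prop: schro adjoint assum}, $B_k$ is self-adjoint on $H^2(\mathbb{R}^3)$, and by Stone's theorem $\{e^{-\mathrm{i} tB_k}\}_{t\in\mathbb{R}}$ is a strongly continuous unitary group that maps $\mathcal{D}(B_k)=H^2(\mathbb{R}^3)$ into itself (it commutes with $B_k$ on its domain). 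The scalar phase $e^{-\mathrm{i}\varphi_k(t)}$ obviously preserves $H^2$.

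Combining these steps, for every $\psi_0\in H^2(\mathbb{R}^3)$ and every $(k,t)\in\mathbb{R}\times[0,T]$ the vector $\mathcal{T}e^{-\mathrm{i}\int_0^t(k\mathbf{L}(s)+\mathbf{H}(s))\mathrm{d}s}\psi_0$ lies in $H^2(\mathbb{R}^3)=\mathcal{D}(\mathbf{L}(t)+\mathbf{H}(t))$. Since \cref{apd prop: schro adjoint assum} already establishes $H^2(\mathbb{R}^3)\subseteq X_1$, and $H^2(\mathbb{R}^3)$ is dense in $L^2(\mathbb{R}^3)$ (hence non-degenerate), we may take $X_0=H^2(\mathbb{R}^3)$ in \cref{assum: domain}. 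The only conceptual obstacle is verifying that the time-ordered exponential preserves $H^2$; in the general time-dependent case this would require Kato's theorem on evolution equations, but here the spatial constancy of $V_2$ reduces the problem to invariance of $\mathcal{D}(B_k)$ under a single unitary group, which is immediate from Stone's theorem.
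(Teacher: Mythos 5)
\begin{remark}
Your proposal is correct and follows essentially the same route as the paper: both exploit the spatial constancy of $V_2$ to factor the time-ordered exponential into a scalar phase times a single unitary group $e^{-\mathrm{i}tB_k}$, and both invoke Stone's theorem to conclude that this group preserves $\mathcal{D}(B_k)=\mathcal{D}(\Delta)=H^2(\mathbb{R}^3)$. Your explicit identification of $\mathcal{D}(\mathbf{L}(t)+\mathbf{H}(t))=H^2(\mathbb{R}^3)$ via the Kato--Rellich-type lemma is a small but welcome addition of detail that the paper leaves implicit.
\end{remark}
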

\begin{proof}
    Under this assumption, we can express the time-ordered exponential as follows:
    \begin{align}
        \mathcal{T} e^{-\mathrm{i} \int_0^t (k \mathbf{L}(s) + \mathbf{H}(s)) \, \mathrm{d}s} \psi_0 
        &= \mathcal{T} e^{\mathrm{i} \left( \frac{t \hbar}{2} \Delta + \frac{t k}{\hbar} V_I(x) - \left( \frac{1}{\hbar} \int_0^t V_1(s) \, \mathrm{d}s \right) V_2 \right)} \psi_0 \\
        &= \mathcal{T} e^{-\mathrm{i} \left( \frac{1}{\hbar} \int_0^t V_1(s) \, \mathrm{d}s \right) V_2} e^{\mathrm{i} \left( \frac{t \hbar}{2} \Delta + \frac{t k}{\hbar} V_I(x) \right)} \psi_0,
    \end{align}
    where the second equality follows from the commutation relation 
    \[
    \left[ -\mathrm{i} \left( \frac{1}{\hbar} \int_0^t V_1(s) \, \mathrm{d}s \right) V_2, \, \mathrm{i} \left( \frac{t \hbar}{2} \Delta + \frac{t k}{\hbar} V_I(x) \right) \right] = 0.
    \]

    Let $X_0 = \mathcal{D}(\Delta) = H^2(\mathbb{R}^3)$, and for all $\psi_0 \in X_0$, by Stone's theorem, we have
    \[
    \psi_t^1 := e^{\mathrm{i} \left( \frac{t \hbar}{2} \Delta + \frac{t k}{\hbar} V_I(x) \right)} \psi_0 \in \mathcal{D}\left( \frac{t \hbar}{2} \Delta + \frac{t k}{\hbar} V_I(x) \right) = \mathcal{D}(\Delta) = X_0.
    \]
    Since the potential $V_2(x)$ is constant, the time-ordered exponential factor
    \[
    \mathcal{T} e^{-\mathrm{i} \left( \frac{1}{\hbar} \int_0^t V_1(s) \, \mathrm{d}s \right) V_2}
    \]
    is simply a scalar in $X_0$. Therefore, the total wavefunction becomes
    \[
    \psi_t = \mathcal{T} e^{-\mathrm{i} \left( \frac{1}{\hbar} \int_0^t V_1(s) \, \mathrm{d}s \right) V_2} \psi_t^1 \in H^2(\mathbb{R}^3) = \mathcal{D}\left( -\mathrm{i} \left( \frac{k}{\hbar} V_I(x) - \frac{\hbar}{2} \Delta + \frac{1}{\hbar} V_R(x,t) \right) \right) = \mathcal{D}\left( -\mathrm{i} \left( k \mathbf{L}(t) + \mathbf{H}(t)\right) \right).
    \]

    Thus, for all $\psi_0 \in X_0$, we have $\mathcal{T} e^{-\mathrm{i} \int_0^t (k L(s) + H(s)) \, \mathrm{d}s} \psi_0 \in X_0 = H^2(\mathbb{R}^3)$. Therefore, $X_0$ satisfies the condition stated in \cref{assum: domain}.
\end{proof}
 
\begin{proposition}
    $A(t) \in \operatorname{SOT}([0,T], L(X))$.
\end{proposition}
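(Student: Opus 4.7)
The plan is to exploit the observation that in this specific Schrödinger equation the only time-dependence carried by $A(t)$ enters through the continuous scalar factor $V_1(t)$ multiplying the constant $V_2$, while every other constituent of $A(t)$ — the Laplacian $-\hbar\Delta/2$, the imaginary potential $-V_I(x)/\hbar$, and the constant $V_2/\hbar$ — is independent of $t$. Consequently, the difference $A(t) - A(t_0)$ should collapse to a scalar multiple of the identity, and strong operator continuity will follow immediately from the continuity of $V_1$ on $[0,T]$.

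More concretely, the first step is to record the pointwise identity
\[
A(t) - A(t_0) = \mathrm{i}\,\frac{V_2}{\hbar}\,\bigl(V_1(t) - V_1(t_0)\bigr)\,I,
\]
which is a genuine operator equality on the common domain $\mathcal{D}(A) = H^2(\mathbb{R}^3)$ identified in \cref{time order domain}: the kinetic term $-\mathrm{i}\hbar\Delta/2$ and the real-valued bounded multiplication term $-V_I(x)/\hbar$ are literally the same at $t$ and $t_0$, and hence cancel on $H^2(\mathbb{R}^3)$ upon subtraction. The second step is to apply this identity to an arbitrary $\phi_0 \in \mathcal{D}(A)$, yielding the estimate
\[
\|A(t)\phi_0 - A(t_0)\phi_0\| \;=\; \frac{|V_2|}{\hbar}\,|V_1(t) - V_1(t_0)|\,\|\phi_0\|.
\]
The third step is to invoke the continuity of $V_1$ on $[0,T]$, which forces the right-hand side to tend to $0$ as $t \to t_0$. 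This verifies $\lim_{t\to t_0}\|A(t)\phi_0 - A(t_0)\phi_0\| = 0$ for every $\phi_0 \in \mathcal{D}(A)$, which is precisely the defining condition of $A(\cdot) \in \operatorname{SOT}([0,T], L(X))$.

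I do not anticipate any real obstacle, since the proposition reduces to continuity of a single scalar function. The only point requiring minimal care is to confirm that the cancellation producing the scalar identity above is a bona fide operator equality rather than a formal manipulation — but this is automatic from the fact that $\mathcal{D}(A(t)) = H^2(\mathbb{R}^3)$ is the same space for every $t \in [0,T]$, so the subtraction makes sense vectorwise on that common domain.
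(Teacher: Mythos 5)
Your proof is correct and follows essentially the same route as the paper's: both reduce the strong operator continuity of $A(t)$ to the continuity of the scalar function $V_1$ on $[0,T]$. Your version is in fact marginally cleaner, since you observe that $\mathbf{L} = -\frac{1}{\hbar}V_I(x)$ is time-independent and cancels exactly in the difference $A(t)-A(t_0)$, whereas the paper's bound retains a (vacuously valid) extra term $|t-t_0|\cdot\|\frac{1}{\hbar}V_I(x)\psi_0\|$.
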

\begin{proof}
    For any $t$, $t_0\in[0, T]$, 
    \begin{align}
        \|A(t)\psi_0 - A(t_0)\psi_0\| &\leq \|\mathbf{L}(t)\psi_0 - \mathbf{L}(t_0)\psi_0\| + \|\mathbf{H}(t)\psi_0  -\mathbf{H}(t_0)\psi_0\|\\
        &= |t-t_0|\cdot\|\frac{1}{\hbar} V_I(x)\psi_0\| + |V_1(t)-V_1(t_0)|\cdot\|\frac{1}{\hbar}V_2\psi_0\|.
    \end{align}
    Thus, $\lim_{t \to t_0} \|A(t)\psi_0 - A(t_0)\psi_0\| = 0$ by the continuity of $V_1(t)$.
\end{proof}

\section{Supplementary Refinements to the Proof of the Main Theorem}\label{apd: supply proof}
In \cref{main_thm_formula1}, we compute the time derivative of $\psi_t$ by moving the operator $\frac{d}{dt}$ from outside the integral to inside. This operation is rigorous, as we demonstrate in the following proposition.

\begin{proposition}[Rigorousness of the calculation in \cref{main_thm_formula1}]\label{APD: order change}
In the calculation process of \cref{main_thm_formula1}, the order of differentiation with respect to $t$ and integration over $k$ can be interchanged.
\end{proposition}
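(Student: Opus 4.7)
The approach is to justify the interchange by truncating the $k$-integral to a compact window $[-K,K]$, applying the standard vector-valued Leibniz rule there, and then recovering the full Cauchy principal value by uniform convergence of both the truncations and their derivatives on $t\in[\delta,T]$.

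First I would fix $\psi_0\in X_0$ and define, for $t\in[\delta,T]$ and $K>0$,
\[
\Phi_K(t) \;:=\; \int_{-K}^{K} \frac{f(k)}{1-\mathrm{i}k}\, v_k(t)\, \mathrm{d}k, \qquad v_k(t) \;:=\; \mathcal{T}\,e^{-\mathrm{i}\int_0^t(kL(s)+H(s))\mathrm{d}s}\psi_0.
\]
For each fixed $k$, the self-adjointness of $kL(t)+H(t)$ guaranteed by \cref{adjoint_assum}, together with the invariance of $X_0$ from \cref{assum: domain}, lets me invoke Stone's theorem to write $\partial_t v_k(t) = -\mathrm{i}(kL(t)+H(t))v_k(t)$, with $\|v_k(t)\|=\|\psi_0\|$ by unitarity. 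On the compact strip $[-K,K]\times[\delta,T]$ the decay bound $|k|^\alpha|f(k)|\le \widetilde{C}$ from \cref{assum: func}, combined with the joint continuity in $(k,t)$ and finiteness of $\|(kL(t)+H(t))v_k(t)\|$ (both afforded by $\psi_0\in X_0$ and the strong continuity of $A(\cdot)$), produces a $k$-integrable dominator on $[-K,K]$ uniformly for $t$ in a neighborhood of any $t_0\in[\delta,T]$. The classical Banach-space Leibniz rule then yields
\[
\Phi_K'(t) \;=\; \int_{-K}^{K} \frac{f(k)}{1-\mathrm{i}k}\,(-\mathrm{i}kL(t)-\mathrm{i}H(t))\,v_k(t)\, \mathrm{d}k.
\]

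Next I would exploit the algebraic identity $-\mathrm{i}kL-\mathrm{i}H = -A + (1-\mathrm{i}k)L$ used already in the second line of \cref{main_thm_formula1} to split
\[
\Phi_K'(t) \;=\; -A(t)\,\Phi_K(t) \;+\; L(t)\int_{-K}^{K} f(k)\,v_k(t)\,\mathrm{d}k.
\]
The virtue of this rewriting is that the factor $1-\mathrm{i}k$ in the denominator cancels the offending $k$ inside $-\mathrm{i}kL$, so both $k$-integrands on the right decay like $|k|^{-\alpha}$ at infinity. Hence, as $K\to\infty$, both $\Phi_K(t)$ and $\int_{-K}^{K}f(k)v_k(t)\,\mathrm{d}k$ converge uniformly in $t\in[\delta,T]$ to their Cauchy principal values (with tails controlled by $\|v_k(t)\|=\|\psi_0\|$). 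Uniform convergence of $\Phi_K'$ together with pointwise convergence of $\Phi_K\to\psi_t$ then allows me to commute the limit $K\to\infty$ with $\mathrm{d}/\mathrm{d}t$, concluding that $\psi_t$ is differentiable on $[\delta,T]$ and that its derivative is exactly the principal-value integral produced by differentiating under the integral sign.

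The main obstacle is producing a $k$-integrable dominator for the raw integrand $\tfrac{f(k)}{1-\mathrm{i}k}(-\mathrm{i}kL(t)-\mathrm{i}H(t))v_k(t)$ in spite of the unboundedness of $L(t)$: naively, the factor $k$ multiplying $L$ grows and $\|L(t)v_k(t)\|$ need not be controlled uniformly in $k$. The resolution is to refuse to estimate the raw integrand and instead apply the Cartesian splitting \emph{before} bounding, so that the dangerous $k$ is absorbed by the denominator and only the $L^1$-decay of $|f(k)|$ guaranteed by \cref{assum: func} is needed to conclude.
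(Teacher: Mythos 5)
Your strategy (truncate the $k$-integral to $[-K,K]$, apply the vector-valued Leibniz rule on the compact window, then pass to the limit $K\to\infty$ using uniform convergence of the derivatives) is organized differently from the paper's proof, which argues directly via joint continuity of $\mathcal{F}(k,t)$ and $\mathcal{F}_t(k,t)$ and the Leibniz rule. The truncate-then-limit route is a legitimate way to structure the argument, but your execution has a genuine gap at the decisive step.

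The gap is the claimed uniform convergence of the second term. After the splitting $-\mathrm{i}kL-\mathrm{i}H=-A+(1-\mathrm{i}k)L$ you assert that both integrands ``decay like $|k|^{-\alpha}$'' and that $\int_{-K}^{K} f(k)\,v_k(t)\,\mathrm{d}k$ therefore converges uniformly in $t$, ``with tails controlled by $\|v_k(t)\|=\|\psi_0\|$.'' But \cref{assum: func} only guarantees $|f(k)|\le \widetilde C\,|k|^{-\alpha}$ for some $\alpha>0$, and $\alpha$ need not exceed $1$: for the standard kernel $f(z)=\frac{1}{\pi(1+\mathrm{i}z)}$ one has $\alpha=1$, so $\int_K^{K'}|f(k)|\,\mathrm{d}k$ diverges logarithmically and the naive tail bound proves nothing. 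The convergence of $\mathcal P\int_{\mathbb R}f(k)\,v_k(t)\,\mathrm{d}k$ is not an absolute-integrability statement at all; it is precisely the content of \cref{lemma_main}, where the symmetric truncations are shown to tend to zero by contour integration, using the analyticity and decay of $f$, the positivity $L(t)\succeq\lambda_0>0$, and one of conditions \ref{condition1}/\ref{condition2}. So the uniform convergence of $\Phi_K'$ on $[\delta,T]$ --- exactly the hypothesis you need in order to commute $K\to\infty$ with $\mathrm{d}/\mathrm{d}t$ --- cannot be extracted from the decay of $f$ alone and must be imported from that contour argument (with its uniformity in $t$ checked separately). A secondary unaddressed point: pulling the unbounded operator $A(t)$ outside the truncated integral to write $\Phi_K'(t)=-A(t)\Phi_K(t)+\cdots$, and then passing $-A(t)\Phi_K(t)\to -A(t)\psi_t$, requires closedness of $A(t)$ together with convergence of $A(t)\Phi_K(t)$ in norm, neither of which follows from the bounds you invoke.
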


\begin{proof}
    To establish that the derivative operator $\frac{\mathrm{d}}{\mathrm{d} t}$ and the integral $\int_\mathbb{R} \cdot \, \dd k$ in the first line of \cref{main_thm_formula1} can be swapped, we invoke the Leibniz integral rule for differentiation. It suffices to show that the vector-valued functions
    $$\mathcal{F}(k, t) = \frac{f(k)}{1-\mathrm{i} k} \mathcal{T} \exp\left(-\mathrm{i} \int_0^t (k L(s) + H(s)) \, \mathrm{d} s \right) \psi_0$$
    and
    $$\mathcal{F}_t(k, t) = \frac{f(k)}{1-\mathrm{i} k}  \left(-\mathrm{i} k L(t) - \mathrm{i} H(t)\right) \mathcal{T} \exp\left(-\mathrm{i} \int_0^t (k L(s) + H(s)) \, \mathrm{d} s \right) \psi_0$$
    are continuous in $(k,t) \in \mathbb{R} \times [0, T]$. Since $f(k)$ is continuous by \cref{assum: func}, it remains to verify the continuity of the functions $g(k, t) = \mathcal{T} \exp\left(-\mathrm{i} \int_0^t (k L(s) + H(s)) \, \mathrm{d} s \right) \psi_0$ and $g_t(k, t) = \left(-\mathrm{i} k L(t) - \mathrm{i} H(t)\right) \mathcal{T} \exp\left(-\mathrm{i} \int_0^t (k L(s) + H(s)) \, \mathrm{d} s \right) \psi_0$.

    Since $A(t) \in  \operatorname{SOT}([0,T],L(X))$, we can immediately verify that $A^\dagger(t)$, $L(t)$, and $H(t)$ are all in $ \operatorname{SOT}([0,T],L(X))$ by definition. Therefore, it suffices to show that the function 
    $$\mathcal{T} \exp\left(-\mathrm{i} \int_0^t (k L(s) + H(s)) \, \mathrm{d} s \right) \psi_0 \in C([0,T], L(X)).$$
    
    By the definition of the time-ordering operator, for all $\psi_0 \in X_0$ and $t \in [0, T]$, we have
    \begin{align}
        \left\| \frac{\mathrm{d}}{\mathrm{d} t} \mathcal{T} \exp\left(-\mathrm{i} \int_0^t (k L(s) + H(s)) \, \mathrm{d} s \right) \psi_0 \right\| &= \left\| (k L(t) + H(t)) \mathcal{T} \exp\left(-\mathrm{i} \int_0^t (k L(s) + H(s)) \, \mathrm{d} s \right) \psi_0 \right\| \nonumber \\
        &\leq \infty.
    \end{align}
    Therefore, for all $t, t_0 \in [0, T]$, we obtain
    \begin{align}
        \lim_{t \to t_0} \left\| \mathcal{T} \exp\left(-\mathrm{i} \int_0^t (k L(s) + H(s)) \, \mathrm{d} s \right) \psi_0 - \mathcal{T} \exp\left(-\mathrm{i} \int_0^{t_0} (k L(s) + H(s)) \, \mathrm{d} s \right) \psi_0 \right\| = 0.
    \end{align}
    This establishes the continuity of the functions $\mathcal{F}(k, t)$ and $\mathcal{F}_t(k, t)$ in $t$.
\end{proof}

Thus, by the proof of \cref{APD: order change}, the functions $\mathcal{F}(k, t)$ and $\mathcal{F}_t(k, t)$ are continuous in $(k,t) \in \mathbb{R} \times [0, T]$, ensuring that the integrands before and after the application of the derivative operator $\frac{d}{dt}$ are integrable.

Let $X$ be a Hilbert space and $\mathcal{D}$ be an open subset of $\mathbb{C}$. 

\begin{definition}
     Let $E, F$ be Banach spaces and let $T: E \rightarrow F$ be a linear operator with the domain $\mathcal{D}(T) \subset E$. $T$ is a closed operator if given any sequence $\left\{x_n\right\}_{n \in \mathbb{N}} \subset$ $\mathcal{D}(T)$ such that $\lim x_n=x \in E$ and $\lim T\left(x_n\right)=y \in F$, then $x \in \mathcal{D}(T)$ and $T(x)=$ $y$.
\end{definition}

\begin{definition}
Let $\mathbf{F}: \mathcal{D} \rightarrow \mathscr{L}(X)$. Suppose for all $z\in \mathcal{D}$, $\mathbf{F}(z)$ is the unbounded closed linear operator defined on the domain $\mathscr{D}(\mathbf{F})\subseteq X$.
\begin{enumerate}[label=(\alph*)]
    \item $\mathbf{F}$ is analytic at $z_0 \in \mathcal{D}$ if
$$
\mathbf{F}^{\prime}\left(z_0\right)u=\lim _{z \rightarrow z_0} \frac{\mathbf{F}(z)-\mathbf{F}\left(z_0\right)}{z-z_0}u
$$
exists (in the norm on $X$) for all $u\in X$.

 \item $\mathbf{F}$ is weakly analytic at $z_0 \in \mathcal{D}$ if
$$
\lim _{z \rightarrow z_0} \frac{\langle\mathbf{F}(z)(u), v\rangle-\langle\mathbf{F}(z_0)(u), v\rangle}{z-z_0}
$$
exists for each $u\in \mathscr{D}(\mathbf{F}),\ v\in X$. In other words, $\mathbf{F}$ is weakly analytic at $z_0 \in \mathcal{D}$ if, for each $u\in \mathscr{D}(\mathbf{F}),\ v\in X$, the function $h: \mathcal{D} \rightarrow \mathbb{C}$ defined by $h(z)=\langle\mathbf{F}(z)(u), v\rangle$ is analytic at $z_0$.
\end{enumerate}
\end{definition}

We will prove that Cauchy's integral theorem still holds in Hilbert space.
\begin{proposition}[Cauchy's integral theorem for unbounded operator]\label{Cauchy's integral theorem}
Let $\mathscr{L}(X)$ be the set of all the unbounded linear operator on $X$, and $F: \mathcal{D}\to \mathscr{L}(X)$, the following arguments hold:
\begin{enumerate}[label=(\alph*)]
\item $\mathbf{F}(z)$ is analytic on $\mathcal{D}$, then $\mathbf{F}(z)$ is weakly analytic on $\mathcal{D}$.
\item 
If  $\mathcal{D}$ is an open subset of $\mathbb{C}, \mathbf{F}$ : $\mathcal{D}\rightarrow \mathscr{L}(X)$ is an analytic function, and $\gamma_1, \ldots, \gamma_m$ are closed rectifiable curves in $\mathcal{D}$ such that $\sum_{j=1}^m n\left(\gamma_j ; a\right)=0$ for all $a$ in $\mathbb{C} \backslash \mathcal{D}$, then $\sum_{j=1}^m \int_{\gamma_j} \mathbf{F}=0$.
\end{enumerate}
\end{proposition}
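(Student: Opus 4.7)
The plan is to prove (a) as an essentially immediate consequence of the definition via Cauchy--Schwarz, and then to reduce (b) to the classical scalar Cauchy integral theorem by testing against arbitrary vectors of $X$, working fiberwise (i.e., applying $\mathbf{F}(z)$ to a fixed domain vector first) to avoid any notion of operator-norm integration.

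For part (a), the strong (norm) convergence of the difference quotient forces weak convergence: for $u \in \mathscr{D}(\mathbf{F})$ and $v \in X$,
\begin{equation*}
\left| \frac{\langle \mathbf{F}(z)u - \mathbf{F}(z_0)u,\, v\rangle}{z - z_0} - \langle \mathbf{F}'(z_0)u,\, v\rangle \right| \;\le\; \|v\| \left\| \frac{(\mathbf{F}(z) - \mathbf{F}(z_0))u}{z - z_0} - \mathbf{F}'(z_0)u \right\|,
\end{equation*}
and the right-hand side tends to $0$ as $z \to z_0$ by the assumed strong analyticity. Hence, for each fixed $u$ and $v$, the scalar map $h_{u,v}(z) := \langle \mathbf{F}(z)u, v\rangle$ is complex-differentiable at $z_0$ with derivative $\langle \mathbf{F}'(z_0)u, v\rangle$, which is precisely weak analyticity.

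For part (b), I would proceed in three steps. First, fix $u \in \mathscr{D}(\mathbf{F})$ and $v \in X$; by (a), the scalar function $h_{u,v}$ is holomorphic on $\mathcal{D}$, so the homological hypothesis $\sum_{j=1}^m n(\gamma_j; a) = 0$ for all $a \in \mathbb{C}\setminus\mathcal{D}$ together with the classical (scalar) Cauchy integral theorem gives
\begin{equation*}
\sum_{j=1}^m \int_{\gamma_j} h_{u,v}(z)\,dz \;=\; 0.
\end{equation*}
Second, interpret the operator-valued contour integral fiberwise: the Bochner (equivalently Riemann) integral $\int_{\gamma_j} \mathbf{F}(z)u\,dz \in X$ is well defined because $z \mapsto \mathbf{F}(z)u$ is norm-continuous on the compact rectifiable curve $\gamma_j$ (strong differentiability implies strong continuity, and rectifiability supplies a finite-length parametrization). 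The standard commutation of a bounded linear functional with a Bochner integral then yields
\begin{equation*}
\sum_{j=1}^m \int_{\gamma_j} \langle \mathbf{F}(z)u, v\rangle\,dz \;=\; \Big\langle \sum_{j=1}^m \int_{\gamma_j} \mathbf{F}(z)u\,dz,\; v \Big\rangle.
\end{equation*}
Third, combining these two identities forces $\big\langle \sum_{j=1}^m \int_{\gamma_j} \mathbf{F}(z)u\,dz,\, v\big\rangle = 0$ for every $v \in X$, so $\sum_{j=1}^m \int_{\gamma_j} \mathbf{F}(z)u\,dz = 0$ in $X$. Since $u \in \mathscr{D}(\mathbf{F})$ was arbitrary, the operator identity $\sum_{j=1}^m \int_{\gamma_j} \mathbf{F} = 0$ holds on $\mathscr{D}(\mathbf{F})$.

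The main obstacle is conceptual rather than technical: the integral $\int_{\gamma_j}\mathbf{F}(z)\,dz$ cannot be defined directly as an operator-norm Bochner integral, since $\mathbf{F}(z)$ is generally unbounded and $z \mapsto \mathbf{F}(z)$ need not be continuous in any operator-norm sense. The key move is to declare the integral to be defined pointwise as an $X$-valued Bochner integral of $\mathbf{F}(z)u$ for each $u \in \mathscr{D}(\mathbf{F})$, reducing all estimates to the Banach-space-valued theory on rectifiable curves. Once this interpretation is accepted, the argument is essentially routine: (a) is Cauchy--Schwarz, the scalar Cauchy theorem handles each test functional $\langle \cdot, v\rangle$, and exchanging that functional with the Bochner integral lifts the scalar vanishing identity back to the operator level.
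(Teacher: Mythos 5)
Your proposal is correct and follows essentially the same route as the paper: part (a) via the Cauchy--Schwarz bound showing norm convergence of the difference quotient implies weak convergence, and part (b) by testing against $v \in X$, invoking the scalar homological Cauchy theorem for $h_{u,v}$, and exchanging the functional with the integral to conclude vanishing on $\mathscr{D}(\mathbf{F})$. Your additional care in interpreting the contour integral fiberwise as an $X$-valued Bochner integral (justified by strong continuity on a rectifiable curve) makes explicit a step the paper leaves implicit, but it is the same argument.
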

\begin{proof}
(a). For each $z_0 \in \mathcal{D}, u\in \mathscr{D}(\mathbf{F})$ and $v\in X$,
\begin{align*}
   & \left|\frac{\langle\mathbf{F}(z)(u), v\rangle-\langle\mathbf{F}(z_0)(u), v\rangle}{z-z_0}-\langle\mathbf{F}^{\prime}(z_0)(u), v\rangle\right|=\left\langle\left(\frac{\mathbf{F}(z) - \mathbf{F}(z_0)}{z-z_0}- \mathbf{F}^{\prime}(z_0)\right)(u), v\right\rangle \\  
  \leq&\|v\| \cdot \left\|\left(\frac{\mathbf{F}(z)-\mathbf{F}\left(z_0\right)}{z-z_0}-\mathbf{F}^{\prime}\left(z_0\right)\right)(u)\right\|
\end{align*}

(b). For all $u\in \mathscr{D}(\mathbf{F}),\ v\in X$, then $$\left\langle\left(\sum_{j=1}^m \int_{\gamma_j} \mathbf{F}(z)dz\right)(u), v\right\rangle=\sum_{j=1}^m \int_{\gamma_j}\left\langle \mathbf{F}(z)(u), v\right\rangle dz=0$$ by (a) and the scalar-valued version of Cauchy's Theorem. Hence $\sum_{j=1}^m \int_{\gamma_j} \mathbf{F}(z)dz=0$ on $\mathscr{D}(\mathbf{F})$.
\end{proof}

\begin{lemma}[BCH formula]
If the operator in right hand side of \cref{BCH} exists, then
    \begin{align}\label{BCH}
        \log (\exp X \exp Y) = & X+Y+\frac{1}{2}[X, Y]+\frac{1}{12}([X,[X, Y]]]+[Y,[Y, X]]) + \cdots \\\notag
        = &\sum_{n=1}^{\infty} \frac{(-1)^{n-1}}{n} \sum_{r_1+s_1>0}\cdots\sum_{r_n+s_n>0} \frac{\left[X^{r_1} Y^{s_1} X^{r_2} Y^{s_2} \cdots X^{r_n} Y^{s_n}\right]}{\left(\sum_{j=1}^n\left(r_j+s_j\right)\right) \cdot \prod_{i=1}^n r_{i}!s_{i}!}
    \end{align}
where the sum is performed over all nonnegative values of $s_i$ and $r_i$, and the following notation has been used:
$$
\left[X^{r_1} Y^{s_1} \cdots X^{r_n} Y^{s_n}\right]=[\underbrace{X,[X, \cdots[X}_{r_1},[\underbrace{Y,[Y, \cdots[Y}_{s_1}, \cdots[\underbrace{X,[X, \cdots[X}_{r_n},[\underbrace{Y,[Y, \cdots Y]}_{s_n}] \cdots]]
$$
with the understanding that $[X]:=X$.
\end{lemma}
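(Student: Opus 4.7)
The plan is to establish \cref{BCH} first as an identity of formal power series in two non-commuting indeterminates $X$ and $Y$, and then transfer it to the operator setting using the stated hypothesis that the right-hand side exists.

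For the formal identity, I would expand
\[
\exp X \exp Y = I + \sum_{r+s > 0} \frac{X^r Y^s}{r!\,s!}
\]
and apply $\log(I+u) = \sum_{n \geq 1} \tfrac{(-1)^{n-1}}{n} u^n$. Multiplying out $u^n$ as an ordered $n$-fold product and collecting terms yields, after bookkeeping, the sum indexed by tuples $(r_1,s_1,\dots,r_n,s_n)$ with $r_j+s_j > 0$, but with the noncommutative monomials $X^{r_1}Y^{s_1}\cdots X^{r_n}Y^{s_n}$ in place of the corresponding bracket expressions and without the factor $1/\sum_j(r_j+s_j)$. This step is combinatorial and routine.

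The main obstacle is the conversion from monomials to nested commutators, together with the appearance of the factor $1/\sum_j(r_j+s_j)$. I would handle this by Dynkin's idempotent: on homogeneous degree-$m$ noncommutative polynomials, the map $\pi$ defined by
\[
\pi(Z_1 Z_2 \cdots Z_m) = \tfrac{1}{m}\bigl[Z_1,\bigl[Z_2,\bigl[\cdots\bigl[Z_{m-1}, Z_m\bigr]\cdots\bigr]\bigr]\bigr], \quad Z_i\in\{X,Y\},
\]
acts as the identity on Lie polynomials (Friedrichs/Dynkin lemma). Combined with the a priori fact that $\log(\exp X \exp Y)$ is itself a Lie series in $X$ and $Y$, verified for instance by differentiating $t \mapsto \log(\exp(tX)\exp Y)$ and recognizing the derivative as a Lie series in $\operatorname{ad}_X$ and $\operatorname{ad}_Y$, applying $\pi$ degree-by-degree rewrites each monomial $X^{r_1}Y^{s_1}\cdots X^{r_n}Y^{s_n}$ as the claimed nested bracket, with the factor $\tfrac{1}{m} = \tfrac{1}{\sum_j(r_j+s_j)}$ emerging automatically. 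The first equality of \cref{BCH} then drops out by collecting terms of total degree at most three and comparing coefficients with the explicit formulas for $[X,Y]$, $[X,[X,Y]]$, and $[Y,[Y,X]]$.

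Finally, to lift the formal identity to the operator level, I would interpret the hypothesis as strong convergence of the nested-bracket series on a dense invariant domain on which $\exp X$ and $\exp Y$ also make sense. On such a domain the partial sums of $\log(\exp X \exp Y)$ are well defined, and the formal equality at each order lifts by continuity of the partial sums. The existence assumption absorbs the otherwise delicate domain and convergence issues that arise for unbounded $X$ and $Y$, which is precisely why the lemma is phrased conditionally rather than as a general operator identity.
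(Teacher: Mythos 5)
The paper itself offers no proof of this lemma: it is stated in \cref{apd: supply proof} as the classical Dynkin form of the Baker--Campbell--Hausdorff formula and used as a black box, so there is no argument of the authors' to compare yours against. Your outline is the standard textbook derivation: expand $\exp X\exp Y$, apply the logarithm series to obtain the monomial version of the sum, and then use the Dynkin--Specht--Wever idempotent together with the a priori fact that $\log(\exp X\exp Y)$ is a Lie series to replace each monomial $X^{r_1}Y^{s_1}\cdots X^{r_n}Y^{s_n}$ by its right-nested bracket divided by the total degree. That is correct, your right-normed bracketing convention matches the one in the statement, and the combinatorics you defer to "routine bookkeeping" is indeed routine. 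The only thin point is the final paragraph: the hypothesis "the right-hand side exists" guarantees convergence of the bracket series, but by itself it does not give you a dense invariant domain on which $\exp X$, $\exp Y$, and the partial sums of $\log(\exp X\exp Y)$ all make sense and on which the formal identity can be transferred by continuity --- for genuinely unbounded $X$ and $Y$ each of these requires separate justification. You are candid that the existence assumption is being asked to "absorb" these issues, and the paper leaves exactly the same gap open (this is why \lem{lemma_limit}, which invokes this lemma, is the more delicate of the two conditions in \thm{main_thm}), but a fully rigorous version would need to spell out the domain on which the term-by-term identification is being asserted.
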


\begin{lemma}\label{lemma:Trotter}
    Let $H$ be a Hilbert space, not necessarily separable. If $A$ and $B$ are self-adjoint, the semigroups $e^{tB}$ and $e^{t(\mathrm{i}A + B)}$ are contraction and $A+ iB$ is densely defined, then for each $t \in \mathbb{R}$ and for each $\psi \in H$,
$$
e^{\overline{\mathrm{i}tA + tB}} \psi = e^{\mathrm{i}t \overline{(A - \mathrm{i} B)}}\psi=\lim _{n \rightarrow \infty}\left(\left(e^{\mathrm{i} t A / n} e^{t B / n}\right)^n \psi\right).
$$
\end{lemma}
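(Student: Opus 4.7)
The plan is to apply Chernoff's product formula. Since $A$ is self-adjoint, Stone's theorem supplies the unitary group $\{e^{itA}\}$ with generator $\mathrm{i}A$, which is in particular a strongly continuous contraction semigroup. Because $B$ is self-adjoint and $e^{tB}$ is assumed contractive, the Hille--Yosida theorem identifies $B$ as its generator (and forces $B \preceq 0$). Finally, the hypothesis that $e^{t(\mathrm{i}A+B)}$ is a contraction semigroup, combined with the dense definedness of $\mathrm{i}A+B$ on $D(A)\cap D(B)$, identifies $\overline{\mathrm{i}A+B}$ as the generator of this semigroup.

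Set $F(s) := e^{\mathrm{i}sA}\,e^{sB}$ for $s\geq 0$. I would verify the three hypotheses of Chernoff's theorem in order. First, $F(s)$ is a product of a unitary and a contraction, hence itself a contraction, with $F(0) = I$, and strong continuity of $s\mapsto F(s)\psi$ follows from strong continuity of both factor semigroups together with their uniform contractivity. Second, for $\psi\in D(A)\cap D(B)$ the decomposition
\[
\frac{F(s)\psi - \psi}{s} \;=\; e^{\mathrm{i}sA}\,\frac{e^{sB}\psi - \psi}{s} \;+\; \frac{e^{\mathrm{i}sA}\psi - \psi}{s},
\]
combined with the strong continuity of $e^{\mathrm{i}sA}$ and the defining limits of the two generators, gives $\lim_{s\to 0^+}s^{-1}(F(s)\psi - \psi) = (\mathrm{i}A+B)\psi$. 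Chernoff's theorem then produces the strong limit $F(t/n)^n\psi\to e^{t\overline{\mathrm{i}A+B}}\psi$ for every $\psi\in H$, which is exactly the claimed product formula. The remaining identity $e^{\overline{\mathrm{i}tA + tB}}\psi = e^{\mathrm{i}t\,\overline{A-\mathrm{i}B}}\psi$ is a purely algebraic rearrangement: factoring the scalar $t$ out of the closure and writing $\mathrm{i}A+B = \mathrm{i}(A-\mathrm{i}B)$ yields $\overline{\mathrm{i}tA+tB} = \mathrm{i}t\,\overline{A-\mathrm{i}B}$.

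The main obstacle is ensuring that $D(A)\cap D(B)$ is a core for $\overline{\mathrm{i}A+B}$, so that the generator identified through Chernoff's theorem coincides with the prescribed $\overline{\mathrm{i}A+B}$ rather than with some proper extension. In the setting of \cref{lemma:limit_trotter} this is effectively built into the application, since $B = -\Re(z)Y(t)$ is bounded and therefore $D(\mathrm{i}A+B) = D(A)$ is automatically a core for $\overline{\mathrm{i}A+B}$; more generally the core property can be secured by verifying invariance of $D(A)\cap D(B)$ under one of the two semigroups, or by appealing to Kato's version of the Trotter--Kato theorem tailored to Schr\"odinger-type perturbations where one operator is skew-adjoint and the other is dissipative and self-adjoint.
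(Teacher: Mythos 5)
Your proposal is correct and rests on the same essential computation as the paper's proof, but it reaches the product formula by a different route. The common core is the identity
\begin{equation*}
\frac{e^{\mathrm{i}sA}e^{sB}\psi-\psi}{s}=e^{\mathrm{i}sA}\,\frac{e^{sB}\psi-\psi}{s}+\frac{e^{\mathrm{i}sA}\psi-\psi}{s}\longrightarrow(\mathrm{i}A+B)\psi\quad(s\downarrow 0)
\end{equation*}
on $D(A)\cap D(B)$, which appears verbatim in the paper's argument. Where you then invoke Chernoff's product formula as a black box, the paper proves the convergence of $\left(e^{\mathrm{i}tA/n}e^{tB/n}\right)^n$ from scratch: it uses the telescoping identity $S_t-U_{t/n}^n=\sum_{j}U_{t/n}^{j}(S_{t/n}-U_{t/n})S_{t/n}^{n-j-1}$ together with contractivity, upgrades the pointwise convergence of $n(S_{t/n}-U_{t/n})\phi\to 0$ to uniform convergence over the compact orbit $\{S_s\phi:0\le s\le t\}$ via the uniform boundedness principle on the Banach space $\bigl(D(A)\cap D(B),\ \|\cdot\|+\|(A-\mathrm{i}B)\cdot\|\bigr)$ (complete because $A-\mathrm{i}B=(A+\mathrm{i}B)^{\dagger}$ is closed), and finishes by density. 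Your approach is shorter and cleaner at the price of citing a theorem whose proof contains exactly this machinery; the paper's approach is self-contained and makes explicit where the closedness of $A-\mathrm{i}B$ and the densely-defined hypothesis on $A+\mathrm{i}B$ enter. You are also right to flag the core issue --- that $D(A)\cap D(B)$ must be a core for the generator of the limiting semigroup --- as the genuine delicate point; the paper handles it only implicitly (by positing the contraction semigroup $e^{t(\mathrm{i}A+B)}$ with generator the closure of $\mathrm{i}A+B$ on that intersection), and your observation that it is automatic in the application of \cref{lemma:limit_trotter}, where $B$ is bounded so that $D(\mathrm{i}A+B)=D(A)$, is correct and is in fact how the lemma is used.
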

\begin{proof}
    The claim is immediate for $t=0$, and we prove the claim for $t>0$; it is straightforward to obtain the claim for $t<0$ using the truth of the claim for $t>0$. Let $D=\mathscr{D}(A-\mathrm{i}B)=\mathscr{D}(A) \cap \mathscr{D}(B)$. Because $A+ \mathrm{i}B$ is densely defined, so $A-\mathrm{i}B = (A+\mathrm{i}B)^\dagger$ is closed, and thus the linear space $D$ with the norm $\|\phi\|_{A-\mathrm{i}B}=\|\phi\|+\|(A-\mathrm{i}B) \phi\|$ is a Banach space. Because $D$ is a Banach space, the uniform boundedness principle tells us that if $\Gamma$ is a collection of bounded linear maps $D \rightarrow H$ and if for each $\phi \in D$ the set $\{\gamma \phi: \gamma \in \Gamma\}$ is bounded in $H$, then the set $\{\|\gamma\|: \gamma \in \Gamma\}$ is bounded, i.e. there is some $C$ such that $\|\gamma \phi\| \leq C\|\phi\|_{A-\mathrm{i}B}$ for all $\gamma \in \Gamma$ and all $\phi \in D$.

For $s \in \mathbb{R}$, let $S_s=e^{\mathrm{i} s(A-\mathrm{i}B)}, V_s=e^{\mathrm{i} s A}, W_s=e^{ s B}, U_s=V_s W_s$, which each belong to $\mathscr{B}(H)$. For $n \geq 1$,
$$
\sum_{j=0}^{n-1} U_{t / n}^{j}\left(S_{t / n}-U_{t / n}\right) S_{t / n}^{n-j-1}=S_{t / n}^n-U_{t / n}^n=S_t-U_{t / n}^n, 
$$
 because semigroups $e^{tB}$ and $e^{t(\mathrm{i}A + B)}$ are contraction and $e^{\mathrm{i}tA}$ is unitary and also using the fact that $S_{t / n}^{n-j-1}=S_{t(1-\frac{j+1}{n})}$,
for $\xi \in H$ we have
$$
\begin{aligned}
\left\|\left(S_t-U_{t / n}^n\right) \xi\right\| & =\left\|\sum_{j=0}^{n-1} U_{t / n}^j\left(S_{t / n}-U_{t / n}\right) S_{t / n}^{n-j-1} \xi\right\| \\
& \leq \sum_{j=0}^{n-1}\left\|\left(S_{t / n}-U_{t / n}\right) S_{t / n}^{n-j-1} \xi\right\| \\
& =\sum_{j=0}^{n-1}\left\|\left(S_{t / n}-U_{t / n}\right) S_{t(1-\frac{j+1}{n})} \xi\right\| \\
& \leq \sum_{j=0}^{n-1} \sup _{0 \leq s \leq t}\left\|\left(S_{t / n}-U_{t / n}\right) S_s \xi\right\|
\end{aligned}
$$

That is,
\begin{equation}\label{APD_Lemma:0}
    \left\|\left(S_t-U_{t / n}^n\right) \xi\right\| \leq n \sup _{0 \leq s \leq t}\left\|\left(S_{t / n}-U_{t / n}\right) S_s \xi\right\|, \quad \xi \in H, \quad n \geq 1
\end{equation}

Let $\phi \in D$. On the one hand, because $\mathrm{i}(A-\mathrm{i}B)$ is the infinitesimal generator of $\left\{S_s: s \in \mathbb{R}\right\}$, we have
\begin{equation}\label{APD_Lemma:1}
    \frac{S_s-I}{s} \phi \rightarrow \mathrm{i}(A-\mathrm{i}B) \phi, \quad s \downarrow 0 .
\end{equation}

On the other hand, for $s \neq 0$ we have, because an infinitesimal generator of a one-parameter group commutes with each element of the one-parameter group,
$$
V_s(B \phi)+V_s\left(\frac{W_s-I}{s}-B\right) \phi+\frac{V_s-I}{s} \phi=\frac{U_s-I}{s} \phi
$$
and as $V_s$ converges strongly to $I$ as $s \downarrow 0$ and as $B$ is the infinitesimal generator of the one-parameter group $\left\{W_s: s \in \mathbb{R}\right\}$ and $i A$ is the infinitesimal generator of the one-parameter group $\left\{V_s: s \in \mathbb{R}\right\}$,
$$
V_s(B \phi)+V_s\left(\frac{W_s-I}{s}-B\right) \phi+\frac{V_s-I}{s} \phi \rightarrow B \phi+\mathrm{i} A \phi
$$
as $s \downarrow 0$, i.e.
\begin{equation}\label{APD_Lemma:2}
    \frac{U_s-I}{s} \phi \rightarrow \mathrm{i}(A-\mathrm{i}B) \phi, \quad s \downarrow 0 .
\end{equation}

Using \cref{APD_Lemma:1} and \cref{APD_Lemma:2}, we get that for each $\phi \in D$,
$$
\frac{S_s-U_s}{s} \phi \rightarrow 0, \quad s \downarrow 0 .
$$

Therefore, for each $\phi \in D$, with $s=t / n$ we have
$$
\frac{n}{t}\left(S_{t / n}-U_{t / n}\right) \phi \rightarrow 0, \quad n \rightarrow \infty
$$
equivalently ( $t$ is fixed for this whole theorem),

\begin{equation}\label{APD_Lemma:3}
    \lim _{n \rightarrow \infty}\left\|n\left(S_{t / n}-U_{t / n}\right) \phi\right\|=0, \quad \phi \in D
\end{equation}

For each $n \geq 1$, define $\gamma_n: D \rightarrow H$ by $\gamma_n=n\left(S_{t / n}-U_{t / n}\right)$. Each $\gamma_n$ is a linear map, and for $\phi \in D$,
$$
\left\|\gamma_n \phi\right\| \leq n\left\|S_{t / n} \phi\right\|+n\left\|U_{t / n} \phi\right\| \leq n\|\phi\|+n\|\phi\| \leq 2 n\|\phi\|_{A-iB},
$$
showing that each $\gamma_n$ is a bounded linear map $D \rightarrow H$, where $D$ is a Banach space with the norm $\|\phi\|_{A-\mathrm{i}B}=\|\phi\|+\|(A-\mathrm{i}B) \phi\|$. Moreover, \cref{APD_Lemma:3} shows that for each $\phi \in D$, there is some $C_\phi$ such that
$$
\left\|\gamma_n \phi\right\| \leq C_\phi, \quad n \geq 1 .
$$

Then applying the uniform boundedness principle, we get that there is some $C>0$ such that for all $n \geq 1$ and for all $\phi \in D$,
$$
\left\|\gamma_n \phi\right\| \leq C\|\phi\|_{A-\mathrm{i}B},
$$
i.e.
\begin{equation}\label{APD_Lemma:4}
    \left\|n\left(S_{t / n}-U_{t / n}\right) \phi\right\| \leq C\|\phi\|_{A-\mathrm{i}B}, \quad n \geq 1, \quad \phi \in D .
\end{equation}

Let $K$ be a compact subset of $D$, where $D$ is a Banach space with the norm $\|\phi\|_{A-\mathrm{i}B}=\|\phi\|+\|(A-\mathrm{i}B) \phi\|$. Then $K$ is totally bounded, so for any $\epsilon>0$, there are $\phi_1, \ldots, \phi_M \in K$ such that $K \subset \bigcup_{m=1}^M B_{\epsilon / C}\left(\phi_m\right)$. By \cref{APD_Lemma:3}, for each $m$, $1 \leq m \leq M$, there is some $n_m$ such that when $n \geq n_m$,
$$
\left\|n\left(S_{t / n}-U_{t / n}\right) \phi_m\right\| \leq \epsilon
$$

Let $N=\max \left\{n_1, \ldots, n_M\right\}$. For $n \geq N$ and for $\phi \in K$, there is some $m$ for which $\left\|\phi-\phi_m\right\|_{A-\mathrm{i}B}<\frac{\epsilon}{C}$, and using \cref{APD_Lemma:4}, as $\phi-\phi_m \in K$, we get
$$
\begin{aligned}
\left\|n\left(S_{t / n}-U_{t / n}\right) \phi\right\| & \leq\left\|n\left(S_{t / n}-U_{t / n}\right)\left(\phi-\phi_m\right)\right\|+\left\|n\left(S_{t / n}-U_{t / n}\right) \phi_m\right\| \\
& \leq C\left\|\phi-\phi_m\right\|_{A-\mathrm{i}B}+\epsilon \\
& <\epsilon+\epsilon
\end{aligned}
$$

This shows that any compact subset $K$ of $D$ and $\epsilon>0$, there is some $n_\epsilon$ such that if $n \geq n_\epsilon$ and $\phi \in K$, then
\begin{equation}\label{APD_Lemma:5}
    \left\|n\left(S_{t / n}-U_{t / n}\right) \phi\right\|<\epsilon
\end{equation}

Let $\phi \in D$, let $s_0 \in \mathbb{R}$, and let $\epsilon>0$. Because $s \mapsto S_s$ is strongly continuous $\mathbb{R} \rightarrow \mathscr{B}(H)$, there is some $\delta_1>0$ such that when $\left|s-s_0\right|<\delta_1$, $\left\|S_s \phi-S_{s_0} \phi\right\|<\epsilon$, and there is some $\delta_2>0$ such that when $\left|s-s_0\right|<\delta_2$, $\left\|S_s(A-\mathrm{i}B) \phi-S_{s_0}(A-\mathrm{i}B) \phi\right\|<\epsilon$, and hence with $\delta=\min \left\{\delta_1, \delta_2\right\}$, when $\left|s-s_0\right|<\delta$ we have
$$
\begin{aligned}
\left\|S_s \phi-S_{s_0} \phi\right\|_{A-\mathrm{i}B} & =\left\|S_s \phi-S_{s_0} \phi\right\|+\left\|(A-\mathrm{i}B)\left(S_s \phi-S_{s_0} \phi\right)\right\| \\
& \left.=\left\|S_s \phi-S_{s_0} \phi\right\|+\| S_s(A-\mathrm{i}B) \phi-S_{s_0}(A-\mathrm{i}B) \phi\right) \| \\
& <\epsilon+\epsilon
\end{aligned}
$$
showing that $s \mapsto S_s \phi$ is continuous $\mathbb{R} \rightarrow D$. Therefore $\left\{S_s \phi: 0 \leq s \leq t\right\}$ is a compact subset of $D$, so applying \cref{APD_Lemma:5} we get that for any $\epsilon>0$, there is some $n_\epsilon$ such that if $n \geq n_\epsilon$ and $0 \leq s \leq t$, then
$$
\left\|n\left(S_{t / n}-U_{t / n}\right) S_s \phi\right\|<\epsilon,
$$
and therefore if $n \geq n_\epsilon$ then
\begin{equation}\label{APD_Lemma:6}
    \sup _{0 \leq s \leq t}\left\|n\left(S_{t / n}-U_{t / n}\right) S_s \phi\right\| \leq \epsilon
\end{equation}

Finally, let $\epsilon>0$. Since $D$ is dense in $H$, so for all $\psi\in H$, there is some $\phi \in D$ such that $\|\phi-\psi\|<\epsilon$. For $n \geq 1$,
$$
\begin{aligned}
\left\|\left(S_t-U_{t / n}^n\right) \psi\right\| & \leq\left\|\left(S_t-U_{t / n}^n\right)(\psi-\phi)\right\|+\left\|\left(S_t-U_{t / n}^n\right) \phi\right\| \\
& \leq 2\|\psi-\phi\|+\left\|\left(S_t-U_{t / n}^n\right) \phi\right\| \\
& <\epsilon+\left\|\left(S_t-U_{t / n}^n\right) \phi\right\|
\end{aligned}
$$

Using \cref{APD_Lemma:0} with $\xi=\phi$ and then using \cref{APD_Lemma:6}, there is some $n_\epsilon$ such that when $n \geq n_\epsilon$,
$$
\left\|\left(S_t-U_{t / n}^n\right) \phi\right\| \leq n \sup _{0 \leq s \leq t}\left\|\left(S_{t / n}-U_{t / n}\right) S_s \phi\right\| \leq \epsilon .
$$

Therefore for $n \geq n_\epsilon$,
$$
\left\|\left(S_t-U_{t / n}^n\right) \psi\right\|<2 \epsilon,
$$
proving the claim.
\end{proof}

\begin{lemma}\label{lemma_main}
     Let $f(z)$ be a function satisfying the conditions in \cref{main_thm}, $Y(t)=\int_0^t L(s)\mathrm{~d}s$ and $X(t)=-\mathrm{i}\int_0^t H(s)\mathrm{~d}s$. Suppose $L(t) \succeq \lambda_0>0$ for a positive number $\lambda_0$ and all $t \in[0, T]$, and operator $-z L(s)+\mathrm{i} H(s)$ is densely defined. If operators $L(t)$ and $H(t)$ satisfy one of the two conditions in \cref{main_thm}, then
\begin{equation}
    \mathcal{P} \int_{\mathbb{R}} f(k) \mathcal{T} e^{-\mathrm{i} \int_0^t(k L(s)+H(s)) \mathrm{d} s} \mathrm{~d} k=0
\end{equation}
Here $\mathcal{P}$ denotes the Cauchy principal value.
\end{lemma}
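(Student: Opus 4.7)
The plan is to recognize the integrand as the restriction to the imaginary axis of the operator-valued function $\mathbf F(z)=f(-\mathrm iz)\,e^{\int_0^t(-zL(s)-\mathrm iH(s))\,\mathrm ds}$ introduced in \cref{lemma_limit}, close the contour in the right half-plane $\{\operatorname{Re}(z)\geq 0\}$, and invoke the operator-valued Cauchy theorem (\cref{Cauchy's integral theorem}).

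First I would substitute $z=\mathrm ik$, so that $-\mathrm i(kL(s)+H(s))=-zL(s)-\mathrm iH(s)$ and $\mathrm dk=-\mathrm i\,\mathrm dz$. Under this change of variable the real line maps to the imaginary axis and the quantity to vanish becomes $-\mathrm i\,\mathcal P\int_{-\mathrm i\infty}^{\mathrm i\infty}\mathbf F(z)\,\mathrm dz$. For each $R>0$ I then close this segment with the right semicircle $\Gamma_R=\{Re^{\mathrm i\theta}:\theta\in[-\pi/2,\pi/2]\}$. Under condition \ref{condition1} the analyticity of $\mathbf F$ on the open right half-plane is the content of \cref{lemma_limit}, and under condition \ref{condition2} it is \cref{lemma:limit_trotter}; in either case, applied weakly to $\langle\mathbf F(z)\psi_0,v\rangle$ for $\psi_0\in X_0$ and $v\in X$, \cref{Cauchy's integral theorem} gives $\oint\mathbf F(z)\,\mathrm dz=0$ around the closed contour.

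The remaining step is to show that the semicircular contribution tends to zero as $R\to\infty$. By the decay property of \cref{assum: func}, $|f(-\mathrm iz)|\leq \widetilde C|z|^{-\alpha}$ whenever $\operatorname{Re}(z)\geq 0$. For the operator factor I would use that the Hermitian part of the generator $-zL(s)-\mathrm iH(s)$ equals $-\operatorname{Re}(z)L(s)\preceq -\operatorname{Re}(z)\lambda_0\,\mathrm I$, so that the Hille--Yosida/Lumer--Phillips resolvent estimates already underlying \cref{lemma:limit_trotter} deliver the contraction bound
\begin{equation*}
    \left\|\mathcal Te^{\int_0^t(-zL(s)-\mathrm iH(s))\,\mathrm ds}\right\|\leq e^{-\operatorname{Re}(z)\lambda_0 t},\qquad \operatorname{Re}(z)\geq 0.
\end{equation*}
Writing $z=Re^{\mathrm i\theta}$ on $\Gamma_R$ and using Jordan's inequality $\cos\theta\geq 1-2|\theta|/\pi$,
\begin{equation*}
    \left\|\int_{\Gamma_R}\mathbf F(z)\,\mathrm dz\right\|\leq \widetilde C\,R^{1-\alpha}\int_{-\pi/2}^{\pi/2}e^{-\lambda_0 tR\cos\theta}\,\mathrm d\theta\leq \frac{\pi\widetilde C}{\lambda_0 t\,R^\alpha}\xrightarrow{R\to\infty}0.
\end{equation*}
Letting $R\to\infty$ in Cauchy's identity then gives $\mathcal P\int_{-\mathrm i\infty}^{\mathrm i\infty}\mathbf F(z)\,\mathrm dz=0$, which after undoing the substitution is the stated lemma; the principal value is needed only because the polynomial decay of $f$ does not guarantee absolute convergence on $\mathbb R$.

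The main obstacle I expect is the rigorous justification of the contraction bound on the time-ordered exponential for complex $z$ and unbounded $L,H$. Under condition \ref{condition2} this follows by essentially the same Hille--Yosida argument used for \eqref{contra_1}--\eqref{contra_2} once one absorbs the additional dissipation $-\operatorname{Re}(z)L(s)\preceq-\operatorname{Re}(z)\lambda_0\,\mathrm I$; under condition \ref{condition1} the analogous estimate should be read off the BCH-factored propagator constructed in \cref{lemma_limit}. A secondary care point is that $\mathbf F(z)\psi_0$ is a priori only defined on $X_0$, so all contour manipulations must be interpreted weakly via \cref{Cauchy's integral theorem}(a) and the operator identity recovered at the end from the density of $X_0$ in $X$ and arbitrariness of $v$.
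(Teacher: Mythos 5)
Your proposal is correct and follows essentially the same route as the paper: substitute $z=\mathrm{i}k$ to move to the imaginary axis, close the contour with the right semicircle, invoke the operator-valued Cauchy theorem via \cref{lemma_limit} or \cref{lemma:limit_trotter}, and kill the arc using the decay of $f$ together with the contraction bound $\|\mathcal{T}e^{\int_0^t(-zL(s)-\mathrm{i}H(s))\,\mathrm{d}s}\|\le e^{-\operatorname{Re}(z)\lambda_0 t}$ obtained from the Trotter/Hille--Yosida machinery. The only (cosmetic) difference is that you estimate the arc integral in one stroke via Jordan's inequality, getting $O(R^{-\alpha})$, whereas the paper splits the arc into a central region $I$ and end caps $J$ of width $\theta_0=\min\{R^{-(1-\alpha/2)},\pi/4\}$ and bounds each separately; your version is slightly cleaner but not a different argument, and you correctly flag the same weak point the paper glosses over, namely justifying the contraction bound under condition \ref{condition1} where the Trotter decomposition is not directly available.
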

\begin{proof}
    Using the substitution $\omega=\mathrm{i} k$, we have
\begin{align}\label{int_trans}
    \mathcal{P} \int_{\mathbb{R}} f(k) \mathcal{T} e^{-\mathrm{i} \int_0^t(k L(s)+H(s)) \mathrm{d} s} \mathrm{~d} k=-\mathrm{i} \lim _{R \rightarrow \infty} \int_{-\mathrm{i} R}^{\mathrm{i} R} f(-\mathrm{i} \omega) \mathcal{T} e^{\int_0^t(-\omega L(s)-\mathrm{i} H(s)) \mathrm{d} s} \mathrm{~d} \omega
\end{align}

Let us choose a path $\gamma_C=\left\{\omega=R e^{\mathrm{i} \theta}: \theta \in[-\pi / 2, \pi / 2]\right\}$ and consider the contour $[-\mathrm{i} R, \mathrm{i} R] \cup \gamma_C$. According to the analyticity condition, $f(-\mathrm{i} \omega)$ is analytic on the right half plane $\{\omega: \operatorname{Re}(\omega)>0\}$ and continuous on the right half plane plus the imaginary axis. Since operators $X(t)$ and $Y(t)$ satisfies one of the following two conditions, Cauchy's integral theorem holds by \cref{lemma_limit} or \cref{lemma:limit_trotter}, we have
\begin{equation}\label{fomula: analytic function}
    \int_{[-\mathrm{i} R, \mathrm{i} R] \cup \gamma_C} f(-\mathrm{i} \omega) \mathcal{T} e^{\int_0^t(-\omega L(s)-\mathrm{i} H(s)) \mathrm{d} s} \mathrm{~d} \omega=0
\end{equation}

Plugging this back into \cref{int_trans}, we obtain
\begin{align*}
    \mathcal{P} \int_{\mathbb{R}} f(k) \mathcal{T} e^{-\mathrm{i} \int_0^t(k L(s)+H(s)) \mathrm{d} s} \mathrm{~d} k= & \mathrm{i} \lim _{R \rightarrow \infty} \int_{\gamma_C} f(-\mathrm{i} \omega) \mathcal{T} e^{\int_0^t(-\omega L(s)-\mathrm{i} H(s)) \mathrm{d} s} \mathrm{~d} \omega
\end{align*}

Changing the variable from $\omega$ to $\theta$ gives

\begin{align}\label{int_lemma_fomula4}
\mathcal{P} \int_{\mathbb{R}} f(k) \mathcal{T} e^{-\mathrm{i} \int_0^t(k L(s)+H(s)) \mathrm{d} s} \mathrm{~d} k \notag=& \lim _{R \rightarrow \infty} \int_{-\pi / 2}^{\pi / 2} R e^{\mathrm{i} \theta} f\left(-\mathrm{i} R e^{\mathrm{i} \theta}\right) \mathcal{T} e^{\int_0^t\left(-R e^{\mathrm{i} \theta} L(s)-\mathrm{i} H(s)\right) \mathrm{d} s} \mathrm{~d} \theta \notag\\
=& \lim _{R \rightarrow \infty} \int_{I \cup J} R e^{\mathrm{i} \theta} f\left(-\mathrm{i} R e^{\mathrm{i} \theta}\right) \mathcal{T} e^{\int_0^t\left(-R e^{\mathrm{i} \theta} L(s)-\mathrm{i} H(s)\right) \mathrm{d} s} \mathrm{~d} \theta
\end{align}

where $I=\left[-\pi / 2+\theta_0, \pi / 2-\theta_0\right], J=[-\pi / 2, \pi / 2] \backslash I$, and
$$
\theta_0=\min \left\{\frac{1}{R^{1-\alpha / 2}}, \frac{\pi}{4}\right\}
$$

Here $\alpha$ is the parameter in the decay condition. By the spectral theorem of the adjoint operators~\cite{Hall2013QuantumTF} and Trotter decomposition in \cref{lemma:Trotter} which holds for the same derivation in \cref{lemma:limit_trotter},  we can bound the integral over $I$ by the inequality
\begin{align}
    \left\|\mathcal{T} e^{\int_0^t\left(-R e^{\mathrm{i} \theta} L(s)-\mathrm{i} H(s)\right) \mathrm{d} s}\right\| = \notag &\left\|\mathcal{T} \lim_{n\to \infty}( e^\frac{-R\cos  \theta\int_0^t L(s)\mathrm{d} s}{n}e^\frac{-\mathrm{i}\int_0^t\left(R \sin  \theta  L(s)+ H(s)\right) \mathrm{d} s}{n})^n\right\| \\\notag
    \leq & \lim_{n\to \infty}\| e^\frac{-R\cos  \theta\int_0^t L(s)\mathrm{d} s}{n}\|^n \\\notag
 \leq & e^{-t \lambda_0 R \cos \theta} \\
 \leq & e^{-\frac{2}{\pi} t \lambda_0 R \theta_0}
\end{align}

where in the last inequality we use $\cos \theta \geq \cos \left(\pi / 2-\theta_0\right)=\sin \theta_0 \geq(2 / \pi) \theta_0$. 
Then
\begin{align}\label{int_lemma_fomula2}
    \left\|\int_I R e^{\mathrm{i} \theta} f\left(-\mathrm{i} R e^{\mathrm{i} \theta}\right) \mathcal{T} e^{\int_0^t\left(-R e^{\mathrm{i} \theta} L(s)-\mathrm{i} H(s)\right) \mathrm{d} s} \mathrm{~d} \theta\right\| \leq  \int_I R\left|f\left(-\mathrm{i} R e^{\mathrm{i} \theta}\right)\right| e^{-\frac{2}{\pi} t \lambda_0 R \theta_0} \mathrm{~d} \theta
\end{align}

For sufficiently large $R$, using the decay condition and noticing that $\left(-\mathrm{i} R e^{\mathrm{i} \theta}\right)$ is in the lower half plane for $\theta \in[-\pi / 2, \pi / 2]$, we have
\begin{align}\label{int_lemma_fomula3}
    \left|f\left(-\mathrm{i} R e^{\mathrm{i} \theta}\right)\right| \leq \frac{\widetilde{C}}{R^\alpha}
\end{align}

Then \cref{int_lemma_fomula2} becomes
\begin{align*}
    \left\|\int_I R e^{\mathrm{i} \theta} f\left(-\mathrm{i} R e^{\mathrm{i} \theta}\right) \mathcal{T} e^{\int_0^t\left(-R e^{\mathrm{i} \theta} L(s)-\mathrm{i} H(s)\right) \mathrm{d} s} \mathrm{~d} \theta\right\| 
    \leq \pi \widetilde{C} R^{1-\alpha} e^{-\frac{2}{\pi} t \lambda_0 R \theta_0} 
    \leq  \pi \widetilde{C} R^{1-\alpha} \max \left\{e^{-\frac{2}{\pi} t \lambda_0 R^{\alpha / 2}}, e^{-\frac{1}{2} t \lambda_0 R}\right\}
\end{align*}

which vanishes as $R \rightarrow \infty$. For the integral over $J$, we may again use the bound of $f$ in \cref{int_lemma_fomula3} and $\left\|\mathcal{T} e^{\int_0^t\left(-R e^{\mathrm{i} \theta} L(s)-\mathrm{i} H(s)\right) \mathrm{d} s}\right\| \leq e^{-t \lambda_0 R \cos \theta} \leq 1$ for any $\theta \in[-\pi / 2, \pi / 2]$. Then we obtain
\begin{align*}
     \left\|\int_J R e^{\mathrm{i} \theta} f\left(-\mathrm{i} R e^{\mathrm{i} \theta}\right) \mathcal{T} e^{\int_0^t\left(-R e^{\mathrm{i} \theta} L(s)-\mathrm{i} H(s)\right) \mathrm{d} s} \mathrm{~d} \theta\right\| 
    \leq \int_J R\left|f\left(-\mathrm{i} R e^{\mathrm{i} \theta}\right)\right| \mathrm{d} \theta 
    \leq \widetilde{C} R^{1-\alpha} \cdot 2 \theta_0 
    \leq  2 \widetilde{C} R^{-\alpha / 2}
\end{align*}

which also vanishes as $R \rightarrow \infty$. Therefore, \cref{int_lemma_fomula4} becomes
$$
\mathcal{P} \int_{\mathbb{R}} f(k) \mathcal{T} e^{-\mathrm{i} \int_0^t(k L(s)+H(s)) \mathrm{d} s} \mathrm{~d} k=0
$$
\end{proof}

\end{document}